\def\matches{\rightarrowtriangle}
\def\nat{\mathbb{N}}
\newcommand{\return}[1]{\widehat{#1}}
\newcommand{\rete}{\return{e}}
\newcommand{\refeq}[1]{(\ref{#1})}
\newcommand{\refprop}[1]{Proposition~\ref{#1}}
\newcommand{\reffig}[1]{Fig.~\ref{#1}}
\newcommand{\reflem}[1]{Lem\-ma~\ref{#1}}
\newcommand{\refsec}[1]{Section~\ref{#1}}
\newcommand{\refex}[1]{Example~\ref{#1}}
\newcommand{\refdef}[1]{Definition~\ref{#1}}
\newcommand{\reftab}[1]{Table~\ref{#1}}
\newcommand{\refasm}[1]{Assumption~\ref{#1}}
\newcommand{\refalg}[1]{Algorithm~\ref{#1}}
\newcommand{\trs}{{\mathcal L_R}}
\newcommand{\trd}{{\mathcal L_U}}
\newcommand{\ete}[1]{\xi(#1)}
\newcommand{\prei}[2]{#1 \lhd #2}
\newcommand{\preii}[3]{#1 \lhd #2 \rhd #3}
\newcommand{\trdi}[2]{{\sf Z}_{#2}(#1)}
\newcommand{\trdii}[3]{{\sf Z}_{#2}(#1,#3)}
\newcommand{\trsi}[2]{{\sf Y}_{#2}(#1)}
\newcommand{\esti}[4]{#1
  \stackrel{#3}{\longrightarrow}_{#2} #4} 
\newcommand{\estii}[4]{#1
  \stackrel{#3}{\Longrightarrow}_{#2} #4} 
\newcommand{\cl}{{\mathcal L}_U}
\newcommand{\clI}{{\mathcal L}_I}
\newcommand{\cls}{{\mathcal L}_R}
\newcommand{\pdelta}{{\mathcal Q}_{\delta}}
\newcommand{\sdelta}{{\mathcal S}_{\delta}}
\newcommand{\pcpi}[3]{#1_{#2}^{#3}}
\newcommand{\pcpr}[3]{\widehat{#1}{\,}_{#2}^{#3}}
\newcommand{\calA}{\mathcal A}
\newcommand{\calB}{\mathcal B}
\newcommand{\calM}{\mathcal M}
\newcommand{\calS}{\mathcal S}
\newcommand{\calQ}{\mathcal Q}
\newcommand{\machine}[1]{{\mathcal M} [#1]}
\newcommand{\machinea}[1]{{\mathcal M}_U [#1]}
\newcommand{\machinex}[1]{{\mathcal M'} [#1]}
\newcommand{\machineb}[1]{{\mathcal M}_R [#1]}
\newcommand{\machinexx}[1]{{\mathcal M''} [#1]}
\newcommand{\machinec}[1]{{\mathcal M}_V [#1]}
\newcommand{\machinepcp}[1]{{\mathcal Q}  [#1]}
\newcommand{\qsc}{quiescent sequential consistency\xspace}
\newcommand{\Qsc}{Quiescent sequential consistency\xspace}
\newcommand{\qc}{quiescent consistency\xspace}
\newcommand{\Qc}{Quiescent consistency\xspace}
\newcommand{\qct}{quiescent consistent\xspace}
\newcommand{\trace}{run\xspace}
\newcommand{\traces}{runs\xspace}
\newcommand{\Trace}{Run\xspace}
\newcommand{\parikh}[1]{PI(#1)}
\newcommand{\robcomment}[1]{{\large\bf Comment: #1}}
\newcommand{\bdcomment}[1]{{\color{blue} \large BD: #1}}
\newtheorem{theorem}{Theorem}
\newtheorem{lemma}{Lemma}
\newtheorem{proposition}{Proposition}
\newtheorem{definition}{Definition}
\newtheorem{assumption}{Assumption}
\newtheorem{example}{Example}
\title{Decidability and Complexity for \\Quiescent Consistency and its
  Variations}
\author{Brijesh Dongol \qquad Robert M. Hierons\\[2mm]
  Department of Computer Science, \\
  Brunel University London,
  UK\\
  \texttt{firstname.lastname@brunel.ac.uk}
}
\date{}
\begin{document}

\maketitle

\begin{abstract}
Quiescent consistency is a notion of correctness for a concurrent
object that gives meaning to the object’s behaviours in quiescent
states, i.e., states in which none of the object’s operations are
being executed. Correctness of an implementation object is defined in
terms of a corresponding abstract specification. This gives rise to
two important verification questions: \emph{membership} (checking whether a
behaviour of the implementation is allowed by the specification) and
\emph{correctness} (checking whether all behaviours of the implementation are
allowed by the specification). In this paper, we show that the
membership problem for quiescent consistency is NP-complete and that
the correctness problem is decidable, but coNP-hard and in
EXPSPACE. For both problems, we consider restricted versions of
quiescent consistency by assuming an upper limit on the number of
events between two quiescent points. Here, we show that the membership
problem is in PTIME, whereas correctness is in PSPACE. 

Quiescent consistency does not guarantee sequential consistency, i.e.,
it allows operation calls by the same process to be reordered when
mapping to an abstract specification. Therefore, we also consider
quiescent sequential consistency, which strengthens quiescent
consistency with an additional sequential consistency condition. We
show that the unrestricted versions of membership and correctness are
NP-complete and undecidable, respectively. When by placing a limit on
the number of events between two quiescent points, membership is in
PTIME, while correctness is in PSPACE. Finally, we consider a version
of quiescent sequential consistency that places an upper limit on the
number of processes for every run of the implementation, and show that
the membership problem for quiescent sequential consistency with this
restriction is in PTIME. 

\end{abstract}

\section{Introduction}

Due to the possibility of parallel executions, correctness of an
operation of a concurrent object cannot be stated in terms of pre/post
conditions. Instead, correctness is expressed in terms of a history of
operation invocation/response events, capturing the interaction
between a concurrent object and its client. There are many notions of
correctness for safety \cite{HeSh08,DongolDGS15} --- relaxed notions
are more permissive, and hence, allow greater flexibility in an
object's design. Such flexibility is necessary in the presence of
observations such as Amdahl's Law and Gustafson's Law
\cite{Gustafson88}, which show that sequential bottlenecks within an
implementation must be reduced to improve performance \cite{Shavit11}.

This paper studies \qc \cite{Shavit11,DerrickDSTTW14} a relaxed notion of
correctness for concurrent objects, derived from a similar notion in
replicated databases \cite{Ellis77}, that gives meaning to an object
in its \emph{quiescent states}, i.e., states in which none of its
operations are currently executing. Here, correctness is defined by
mapping a concurrent object's history (with potentially overlapping
operation calls) to a sequential history of its corresponding
specification object (with no overlapping operation calls).
\begin{enumerate}
\item A history of a concurrent object is considered to be correct
  with respect to a correctness condition $C$ iff the history can be
  mapped to a valid (sequential) history of the object's specification
  and the mapping satisfies $C$.
\item A concurrent object satisfies $C$ iff each of its histories is
  correct with respect to $C$.
\end{enumerate}
These two issues give rise to two distinct verification problems: the
former gives rise to a \emph{membership} problem, and the latter a
\emph{correctness} problem.  In this paper, we extend the existing
approach of Alur et al. \cite{AlurMP00} and study the
decidability and complexity of both membership and correctness for two correctness conditions: \qc and \qsc.  

Informally speaking, \qc is defined as follows. A concurrent object is
said to be in a quiescent state if none of its operations are being
executed in that state. \Qc allows operations calls in a concurrent
history between two consecutive quiescent states to be reordered when
mapping to history of the sequential specification, but disallows
events that are separated by a quiescent state from being reordered
\cite{DongolDGS15,DerrickDSTTW14,Shavit11}. Compared to other
conditions in the literature, \qc is more permissive. For example,
unlike linearizability \cite{HeWi90,HeSh08,DongolDGS15}, it allows the
effects of operation calls to be reordered even if they do not overlap
in a concurrent history. Unlike sequential consistency
\cite{Lamport79,HeSh08,DongolDGS15}, it allows the effects of
operation calls by the same process to be reordered.

In the context of client-object systems, to guarantee observational
refinement \cite{HeHS86} of the client, it turns out that it is
necessary ensure process order is maintained
\cite{Filipovic-LinvsRef2010}. Therefore, we additionally consider
\qsc, a variation of \qc that adds a \emph{sequential consistency}
constraint \cite{Lamport79} to \qc, i.e., we are not allowed to
reorder the events of the same process.

In this paper, we make the following main contributions.
\begin{enumerate}
\item We describe how \qc can be expressed using independence from
  Mazurkiewicz Trace Theory \cite{Mazurkiewicz84} and encoded as
  finite automata.
\item Show that deciding membership for \qc is an NP-complete problem
  if the number of events between two quiescent states is unbounded,
  but deciding membership is polynomial (with respect to the size of
  the input \trace) if the number of events between two quiescent
  states has a fixed upper bound.
\item Show that correctness for \qc is decidable, coNP-hard, and in
  EXPSPACE, but correctness for \qc is in PSPACE if the number of
  events between two quiescent states has a fixed upper bound.
\item Show that deciding membership for \qsc is an NP-complete problem
  but can be solved in polynomial time (with respect to the size of
  the input \trace) if the number of events between two quiescent
  states has a fixed upper limit, or if the number of processes can be
  predetermined. 
\item Show that correctness for \qsc is undecidable but is in PSPACE
  if the number of events between two quiescent states has a fixed
  upper bound.
\end{enumerate}

This paper is organised as follows. In \refsec{sec:background}, we
motivate the problem through an example, and describe the formal background of
finite automata and independence used in the rest of the
paper. \refsec{sec:quiesc-cons-1} defines \qc, develops a finite
automata encoding of \qc as well as the membership and correctness
problems. Our results for the membership and correctness problems are
given in \refsec{sec:membership-problem} and
\refsec{sec:correctness-problem}, respectively. \refsec{sec:qsc}
describes \qsc, then Sections~\ref{sec:membership-qsc} and
\ref{sec:correctness-qsc} present the results for membership and
correctness for \qsc, respectively. A survey of related work and
concluding remarks is given in \refsec{sec:conclusions}.


\section{Background}
\label{sec:background}
This section motivates \qc with a queue example
(\refsec{sec:quiesc-cons-queue}), then gives a finite automata
formalisation for studying the problem (\refsec{sec:probl-repr}). We
will use a notion of independence from Mazurkiewicz Trace Theory,
which we describe in \refsec{sec:independence}.

\subsection{A quiescent consistent queue}
\label{sec:quiesc-cons-queue}

We consider the \qct queue from \cite{DerrickDSTTW14}
(see Figs.~\ref{fig:diffqueue} and \ref{fig:diffqueue-code}). The
queue is based on the architecture of {\em diffracting trees}, which
uses the following principle (adapted from counting networks
\cite{AspnesHS1994}). Elements called {\em balancers} are arranged in
a binary tree, which may have arbitrary depth. Each balancer contains
one bit, which determines the direction in which the tree is
traversed; a balancer value of 0 causes a traversal up and a value 1
causes a traversal down.  The leaves of the tree point to a concurrent
data structure. Operations on the tree (and hence data structures)
start at the root of the tree and traverse the tree based on the
balancer values. Each traversal is coupled with a bit flip, so that
the next traversal occurs along the other branch. Upon reaching a
leaf, the process performs a corresponding operation on the data
structure at the leaf.

Our example consists of two 1-level balancers {\tt eb} and {\tt db}
used by enqueue and dequeue operations, respectively. Both operations
share the two queues at the leaves (see \reffig{fig:diffqueue}).
Pseudocode for the queue is given in \reffig{fig:diffqueue-code}. Both
operations are implemented using a non-blocking atomic {\tt CAS}
(Compare-And-Swap) operation that compares the stored local value
\texttt{old} with the shared variable \texttt{var} and updates \texttt{var}
to a new value \texttt{new} if the values of \texttt{var} and \texttt{old}
are still equal: \medskip

\noindent \hfill
\begin{minipage}[t]{0.9\columnwidth}
  \tt CAS(var,old,new) == 

  \qquad \ atomic\{ if var = old 

  \qquad \ \ \ \ \ \ \ \ \ then
  var := new; return true 
  
  \qquad \ \ \ \ \ \ \ \ \ else return false\}
\end{minipage}\hfill{}\medskip

\noindent Both operations read their corresponding bit and try to flip
it using a {\tt CAS}. If they succeed, they perform an {\it enqueue} {\tt
  Enq} or {\it dequeue} {\tt Deq} on the queue of their local bit. For
simplicity, we assume that {\tt Enq} and {\tt Deq} are atomic
operations (though they could be implemented by any linearizable
operation). The queue only satisfies \qc if {\tt Deq} is
\emph{blocking}, i.e., waits until an element is found in the
queue. The diffracting queue is not \qct if {\tt Deq}
returns on empty (see \cite{DerrickDSTTW14} for details).

\begin{figure}[t]
  \centering
  \begin{minipage}[b]{0.3\columnwidth}
    \centering
    \scalebox{0.7}{\input{diffqueue.pspdftex}}
    \caption{A 1-level diffracting queue with
      two queues}
    \label{fig:diffqueue}
  \end{minipage}
  \qquad\qquad 
  \begin{minipage}[b]{0.55\columnwidth}
    {\tt \footnotesize Init: eb, db = 0}

    \begin{minipage}[t]{0.49\columnwidth}
      \tt\footnotesize

        enqueue(el:T) 

        E1: do lb := eb; 

        E2: until CAS(eb,lb,1-lb) 

        E3: Enq(queue[lb],el) 
    \end{minipage}
    \hfill
    \begin{minipage}[t]{0.47\columnwidth}
      \footnotesize\tt
      
        dequeue

        D1: do lb := db;

        D2: until CAS(db,lb,1-lb)

        D3: return  Deq(queue[lb])
    \end{minipage}
    \caption{Enqueue and dequeue operations on the diffraction queue}
    \label{fig:diffqueue-code}
  \end{minipage}
\end{figure}

\begin{example}
  \label{ex:qch1}
  The following is a possible history for the blocking concurrent
  queue implementation: \smallskip


  \noindent \begin{small}
    \hfill$h_1 = D_1\ E_2(a)\ \return{E}_2\ E_3(b)\ \return{E}_3\ D_4\
    \return{D}_4(b)\ D_5\ \return{D}_5(a)\ E_6(c)\ \return{E}_6 \
    \return{D}_1(c)$\hfill{}
  \end{small}\smallskip

  \noindent 
  where $D_1$ denotes a {\tt dequeue} invocation by process $1$,
  $\return{D}_1(c)$ denotes a {\tt dequeue} by process $1$ that
  returns $c$, $E_2(a)$ denotes an {\tt enqueue} invocation by process
  $2$ with input $a$, and $\return{E}_2$ denotes the corresponding return
  event. 
  There is not much concurrency in $h_1$. Only the first
  dequeue is running concurrently with the rest of the
  operations. However, due to the first dequeue invocation, $h_1$ is
  only quiescent at the beginning and end.
  
  History $h_1$ is not linearizable \cite{HeWi90} because the dequeues
  by processes $4$ and $5$ violate the FIFO order of enqueues by
  processes $2$ and $3$, and linearizability does not allow
  non-overlapping operations to be reordered (see
  \cite{DerrickDSTTW14} for details). However, $h_1$ is \qct because
  \qc allows operations between two consecutive quiescent states to be
  reordered even if they do not overlap. This means that it may be
  matched with the following sequential history, which satisfies a
  specification of a sequential queue data structure. \smallskip
  
    \noindent\hfill$
    h_2 = E_3(b)\ \return{E}_3\ E_2(a)\ \return{E}_2\ D_4\
    \return{D}_4(b)\ D_5\ \return{D}_5(a)\ E_6(c)\ \return{E}_6 \ D_1\
    \return{D}_1(c) $\hfill{}
\qed
\end{example}

\subsection{Problem representation}
\label{sec:probl-repr}
In this section, we present our formal framework.
The behaviour of a system will be a sequence of events.
Given a set $A$ we will let $A^*$ denote the set of finite sequences
of elements of $A$ and $\varepsilon \notin A$ denote the empty
sequence.
Like Alur et
al. \cite{AlurMP00}, the specification and implementation are both
represented by finite automata, whose alphabet is a set of events
recording the invocation/response of an
operation. 

\begin{definition}
  A \emph{finite automaton} (FA) is a tuple $(M,m_0,\Sigma,t,M_\dagger)$ in
  which $M$ is the finite set of states, $m_0 \in M$ is the initial
  state, $\Sigma$ is the finite alphabet, $t: M \times \Sigma
  \leftrightarrow M$ is the transition relation and $M_\dagger \subseteq M$
  is the set of final states.
\end{definition}

\noindent 
Given a finite automaton $\calM = (M,m_0,\Sigma,t,M_\dagger)$, $m' \in
t(m,e)$ is interpreted as ``it is possible for $\calM$ to move from
state $m$ to state $m'$ via event $e$'' and this defines the
\emph{transition} $(m,e,m')$.  A \emph{path} of $\calM$ is a sequence
$\rho = (m_1,e_1,m_2), (m_2,e_2,m_3), \ldots, $ $(m_{k},e_k,m_{k+1})$
of consecutive transitions.  The path $\rho$ has starting state
$start(\rho) = m_1$, ending state $end(\rho) = m_{k+1}$ and label
$label(\rho) = e_1 e_2 \ldots e_k$.  We let $Paths(\calM)$ denote the set
of paths of $\calM$.  The FA $\calM$ defines the regular language $L(\calM)$ of
labels of paths that start in $m_0$ and end in final states.  More
formally,
\[
L(\calM) = \left\{
\begin{array}[c]{@{}l@{}}
label(\rho) \mid
\rho \in Paths(\calM) \wedge 
start(\rho) = m_0 \wedge end(\rho) \in M_\dagger
\end{array}\right\}
\]
Given $\sigma \in L(\calM)$ we let $\calM(\sigma)$ denote the
set of states of $\calM$ that are ending states of paths in
$Path(\calM)$ that have label $\sigma$. Note that because \qc is a
safety property, considering only finite \traces is adequate; this
restriction is also made by Alur et al. for linearizability
\cite{AlurMP00}.

Given an FA $\calM$ that represents either a specification or
implementation, $\Sigma$ (the alphabet of $\calM$) is the set of events,
and so, the language $L(\calM)$ denotes the possible sequences of events
(called \emph{\traces}). In this setting, each $\sigma \in L(\calM)$ of an
automaton representing an object is also a possible history of the object.

We will use $\calS = (S,s_0,\Sigma,t_S,S_\dagger)$ to denote the FA
that represents the specification and $\calQ =
(Q,q_0,\Sigma,t_Q,Q_\dagger)$ to denote the FA that represents the
implementation.  We will typically use $s_1, \ldots$ for the names of
states of $\calS$ and $q_1, \ldots$ for the names of states of
$\calQ$.  If $\calS$ is the FA for a sequential queue object, it will
generates \traces like $h_2$ in \refex{ex:qch1}, and if $\calQ$ is the
FA for the implementation in \reffig{fig:diffqueue-code}, then it will
generate \traces such as $h_1$. 
In this paper
we will be interested in two different problems.
\begin{enumerate}
\item Deciding whether a \trace $\sigma \in L(\calQ)$ of the implementation
  is allowed by the specification $\calS$. \hfill (membership)
\item Deciding whether all \traces of $\calQ$ are allowed by the
  specification $\calS$ and thus whether $\calQ$ is a correct implementation
  of $\calS$. \hfill (correctness)
\end{enumerate}

To model concurrent operations, we assume that an operation has
separate invoke and return events. We will use natural numbers $\nat$
to identify processes and make the following assumption, which is a
common restriction used in the literature.

\begin{assumption}
  \label{asm:proc-bounded}
  The number of processes in the specification and implementation is bounded.
\end{assumption}
This assumption is implicitly met by the fact that we use FA $\calS$
and $\calQ$. Others have considered infinite-state systems in the
context of linearizability \cite{BouajjaniEEH13}. Here, dropping
\refasm{asm:proc-bounded} causes the correctness problem for
linearizability to become undecidable, whereas correctness for linearizability
with a bound on the number of processes is decidable
\cite{AlurMP00}. To recover decidability in the infinite case, one
must place restrictions on the algorithms under consideration, in
particular, linearizability is EXPSPACE-complete for implementations
with ``fixed'' linearization points \cite{BouajjaniEEH13} (see
\cite{HeSh08,DD15-csur} for examples of such implementations).

Each event in $\Sigma$ is associated with a process, an operation, and
an input or output value. Like \cite{AlurMP00}, our theory is data
independent in the sense that the input and output values are
ignored. We simply assume that the event sets of the specification and
implementation are equal, and hence, every input/output that is
possible for an event of the implementation is also possible for the
specification.  Given process $p$, $\Sigma(p)$ denotes the set of
events associated with $p$.  We write $e \rightarrowtriangle e'$ to
denote that $e$ \emph{matches} $e'$, i.e., $e$ is an invoke event and
$e'$ the corresponding response, which holds whenever the process and
operation corresponding to $e$ and $e'$ are the same. We let
$\pi_p(\sigma)$ denote the \trace that restricts $\sigma$ to events of
process $p$, which is defined by
\begin{eqnarray*}
  \pi_p(\varepsilon)  = \varepsilon \qquad\quad
  \pi_p(e \sigma)  =  {\sf if}\ e \in \Sigma(p)\ {\sf then}\ e
  \pi_p(\sigma)\ {\sf else}\ \pi_p(\sigma)
\end{eqnarray*}
The empty \trace $\varepsilon$ is \emph{sequential}. A non-empty \trace
$\sigma = e_0 \dots e_k$ is \emph{sequential} iff $e_0$ is an invoke
event, for each even $i < k$, $e_i \matches e_{i+1}$, and if $k$ is
even, $e_k$ is an invoke event.  $\sigma$ is \emph{legal} iff
for each process $p$, $\pi_p(\sigma)$ is sequential. Legality ensures
that each process calls at most one operation at a time. Furthermore,
legality is \emph{prefix closed}, i.e., if $\sigma$ is legal, then all
prefixes of $\sigma$ are legal.

As is common in the literature, we make the following assumption on each specification object, which
essentially means that its operations are atomic.
\begin{assumption}
  \label{asm:spec-seq}
  The specification $\calS$ is sequential.
\end{assumption}
Furthermore, as is common in the literature
\cite{AlurMP00,HeWi90,HeSh08,JagadeesanR14}, we ignore the behaviour
of clients that use the concurrent object in question, but assume that
each client process calls at most one operation of the object it uses
at a time (although different client threads may call concurrent
operations). This is captured by \refasm{asm:legal} below.
\begin{assumption}
  \label{asm:legal}
  All \traces of specification $\calS$ and implementation $\calQ$ are legal.
\end{assumption}

\begin{example}
  \label{ex:trace-example}
  Consider the history $h_1$ from \refex{ex:qch1}. We have that $D_1
  \matches \return{D}_1(c)$, $E_2(a) \matches \return{E}_2$,
  etc. Furthermore, $h_1$ is legal because $\pi_p(\tau)$ is sequential
  for each process $p$. \qed
\end{example}

\subsection{Independence}
\label{sec:independence}

In this paper, we study \qc and explore how it can be represented in terms of
\emph{independence} from Mazurkiewicz Trace Theory
\cite{Mazurkiewicz84}. Here, a symmetric independence relation $I
\subseteq \Sigma \times \Sigma$ is used to define equivalence classes
of \traces. If $(e,e') \in I$, then consecutive $e$ and $e'$ within a
\trace can be swapped. The independence relation defines a partial
commutation --- some pairs of elements commute, but there may also be
pairs that do not.  This leads to an equivalence relation $\sim_I$,
where $\sigma \sim_I \sigma'$ iff \trace $\sigma$ can be transformed
into \trace $\sigma'$ via a sequence of rewrites of the form $\sigma_1
e\,e' \sigma_2 \rightarrow_I \sigma_1 e'\,e \sigma_2$ for each $(e,e')
\in I$.

\begin{example}
  For $h_1$ and $h_2$ in \refex{ex:qch1}, if $I = \Sigma \times
  \Sigma$ then $h_1 \sim_I h_2$.\qed
\end{example}

Given a \trace $\sigma$ we will let $[\sigma]_I = \{\sigma' \mid \sigma
\rightarrow_I \sigma'\}$ denote the set of \traces that can be produced
from $\sigma$ using zero or more applications of the rewrite rules
defined by $I$.  We will let $\clI(\calM) = \cup_{\sigma \in L(\calM)}
[\sigma]_I$ denote the set of \traces that can be formed from those in
$\calM$ using rewrites based on $I$. 
We can now state membership and correctness as stated in
\refsec{sec:probl-repr} more precisely as follows.
\begin{enumerate}
\item Deciding whether $\sigma \in \clI(\calS)$ for a given $\sigma \in
  L(\calQ)$. \hfill  (membership)
\item Deciding whether $L(\calQ) \subseteq \clI(\calS)$. \hfill (correctness)
\end{enumerate}
N.B., the correctness problem is sometimes referred to as the \emph{model
  checking} problem.  In the next section we explore how problems
associated with \qc can be expressed in this manner, and will see that
this requires the FA that represent the specification and
implementation to be slightly adapted.


\section{Quiescent consistency}
\label{sec:quiesc-cons-1}

In this section we define \qc and explore its properties. In
\refsec{sec:quiescent-traces}, we define quiescent \traces and state a
number of properties that will be used in the rest of the paper, then
in \refsec{sec:quiesc-cons}, we present an adaptation of the FA from
the previous section to enable reasoning about membership and
correctness for \qc. In
\refsec{sec:allowable-behaviour}, we define \qc,
state the
membership and correctness problems in terms of the adapted FA.
Sections \ref{sec:membership-problem} and \ref{sec:correctness-problem}
then explore these problems.

\subsection{Quiescent \traces}
\label{sec:quiescent-traces}

We first define quiescent \traces and state some properties that we use
in the rest of this paper.  If $\sigma = \sigma_1 e \sigma_2$ and $e$
is an invocation event, we say $e$ is a \emph{pending invocation} if
for all $e' \in \sigma_2$, $e \not \matches e'$. A \trace $\sigma$ is
\emph{quiescent} if it does not contain any pending invocations.
Thus, if a legal \trace is quiescent then there is a one-to-one
correspondence between invoke and response events.  A path $\rho =
(q_0,e_1,q_1), (q_1,e_2,q_2), \ldots, (q_{k-1},e_k,q_k)$ is
\emph{quiescent} if $label(\rho)$ is quiescent.
\begin{example} 
  \Trace $h_1$ in \refex{ex:qch1} is quiescent, but the \trace $h_1\,
  E_1(x)\, D_3\, \return{E}_1$ is not because the invocation $D_3$ is
  pending. Note that quiescence does not guarantee legality, e.g.,
  \traces $\return{D}_2(\xi)$ and $D_1\, D_1\, \return{D}_2(\xi)\,
  \return{D}_2(\xi)$ are both quiescent, but neither is legal.\qed
\end{example}
The following result links quiescence and legality.
\begin{proposition}
  \label{prop:qu-legal}
  Suppose $\sigma = \sigma_1 \sigma_2 \dots \sigma_k$ is a legal and
  quiescent \trace, such that each $\sigma_i$ (for $1 \le i \le k$) is
  a quiescent \trace. Then for all $1 \le i \le k$, $\sigma_i$ is
  legal.
\end{proposition}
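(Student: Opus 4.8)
The plan is to prove the contrapositive-flavoured structural claim directly: if some $\sigma_i$ fails to be legal, then $\sigma = \sigma_1 \cdots \sigma_k$ already fails to be legal, contradicting the hypothesis. Since legality is defined process-by-process via $\pi_p$, and since $\pi_p$ distributes over concatenation (i.e.\ $\pi_p(\sigma_1 \cdots \sigma_k) = \pi_p(\sigma_1) \cdots \pi_p(\sigma_k)$, which follows immediately from the recursive definition of $\pi_p$), it suffices to work with a single fixed process $p$ and show: if each $\sigma_i$ is quiescent and $\sigma$ is legal and quiescent, then $\pi_p(\sigma_i)$ is sequential for every $i$.

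First I would record the key observation that each $\pi_p(\sigma_i)$ is itself quiescent. This is because restricting a quiescent \trace to a single process preserves quiescence: a pending invocation $e$ in $\pi_p(\sigma_i)$ (no matching response after it within $\pi_p(\sigma_i)$) would be a pending invocation in $\sigma_i$ as well, since $e$ only matches events of the same process and operation, all of which survive the projection. Hence $\pi_p(\sigma_i)$ is quiescent, and moreover, since $\sigma$ is legal, $\pi_p(\sigma)$ is sequential, which by legality being prefix-closed means every prefix of $\pi_p(\sigma)$ is sequential. Now $\pi_p(\sigma) = \pi_p(\sigma_1) \cdots \pi_p(\sigma_k)$ is sequential and, being a concatenation of quiescent blocks, the boundaries between blocks fall exactly at positions where the prefix read so far is quiescent. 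The crucial point is that in a sequential legal \trace, the quiescent prefixes are exactly the even-length prefixes (those ending right after a response, or the empty prefix): a sequential \trace alternates invoke, matching response, invoke, matching response, and a prefix is quiescent iff it contains no pending invocation iff its length is even. Therefore each block boundary occurs at an even position, so each $\pi_p(\sigma_i)$ is a contiguous chunk $e_j e_{j+1} \cdots e_{\ell}$ of the alternating sequence starting at an even index and ending at an even index (exclusive), hence starts with an invoke, alternates $\matches$-matched pairs, and ends with a response — i.e.\ it is sequential. Since $p$ was arbitrary, each $\sigma_i$ is legal.

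The main obstacle — really the only delicate step — is nailing down the characterization ``a prefix of a sequential legal \trace is quiescent iff it has even length,'' and then arguing that the decomposition points of $\pi_p(\sigma)$ into the quiescent pieces $\pi_p(\sigma_i)$ must land on these even positions. One has to be a little careful because $\pi_p(\sigma_i)$ could be empty for some $i$ (process $p$ simply doesn't act in that block), which is harmless since $\varepsilon$ is sequential and quiescent; and one should check that a concatenation of quiescent \traces need not itself be quiescent in general, so the hypothesis that the whole concatenation $\sigma$ is quiescent is genuinely used — it is what forces the projection $\pi_p(\sigma)$ to be a clean alternating sequence rather than one with trailing pending invokes. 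I would present this as a short induction on $k$, or equivalently a direct argument tracking the parity of the length of $\pi_p(\sigma_1 \cdots \sigma_i)$ as $i$ increases, observing it stays even throughout because each $\pi_p(\sigma_i)$ is quiescent and a quiescent contiguous segment of an alternating sequence has even length.
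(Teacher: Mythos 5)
Your proof is correct and rests on the same underlying idea as the paper's: project onto each process and observe that, because each block is quiescent and the whole trace is legal, the block boundaries of $\pi_p(\sigma)$ must align with complete invoke/response pairs (the paper packages this as a contradiction via the first non-legal block and prefix-closure of legality, you as a direct parity induction on even-length prefixes). One small correction to your cautionary aside: a concatenation of quiescent traces \emph{is} always quiescent, since every invocation is answered within its own block and appending further events can only discharge, never create, pending invocations --- so the hypothesis that $\sigma$ itself is quiescent is in fact redundant given that each $\sigma_i$ is, and this fact would, if anything, streamline the step where you need each prefix $\pi_p(\sigma_1)\cdots\pi_p(\sigma_i)$ to be quiescent.
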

 \begin{proof}
   Suppose $\sigma$ is a legal quiescent \trace and $\sigma = \sigma_1
   \sigma_2 \dots \sigma_k$, where each $\sigma_j$ is quiescent. If
   $k=1$ then we are done so assume that $k >1$. Let $\sigma_i$ for $1
   \le i \le k$ be the first subsequence that is not legal, i.e., there
   exists a process $p$ such that $\pi_p(\sigma_i)$ is non-empty and
   not sequential.  Because legality is prefix closed, $\sigma' =
   \sigma_1 \dots \sigma_{i-1}$ must be legal. Moreover, for each
   process $q$, $\pi_q(\sigma')$ is either empty, or a non-empty
   sequential \trace ending with a return event. Thus for $p$, we have
   that $\pi_p(\sigma'\sigma_i)$ is not sequential, which contradicts
   the assumption that $\sigma$ is legal. 
 \end{proof}

We say that a \trace $\sigma$ is \emph{end-to-end quiescent} iff it is
quiescent and all non-empty proper prefixes of $\sigma$ are not
quiescent.
We write $\ete{\sigma}$ to denote $\sigma$ being end-to-end quiescent.
For example, the \trace $h_1$ in \refex{ex:qch1} is
end-to-end quiescent, and $h_2$ is quiescent but not end-to-end
quiescent. The next result states that any legal quiescent \trace can
be expressed as the concatenation of legal end-to-end quiescent
\traces.
\begin{proposition}
  \label{prop:end-to-end-qu}
  Suppose $\sigma$ is a legal quiescent \trace. Then $\sigma$ can be
  written in the form $\sigma_1 \sigma_2 \dots \sigma_k$ such that
  each $\sigma_i$ is a legal end-to-end quiescent \trace.
\end{proposition}
 \begin{proof}
   If $\sigma$ is end-to-end quiescent, we are done. Otherwise, there
   must exist $\sigma_1$ and $\sigma_2$, where $\sigma = \sigma_1
   \sigma_2$, such that $\sigma_1$ is legal and end-to-end quiescent,
   and $\sigma_2$ is legal and quiescent.  Because $\sigma_2$ is
   quiescent, it is possible to inductively apply the construction
   above, which completes the proof. 
 \end{proof}

\subsection{Distinguishing quiescence}
\label{sec:quiesc-cons}

We now develop an extension to the FA in \refsec{sec:probl-repr} to
facilitate reasoning about \qc in an automata-theoretic setting. 
\Qc is defined in terms of quiescent \traces and so we will consider the behaviour of
the implementation/specification to be its quiescent \traces.
This assumption
is stated formally in terms of FA as follows.

\begin{assumption}
  \label{asm:quiescent-path}
  A path of $\calS$ (and $\calQ$) starting from the initial state of
  $\calS$ (and $\calQ$) is quiescent iff it ends in a final state of
  $\calS$ (and $\calQ$).
\end{assumption}
By \refasm{asm:spec-seq}, $\calS$ is sequential, and hence,
distinguishing its quiescent states is straightforward.
The following proposition gives a sufficient condition under which
it is possible to partition the state set of
$\calQ$ into quiescent states and non-quiescent states
(and so it is straightforward to ensure that Assumption \ref{asm:quiescent-path} holds).

\begin{proposition}
  \label{prop:qui-init-path}
  Suppose that every path of $\calQ$ starting from $q_0$ is a
  prefix of a legal quiescent path of $\calQ$.  If $\rho$ is a
  quiescent path of $\calQ$ such that $start(\rho) = q_0$ and
  $end(\rho) = q$, then all paths of $\calQ$ starting from $q_0$ and
  ending in $q$ are quiescent.
\end{proposition}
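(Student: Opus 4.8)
The plan is to argue by contradiction: suppose $\rho$ is a quiescent path from $q_0$ to $q$, but there is some path $\rho'$ from $q_0$ to $q$ that is \emph{not} quiescent, i.e.\ $label(\rho')$ contains a pending invocation. The idea is that $\rho'$ can be extended to a quiescent path, but in doing so we produce a path through $q$ that behaves badly relative to $\rho$, violating legality. Concretely, since every path from $q_0$ is a prefix of a legal quiescent path, there is a path $\rho'\rho''$ from $q_0$ that is legal and quiescent, with $start(\rho'') = q$. Similarly, pick any legal quiescent extension $\rho\rho'''$ of $\rho$ — in fact $\rho$ itself is already quiescent, so we may take $\rho''' $ empty, but we will want to reuse $\rho''$ instead.

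The key step is the ``cut-and-paste'': because $end(\rho') = end(\rho) = q$, the concatenation $\rho \rho''$ is a legal path of $\calQ$ starting at $q_0$ (it is a path because the states match up at $q$; it is legal because it is a prefix of... — here is the subtle point, see below). Now consider $label(\rho\rho'')$. Since $label(\rho)$ is quiescent, every invocation in the $\rho$-part is matched \emph{within} the $\rho$-part. Since $label(\rho'\rho'')$ is quiescent, every invocation in $label(\rho')$ is matched within $label(\rho'\rho'')$; in particular the pending invocation $e$ of $label(\rho')$ — which had no match in $\rho'$ — must be matched by some response event $e'$ occurring in $label(\rho'')$, i.e.\ $e \matches e'$ with $e'$ in the $\rho''$-part. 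But then in $label(\rho\rho'')$ there is a response event $e'$ in the $\rho''$-part whose matching invocation $e$ does \emph{not} appear anywhere before it (it lay in $\rho'$, which has been replaced by $\rho$, and $e$ is not in $\rho$ because $label(\rho)$ is quiescent and legal with a one-to-one invoke/response correspondence, so the only occurrences of the process-$p$ operation in $\rho$ are already paired). Restricting to the process $p$ of $e$, $\pi_p(label(\rho\rho''))$ then contains a response event not immediately preceded by its invocation, so it is not sequential, contradicting legality of $\rho\rho''$.

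The main obstacle is establishing that $\rho\rho''$ is in fact legal, since our hypothesis only gives legality of paths that \emph{we have exhibited} as prefixes of legal quiescent paths, and $\rho\rho''$ is a Frankenstein path. The clean way around this is to observe that we do not actually need $\rho\rho''$ to be legal by hypothesis — we only need it to be a genuine \emph{path} of $\calQ$ (which it is, by state-matching at $q$), and then derive a contradiction with \refasm{asm:legal}, which asserts that \emph{all} \traces of $\calQ$ are legal. So the argument is: $\rho\rho''$ is a path, hence $label(\rho\rho'') \in L(\calQ)$ once we check $end(\rho'') $ is a final state — and it is, because $\rho'\rho''$ is quiescent so by \refasm{asm:quiescent-path} $end(\rho'')$ is final — hence by \refasm{asm:legal} $label(\rho\rho'')$ is legal, contradicting the previous paragraph. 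I would write the proof in this order: (1) assume a non-quiescent $\rho'$ from $q_0$ to $q$; (2) extend to legal quiescent $\rho'\rho''$ and locate the match $e'$ in $\rho''$ of the pending invocation $e$; (3) form the path $\rho\rho''$, note its label lies in $L(\calQ)$ and is therefore legal; (4) compute $\pi_p$ of that label and exhibit the non-sequential pattern, the desired contradiction.
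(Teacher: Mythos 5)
Your proof is correct and follows essentially the same route as the paper's: splice the quiescent completion $\rho''$ of the non-quiescent path $\rho'$ onto the quiescent path $\rho$, and observe that the resulting path $\rho\rho''$ from $q_0$ violates legality (the paper simply counts invokes versus responses in $\rho''$; you track the specific pending invocation and its matching response and project onto its process, which amounts to the same contradiction). One small repair: to establish legality of $label(\rho\rho'')$ you appeal to \refasm{asm:quiescent-path} (to conclude $end(\rho'')$ is final and hence $label(\rho\rho'') \in L(\calQ)$), but this proposition is offered precisely as the means of \emph{arranging} that assumption, so the appeal is circular in context; instead use the proposition's own hypothesis --- $\rho\rho''$ is a path of $\calQ$ starting at $q_0$, hence a prefix of a legal quiescent path, and legality is prefix-closed, so $label(\rho\rho'')$ is legal directly.
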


 \begin{proof}
   The proof is by contradiction. Assume that there exist $\rho$,
   $\rho'$ and $q$ such that paths $\rho$ and $\rho'$ end at $q$,
   $\rho$ is quiescent and $\rho'$ is not quiescent.  Since $\rho'$ can
   be completed to form a quiescent path, 
   there must be
   a path $\rho''$ from $q$ such that $\rho'\rho''$ is quiescent.
   Further, $\rho''$ must contain more responses than invokes.  Thus,
   since $\rho$ is quiescent we can conclude that the path $\rho
   \rho''$ from the initial state of $\calQ$ has more response than
   invokes.  This provides a contradiction as required, since, by
   \refasm{asm:legal}, all histories of $\calQ$ are legal. 
 \end{proof}

Note that in the proof of \refprop{prop:qui-init-path}, it might be
necessary to invoke a new operation in order to complete the
non-quiescent path $\rho$ under consideration and reach a quiescent
state. For example, consider our diffraction queue in
\refsec{sec:quiesc-cons-queue}, where the {\tt dequeue} operation that
\emph{blocks} when the queue is empty. Suppose we have a path $\rho'$
such that
\[
label(\rho') = D_1\ D_2\ E_3(x)\ \return{D}_2(x)\ \return{E}_3
\]
It is not possible for $\rho'$ to reach a quiescent state by only
completing the pending invocations in $label(\rho')$ --- the only
pending invocation $D_1$ is blocked because the queue is
empty. However, it is possible to reach a quiescent state by following
a path where a new enqueue operation is invoked (by some process), and
this new operation along with the pending $D_1$ in $label(\rho')$ is
completed by adding matching returns. This observation does not
invalidate our results, which only requires that we identify the
quiescent states.

We now work towards a definition of allowable behaviours for \qc
(\refsec{sec:allowable-behaviour}), stated in terms of an independence
relation (\refsec{sec:independence}). In particular, by using a
special event $\delta \notin \Sigma$ that signifies quiescence, we aim
to use the \emph{universal independence relation}:
\[
U = \Sigma \times
\Sigma
\] which defines a partial commutation that allows \emph{all} events
different from $\delta$ in a \trace to commute. Thus, the alphabet of
the FA we use is extended to $\Sigma_{\delta} = \Sigma \cup
\{\delta\}$. Note that using $U$ as the independence relation means
that matching invocations and responses of the specification may also
be reordered when checking both membership and correctness. However,
as is standard in the literature, we have assumed that all \traces of
the implementation are legal (\refasm{asm:legal}), and hence, do not
generate \traces such that a response precedes an invocation, i.e.,
commutations of a response that is followed by a matching invocation will
never be used.


We now consider how we should add $\delta$ events to the FA $\calS$
(representing the specification) and $\calQ$ (representing the
implementation) by extending their transition relations, which results
in automata $\calS_\delta$ and $\calQ_\delta$.

First, consider the specification $\calS$.  One option is to insist
that a $\delta$ is included in a \trace \emph{whenever} a quiescent
state is reached.  However, if we apply this approach to the
specification, then the \traces of $\calS$ will all be of the form
$\delta e_1 \rete_1 \delta e_2 \rete_2 \delta \ldots$, i.e., a \trace
$\sigma$ of the implementation can only be equivalent to a \trace
$\sigma'$ of $\calS$ under the partial commutation defined by $U$ if
$\sigma = \sigma'$, which is not what is intended under \qc. This
situation is a result of applying the restriction --- that one can
only reorder between instances of quiescence --- to \traces of the
(sequential) specification; this restriction should only be applied to
\traces of the implementation.  Thus, we should not require a $\delta$
to appear in a \trace of $\calS$ whenever a quiescent state is
reached. Instead, we rewrite $\calS$ to form an FA $\sdelta$ so that
if $s$ is a quiescent state of $\calS$ (i.e., after each return event)
then there is a self-loop transition $(s,\delta,s)$ in $\sdelta$.
These are the only transitions of $\sdelta$ that have label
$\delta$. Overall, we construct $\calS_\delta$ such that we
\emph{allow} the inclusion of $\delta$ whenever a \trace of $\calS$
reaches a quiescent state.

Now consider the implementation $\calQ$. Here, we must insist that
there is a $\delta$ in a \trace of $\calQ$ whenever a quiescent state
is reached, therefore we rewrite $\calQ$ to form an FA $\pdelta$ such
that if $q$ is a quiescent state of $\calQ$ then all transitions that
leave $q$ in $\pdelta$ have label $\delta$.  These are the only
transitions of $\pdelta$ that have label $\delta$. In particular, for
each quiescent state $q$ of $\calQ$ we simply add a new state
$q_{\delta}$, make $q_{\delta}$ the initial state of all transitions
of $\calQ$ that leave $q$, and add the transition
$(q,\delta,q_{\delta})$. If $q$ is a final state of $\calQ$, i.e., $q
\in Q_\dagger$, we will make $q_{\delta}$ a final state of
$\calQ_\delta$ instead of $q$. Overall, we construct $\calQ_\delta$
such that we \emph{require} the inclusion of $\delta$ when $\calQ$
reaches a quiescent state.

The inclusion of $\delta$ in \traces of $\calS$ allows us to compare \traces
of $\calS$ and $\calQ$ (once rewritten based on independence relation $U$).
\begin{example}
  Returning to \traces $h_1$ and $h_2$ in \refex{ex:qch1}, there are
  many possible $\delta$ extensions of $h_2$ (which is a \trace of the
  specification), for example:
  \begin{eqnarray*}
    h_{2.1}^\delta & = & \delta\:E_3(b)\:\return{E}_3\:\delta\:E_2(a)\:
    \return{E}_2\:\delta\:D_4\: \return{D}_4(b)\:\delta\:
    D_5\:\return{D}_5(a)\:\delta\:E_6(c)\:\return{E}_6\:\delta\: D_1\:
    \return{D}_1(c)\:\delta \\
    h_{2.2}^\delta & = & \delta\:E_3(b)\:\return{E}_3\:E_2(a)\:
    \return{E}_2\:\delta\:D_4\: \return{D}_4(b)\:\delta\:
    D_5\:\return{D}_5(a)\:\delta\:E_6(c)\:\return{E}_6\: D_1\:
    \return{D}_1(c)\:\delta \\
    h_{2.3}^\delta & = & \delta\:E_3(b)\:\return{E}_3\:E_2(a)\:
    \return{E}_2\:D_4\: \return{D}_4(b)\:
    D_5\:\return{D}_5(a)\:E_6(c)\:\return{E}_6 \: D_1\:
    \return{D}_1(c)\:\delta 
  \end{eqnarray*}
  In contrast, there is exactly one $\delta$ extension of $h_1$,
  namely $\delta\ h_1\ \delta$. If $h_2$ had been a \trace of the
  implementation, then the only $\delta$ extension of $h_2$ is
  $h_{2.1}^\delta$.  \qed
\end{example}

In addition to adding $\delta$ to the \traces of $\calS$ and $\calQ$, we must
also reason about \traces with $\delta$ removed. To this end, we define
the following projection
\begin{eqnarray*}
  \pi_\Sigma(\varepsilon)  =  \varepsilon \qquad\qquad 
  \pi_\Sigma(e \sigma)  =  {\sf if}\ e \in \Sigma\ {\sf then}\ e
  \pi_\Sigma(\sigma)\ {\sf else}\ \pi_\Sigma(\sigma)
\end{eqnarray*}
Thus, for example $\pi_\Sigma(\delta h_{1} \delta) = h_1$ and
$\pi_\Sigma(h_{2.1}^\delta) = h_2$.

 \subsection{Allowable \qct behaviours}
 \label{sec:allowable-behaviour}

 In this section we formalise what it means for a \trace of $\calQ$ to
 be allowed by $\calS$, and this is stated in terms of the extended
 automaton $\pdelta$. Under \qc, \traces $\sigma$ and $\sigma'$ are
 equivalent if they have the same (multi-)sets of events between two
 consecutive occurrences of quiescence.  As a result, all elements in
 $\Sigma$ commute (we do not care about the relative order of these
 events) but nothing commutes with $\delta$.

 Under \qc a quiescent \trace $\sigma$ is allowed by specification $\calS$
 if $\sigma$ can be rewritten to form a \trace of $\calS$ by permuting
 events between consecutive quiescent points.  We thus obtain the
 following definition.
 \begin{definition}
   \label{def:quiesc-cons}
   Suppose $\sigma = \sigma_1 \sigma_2 \ldots \sigma_k$ is a legal
   quiescent \trace and each $\sigma_i$ is legal and end-to-end
   quiescent (N.B. it is always possible to write a quiescent $\sigma$
   in such a form due to \refprop{prop:end-to-end-qu}).  Then $\sigma$
   is \emph{allowed by $\calS$ under \qc} iff there exists a permutation
   $\sigma'_i \sim_U \sigma_i$ for each $1 \le i \le k$ such that
   $\sigma'_1 \sigma'_2 \ldots \sigma'_k \in L(\calS)$.
 \end{definition}

 We now define what it means for a \trace $\sigma \in L(\pdelta)$ to be
 allowed by a specification $\calS$ under \qc. 

 \begin{definition}
   \label{def:allowed-delta}
   \Trace $\sigma \in L(\pdelta)$ is \emph{allowed} by $\calS$ under \qc if
   $\pi_\Sigma(\sigma)$ is allowed by $\calS$ under \qc.
 \end{definition}

 We say $\sigma'$ is a \emph{legal permutation} of a legal \trace
 $\sigma$ iff $\sigma \sim_U \sigma'$ and $\sigma'$ is legal.
 \begin{proposition}
   \label{prop:quies-perm}
   If $\sigma$ is legal and quiescent, then any legal permutation of
   $\sigma$ is quiescent.
 \end{proposition}

 We can now express the membership and correctness problems in terms
 of $\pdelta$ and $\sdelta$, instead of between $\calQ$ and $\calS$ as
 done in \refsec{sec:independence}.

 \begin{lemma}[Membership]\label{lem:def_membership}
   Suppose $\sigma \in L(\pdelta)$. Then $\sigma$ is allowed by $\calS$
   under \qc iff $\sigma \in \cl(\sdelta)$.
 \end{lemma}

  \begin{proof} Suppose $\sigma \in L(\pdelta)$. By Proposition
    \ref{prop:end-to-end-qu} and the construction of $\pdelta$, we have
    $\sigma = \delta \sigma_1 \delta \ldots \sigma_k \delta$ such that
    the $\sigma_i$ do not include $\delta$ (i.e., each $\sigma_i$ is
    end-to-end quiescent).

    First assume that $\sigma$ is allowed by $\calS$ under \qc. By
    \refdef{def:allowed-delta}, $\sigma_1 \sigma_2 \dots \sigma_k$ is
    allowed by $\calS$, and hence, by \refdef{def:quiesc-cons} we have that
    $\calS$ has a \trace $\sigma'_1 \sigma'_2 \ldots \sigma'_k$ such that
    $\sigma'_i$ is a legal permutation of $\sigma_i$ (for all $1 \le i
    \le k$).  Furthermore, $\calS$ is initially quiescent and by
    \refprop{prop:quies-perm}, each $\sigma_i'$ is quiescent, therefore
    $\sdelta$ has the \trace $\sigma ' = \delta \sigma'_1 \delta
    \sigma'_2 \delta \ldots \delta \sigma'_k\delta$.  By definition,
    $\sigma \in \cl(\sdelta)$ as required.

    Now assume $\sigma \in \cl(\sdelta)$.  Then, $L(\sdelta)$
    contains a \trace $\sigma ' = \delta \sigma'_1 \delta \sigma'_2
    \delta \ldots \delta \sigma'_k\delta$ for some $\sigma'_1, \ldots,
    \sigma'_k$ such that $\sigma'_i$ is a permutation of $\sigma_i$ (all
    $1 \le i \le k$).  We therefore have that $L(S)$ contains a \trace
    $\sigma'_1 \ldots \sigma'_k$ such that $\sigma'_i$ is a permutation
    of $\sigma_i$ (all $1 \le i \le k$), and hence, have that $\sigma $
    is allowed by $\calS$ as required.
  \end{proof}

 \begin{lemma}[Correctness]
   \label{lem:correctness}
   Under \qc, $\calQ$ is a correct implementation of $\calS$ iff $L(\pdelta)
   \subseteq \cl(\sdelta)$.
 \end{lemma}
 \begin{proof}
  By \reflem{lem:def_membership} and the definition of \qc. 
 \end{proof}

\section{The Membership Problem}
\label{sec:membership-problem}

In this section we explore the following problem: given a
specification $\calS$ and \trace $\sigma \in L(\pdelta)$, do we have
that $\sigma \in \cl(\sdelta)$? We show that this question is in
general NP-complete (\refsec{sec:unbounded-qc}), but by assuming an
upper bound between occurrences of two quiescent states, the question
can be solved in polynomial time (\refsec{sec:bounded-qc}).

 \subsection{Unrestricted \qc}
 \label{sec:unbounded-qc}

 We first establish that the membership problem for \qc is indeed in
 NP.
 \begin{lemma}
   The membership problem for \qc is in NP.
 \end{lemma}

 \begin{proof}
   Given a \trace $\sigma \in L(\pdelta)$ and a specification $\calS$,
   a non-deterministic Turing machine can solve the membership problem
   of deciding whether $\sigma \in \cl(\sdelta)$ as follows.  First,
   the Turing machine guesses a \trace $\sigma'$ of $\sdelta$ with the
   same length as $\sigma$.  The Turing machine then guesses a
   permutation $\sigma''$ of $\sigma$ that is consistent with the
   independence relation $U$.  Finally, the Turing machine checks
   whether $\sigma' = \sigma''$.  This process takes polynomial time
   and hence, since a non-deterministic Turing machine can solve the
   membership problem in polynomial time, the problem is in NP.
 \end{proof}

 We now prove that this problem is NP-hard by showing how instances of
 the one-in-three SAT problem can be reduced to it.  An instance of the one-in-three
 SAT problem is defined by boolean variables $v_1, \ldots, v_k$ and
 clauses $C_1, \ldots, C_n$ where each clause is the disjunction of
 three literals (a literal is either a boolean variable or the negation
 of a boolean variable).  The one-in-three SAT problem is to decide
 whether there is an assignment to the boolean variables such that each
 clause contains \emph{exactly} one true literal and is known to be
 NP-complete \cite{Schaefer78}.\footnote{Note that the one-in-three SAT
   problem differs slightly from the more well-known 3SAT problem.}

 The construction in the proof of the result below takes an instance
 of the one-in-three SAT problem and constructs a specification
 $\calS$ that has $k+1$ `main' states $s_0, \ldots, s_k$ and for
 boolean variable $v_i$ it has two paths from $s_{i-1}$ to $s_{i}$:
 one path $\rho_i^T$ has a matching invocation/response pair $e_j,
 \rete_j$ for every clause $C_j$ that contains literal $v_i$ and the
 other path $\rho_i^F$ has a matching invocation/response pair
 $e_j,\rete_j$ for every clause $C_j$ that contains literal $\neg
 v_i$.  The relative order of the pairs of events in $\rho_i^F$ and
 $\rho_i^T$ will not matter.  A path from $s_0$ to $s_{k+1}$ is of the
 form $\rho_1^{B_1} \rho_2^{B_2} \ldots \rho_k^{B_k}$ for some $B_1,
 \ldots, B_k \in \{T,F\}$.  Furthermore, the number of times that the
 events $e_j$ and $\rete_j$ appear in the label of the path is the
 number of literals in clause $C_j$ that evaluate to true under this
 assignment of values to $v_1, \ldots, v_k$.  As a result, such a path
 contains $e_j$ and $\rete_j$ exactly once iff the assignment of
 $B_i$ to $v_i$ (all $1 \le i \le k$) leads to exactly one literal in
 $C_j$ evaluating to true.  Thus, there is a path from $q_0$ to $q_k$
 that contains each $e_j$ and $\rete_j$ exactly once iff there is a
 solution to this instance of the one-in-three SAT problem.

 \begin{example}
   \label{ex:reduction}
   Suppose we have four boolean variables $v_1, \ldots, v_4$ and
   clauses $C_1 = v_1 \vee v_2 \vee \neg v_3$, $C_2 = v_1 \vee \neg
   v_2 \vee v_4$, and $C_3 = v_2 \vee v_3 \vee \neg v_4$. This leads
   to the FA shown in Figure
   \ref{fig:example_construction_membership}. In this, for example,
   the label of $\rho_1^T$ is $e_1 \rete_1e_2\rete_2$ because $C_1$
   and $C_2$ both have literal $v_1$, and the label of $\rho_2^F$ is
   $e_2\rete_2$ since $C_2$ is the only clause that contains literal
   $\neg v_2$.  Consider now the path $\rho_1^T \rho_2^F \rho_3^F
   \rho_4^F$, which has label $e_1\rete_1e_2\rete_2 e_2\rete_2
   e_1\rete_1e_3\rete_3$.  The label of this path tells us that if we
   assign true to $v_1$ and false to each of $v_2, v_3, v_4$ then
   clause $C_1$ contains two true literals (since $e_1$ appears
   twice), $C_2$ contains two true literals (since $e_2$ appears
   twice), and $C_3$ contains one true literal (since $e_3$ appears
   once).  Thus, this assignment is not a solution to this instance of
   the one-in-three SAT problem because more than one clause of $C_1$
   and $C_2$ evaluates to true. \qed

   \begin{figure}[t]
     \centering
       $\UseTips
       \entrymodifiers = {} \xymatrix @+13mm { *++[o][F]{s_0}
         \ar@/^2.0pc/[r]^{label(\rho_1^T) = e_1\rete_1e_2\rete_2}
         \ar@/^-2.0pc/[r]_{label(\rho_1^F) = \epsilon} & *++[o][F]{s_1}
         \ar@/^2.0pc/[r]^{label(\rho_2^T) = e_1\rete_1e_3\rete_3}
         \ar@/^-2.0pc/[r]_{label(\rho_2^F) = e_2\rete_2} & *++[o][F]{s_2}
         \ar@/^2.0pc/[r]^{label(\rho_3^T) = e_3\rete_3} \ar@/^-2.0pc/[r]_{
           label(\rho_3^F) = e_1\rete_1} & *++[o][F]{s_3}
         \ar@/^2.0pc/[r]^{label(\rho_4^T) = e_2\rete_2}
         \ar@/^-2.0pc/[r]_{label(\rho_4^F) = e_3\rete_3} & *++[o][F]{s_4}
       }
       $
       \caption{Finite automaton for
         \refex{ex:reduction}} \label{fig:example_construction_membership}
       \vspace{-2mm}
   \end{figure}
 \end{example}

 Note that we are not asking whether the clauses can be satisfied, but
 whether they can be satisfied in a way that makes \emph{exactly} one
 literal of each true. This is equivalent to asking whether we can make
 the clauses true if they are stated in terms of \emph{isolating
   disjunction} $\dot\vee$, where
 \[
 \dot\vee(v_1,v_2, \dots, v_n) = \bigvee_{1\le i\le n} (v_i \land \bigwedge_{j\neq
   i} \neg v_j)
 \]
 \begin{example}
   Checking the assignment in \refex{ex:reduction}, is equivalent to
   checking for a satisfying assignment to the clauses $\dot C_1 =
   \dot\vee(v_1, v_2, \neg v_3)$, $\dot C_2 = \dot\vee (v_1, \neg v_2,
   v_4)$, and $\dot C_3 = \dot\vee (v_2, v_3, \neg v_4)$, where, for
   example,
   \[
   \dot C_1 = (v_1 \land \neg v_2 \land v_3) \vee (\neg v_1 \land v_2
   \land v_3) \vee (\neg v_1 \land \neg v_2 \land \neg v_3)
   \]
 \end{example}

 We now prove NP-hardness of the membership problem.  The proof
 essentially uses the finite automaton described above and the \trace
 $\sigma = e_1 \rete_1 \ldots e_n \rete_n$.  Additional events are
 included to allow these events to be reordered. In particular, we add
 an initial invocation $e_0$ and a final response $\rete_0$ in the
 \trace and so the implementation is only quiescent in its initial
 state and at the end of the \trace.  This allows the events of the
 \trace to be reordered; without the initial invocation and final
 response we could only compare $\sigma$ with \traces of the
 specification in which the pairs $e_i, \rete_i$ are met in the order
 found in $\sigma$.

 \begin{lemma}
   \label{lem:membership-unrestricted-qc}
 The membership problem for \qc is NP-hard.
 \end{lemma}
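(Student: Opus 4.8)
The plan is to reduce from the one-in-three SAT problem, using the finite automaton construction that was already described informally in the paragraphs preceding \refex{ex:reduction}. Given an instance with boolean variables $v_1, \ldots, v_k$ and clauses $C_1, \ldots, C_n$, I first build the specification $\calS$: it has main states $s_0, \ldots, s_k$, and for each variable $v_i$ a pair of parallel paths $\rho_i^T$ and $\rho_i^F$ from $s_{i-1}$ to $s_i$, where $\rho_i^T$ (respectively $\rho_i^F$) contains exactly one matching invocation/response pair $e_j \rete_j$ for each clause $C_j$ that contains the literal $v_i$ (respectively $\neg v_i$). Crucially I must also prepend an initial invocation $e_0$ and append a final response $\rete_0$: so $\calS$ really has states $s_{-1}, s_0, \ldots, s_k, s_{k+1}$ with a transition $(s_{-1}, e_0, s_0)$ and $(s_k, \rete_0, s_{k+1})$, and $s_{k+1}$ is the only final state. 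I must check that $\calS$ satisfies the standing assumptions — it is sequential (every path alternates invocation/response starting with $e_0$), and the only quiescent states are $s_{-1}$ and $s_{k+1}$, so $\sdelta$ just adds $\delta$-self-loops there. The implementation $\calQ$ I take to generate the single trace $\sigma_0 = e_0\, e_1 \rete_1 e_2 \rete_2 \cdots e_n \rete_n\, \rete_0$; after the $\delta$-construction, the trace of $\pdelta$ to test for membership is $\sigma = \delta\, \sigma_0\, \delta$, which is end-to-end quiescent because $e_0$ stays pending until the very end.

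Next I argue correctness of the reduction. Since $\sigma$ is a single end-to-end quiescent block (between its two $\delta$'s), \reflem{lem:def_membership} tells us $\sigma \in \cl(\sdelta)$ iff there is a permutation $\sigma_0' \sim_U \sigma_0$ with $\delta\, \sigma_0'\, \delta \in L(\sdelta)$, i.e.\ iff some reordering of the multiset of events of $\sigma_0$ is a trace of $\calS$. A trace of $\calS$ (ignoring the $\delta$-loops) has the form $e_0\, \mathit{label}(\rho_1^{B_1}) \cdots \mathit{label}(\rho_k^{B_k})\, \rete_0$ for some choice $B_1, \ldots, B_k \in \{T, F\}$, and the multiset of $\Sigma$-events it contains is $\{e_0, \rete_0\}$ together with, for each $j$, as many copies of $e_j, \rete_j$ as there are true literals in $C_j$ under the assignment $v_i \mapsto B_i$. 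This multiset equals the multiset of $\sigma_0$ — namely exactly one copy of each $e_j, \rete_j$ plus $e_0, \rete_0$ — precisely when every clause has exactly one true literal, i.e.\ precisely when the assignment solves the one-in-three SAT instance. The role of $e_0$ and $\rete_0$ is exactly to guarantee that $\sigma$ is one single quiescent block, so that $\sim_U$ is free to reorder all of $e_1 \rete_1 \cdots e_n \rete_n$ arbitrarily; without them, $\sigma$ would split at each quiescent point and the $e_j, \rete_j$ pairs could only be matched against $\calS$ in their original order, killing the reduction.

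Finally I note the construction is polynomial: $\calS$ has $O(k)$ states and $O(n)$ transitions (each clause contributes three invocation/response pairs across the $\rho_i^B$), and $\sigma$ has length $O(n)$, so the whole instance is built in time polynomial in $k + n$. Combined with membership being in NP (established just before this lemma), this gives NP-completeness. The main obstacle — and the point that needs the most care in the writeup — is the bookkeeping of the multiset argument: showing that the multiset of $\Sigma$-events along a path of $\sdelta$ is exactly what I claimed, and that two legal quiescent traces over $\Sigma \cup \{\delta\}$ with the $\delta$'s in the same positions are $\sim_U$-equivalent iff the intervening blocks have equal multisets (which is immediate since $U = \Sigma \times \Sigma$ and nothing commutes with $\delta$, but worth stating). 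Everything else — the assumptions on $\calS$, the shape of $L(\sdelta)$, the appeal to \reflem{lem:def_membership} — is routine once that combinatorial core is in place.
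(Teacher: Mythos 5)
Your reduction is the same as the paper's (one-in-three SAT, the same chain of gadgets $\rho_i^T,\rho_i^F$, the same multiset-counting argument), but there is a genuine flaw in where you place the ``pending invocation'' device. You build the specification with a lone transition $(s_{-1},e_0,s_0)$ at the start and a lone $(s_k,\return{e}_0,s_{k+1})$ at the end, so every trace of your $\calS$ has the form $e_0\,e_1\rete_1\cdots\,\return{e}_0$. That trace is \emph{not} sequential: sequentiality (as defined in \refsec{sec:probl-repr}) requires each invocation to be immediately followed by its matching response, whereas here $e_0$ is followed by $e_1$ and only matched at the very end. So your specification violates \refasm{asm:spec-seq}, and your parenthetical claim that ``every path alternates invocation/response'' is false. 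Since the membership problem is defined only for sequential specifications, a reduction that produces non-sequential ones proves hardness of a strictly larger problem and does not establish the lemma as stated.

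The repair is exactly what the paper does: keep the specification sequential by making $e_0\,\return{e}_0$ an \emph{atomic} pair on the path into $s_0$ (and add a fresh atomic pair $e\,\rete$ from $s_k$ to the unique final state to force any matching path to traverse all $k$ gadgets), and put the pending invocation only in the \emph{implementation} trace, $\sigma = e_0\,e_1\rete_1\cdots e_n\rete_n\,e\,\rete\,\return{e}_0$. This $\sigma$ is legal and end-to-end quiescent, so under $U=\Sigma\times\Sigma$ the permutation is free to move $\return{e}_0$ up next to $e_0$ and reorder the remaining pairs to match a path $\rho_1^{B_1}\cdots\rho_k^{B_k}$; your multiset argument then goes through verbatim. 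Your instinct about the role of $e_0,\return{e}_0$ --- keeping the whole trace a single quiescent block --- is the right one; the mistake is importing that trick into $\calS$, where it breaks the standing assumption the problem is defined under.
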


 \begin{proof}
   Assume that we are given an instance of the one-in-three SAT problem
   defined by boolean variables $v_1, \ldots, v_k$ and clauses $C_1,
   \ldots, C_n$.  We define a specification with invocation events
   $e_0, e_1, \ldots, e_n, e$ and corresponding return events $\rete_0,
   \rete_1, \ldots, \rete_n, \rete$.  Define a finite automaton
   specification $\calS$ as follows.  The state set of $\sdelta$ includes
   states $s_0, s_1, \ldots, s_k$ and $s, s'$ with $s$ being the
   initial state.  For all $1 \le i \le k$ there are two paths from
   $s_{i-1}$ to $s_{i}$: path $\rho_i^T$ has invocation/response pair
   $e_j,\rete_j$ for every clause $C_j$ that contains literal $v_i$;
   and path $\rho_i^F$ has invocation/response pair $e_j,\rete_j$ for
   every clause $C_j$ that contains literal $\neg v_i$.  Thus, a path
   from $s_0$ to $s_{k}$ is of the form $\rho_1^{B_1} \rho_2^{B_2}
   \ldots \rho_k^{B_k}$ for some $B_1, \ldots, B_k \in \{T,F\}$.  From
   the initial state $s$ the path to $s_0$ has \trace $e_0 \rete_0$ and
   from $s_k$ the path to the final state $s'$ has \trace $e\, \rete$.

   Consider the \trace $\sigma = e_0e_1\rete_1\ldots
   e_{n}\rete_ne\,\rete\,\rete_0$. We prove that $\sigma$ is in
   $\cl(\sdelta)$ iff there is a solution to the instance of the
   one-in-three SAT problem defined by $v_1, \ldots, v_k$ and $C_1,
   \ldots, C_n$.  First note that if $\sigma \in \cl(\sdelta)$ then
   the corresponding \trace of $\sdelta$ must end at state $s'$ since
   $\sigma$ contains the events $e$ and $\rete$.  In addition,
   $\sigma$ is end-to-end quiescent and so we simply require that some
   permutation of $\sigma$ is in $L(\sdelta)$.  Thus, $\sigma \in
   \cl(\sdelta)$ iff $\sdelta$ has a path from $s_0$ to $s_k$ whose
   label $\sigma_1$ contains each $e_i$ and $\rete_i$ exactly once for
   $1 \le i \le n$.  Furthermore, $\sigma_1$ must be the label of a
   path of $\sdelta$ that is of the form $\rho = \rho_1^{B_1}
   \rho_2^{B_2} \ldots \rho_k^{B_k}$.  Thus, $\sigma \in \cl(\sdelta)$
   iff there is an assignment $v_1 = B_1, \ldots, v_k = B_k$ such that
   each clause $C_1, \ldots, C_n$ contains exactly one true literal.
   This is the case iff there is a solution to this instance of the
   one-in-three SAT problem.  The result now follows from the
   one-in-three SAT problem being NP-complete and the construction of
   $\sdelta$ and $\sigma$ taking polynomial time.
 \end{proof}

 The following brings together these results.

\begin{theorem}
  The membership problem for \qc is NP-complete.
\end{theorem}

\subsection{Upper bound for restricted \qc}
\label{sec:bounded-qc}

We now consider a restricted version of \qc that assumes an upper
limit on the number of events between two quiescent states.  It turns
out that the membership problem under this assumption is polynomial
with respect to the size of the specification $\calS$ and the length
of $\sigma$.  To prove this, we convert the membership problem into
the problem of deciding whether two finite automata define a common
word, which is a problem that can be solved in polynomial time. In
particular, for a given \trace $\sigma \in L(\calQ)$, we construct a
finite automaton $\machine \sigma$ (see \refdef{def:machine}) such
that $\sigma \in \cl(\sdelta)$ iff $L(\machine \sigma) \cap L(\calS)$
is non-empty.

 The proof (in particular, the construction of $\machinea \sigma$ in
 \refdef{def:MU}) requires that it is possible to distinguish each
 event within $\sigma$. Therefore, we introduce the following
 assumption. 
 
 \begin{assumption}
   For any $\sigma \in \calQ$, the events within $\sigma$ are unique,
   i.e., if $\sigma = e_1 \dots e_k$, for all $1 \le i < j \leq k$, we
   have $e_i \neq e_j$.
 \end{assumption}
 This assumption can be trivially ensured, for example, by labelling
 the events.  

 For any \trace $\sigma$ whose events are unique, we define a machine
 $\machinea \sigma$ that is a finite automaton for $\sigma$ that
 accepts any permutation of the events in $\sigma$.
 \begin{definition}
   \label{def:MU}
   Suppose $\sigma = e_1 \dots e_k$ is a \trace whose events are
   unique. We let $\machinea {\sigma}$ be the finite automaton $ (
   2^\Sigma, \emptyset, \Sigma, t, \{\Sigma\} ) $ where $\Sigma =
   \{e_1, \ldots, e_k\}$ and for all $T, T' \in 2^\Sigma$, we have
   $(T,e,T') \in t$ iff $e \not \in T$ and $T' = T \cup \{e\}$.
 \end{definition}
 Note that the construction $\machinea \sigma$ is generic, but we only
 use it in situations where $\sigma$ is legal and end-to-end quiescent.

 Next, we define $\machine \sigma$ for \traces $\sigma \in
 L(\pdelta)$. We use $L_1 \cdot L_2$ to denote the \emph{language
   product} of languages $L_1$ and $L_2$ and for FA $\calA$ and $\calB$, we let
 $\calA \cdot \calB$ be the FA such that $L(\calA \cdot \calB) = L(\calA) \cdot L(\calB)$. In
 what follows, $\calA$ only has one final state ($\calA$ is $\machinea {\sigma}$
 for some $\sigma$), and hence, we can construct $\calA \cdot \calB$ by adding
 an empty transition from the final state of $\calA$ to the initial state
 of $\calB$.

 \begin{definition}
   \label{def:machine}
   For \trace $\sigma = \delta \sigma_1 \delta \sigma_2 \delta \ldots
   \delta \sigma_k \delta $ such that $\ete{\sigma_i}$ for each $1
   \leq i \leq k$, we let $ \machine \sigma = \machinea {\sigma_1}
   \cdot \machinea {\sigma_2} \cdot \ldots \cdot \machinea {\sigma_k}.
   $
 \end{definition}

 The next result uses the automata construction in \refdef{def:machine}
 to convert the membership problem into a problem of deciding whether
 two automata accept a common word. Its proof is clear from the
 definitions.
 \begin{proposition}
   For any $\sigma \in L(\pdelta)$, we have $\sigma \in \cl(\sdelta)$ iff
   $L(\machine \sigma) \cap L(S) \neq \emptyset$.
 \end{proposition}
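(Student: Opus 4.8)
The plan is to chain together the characterisations already established, so that the statement becomes bookkeeping over the block decomposition. By \reflem{lem:def_membership}, $\sigma \in \cl(\sdelta)$ iff $\sigma$ is allowed by $\calS$ under \qc, so it suffices to show that $\sigma$ is allowed by $\calS$ under \qc iff $L(\machine \sigma) \cap L(\calS) \neq \emptyset$. First I would fix the decomposition: since $\sigma \in L(\pdelta)$, the construction of $\pdelta$ together with \refprop{prop:end-to-end-qu} forces $\sigma = \delta \sigma_1 \delta \sigma_2 \delta \ldots \delta \sigma_k \delta$ with $\ete{\sigma_i}$ for each $i$, and this is exactly (and uniquely, since each $\sigma_i$ is end-to-end quiescent) the decomposition $\pi_\Sigma(\sigma) = \sigma_1 \sigma_2 \ldots \sigma_k$ referred to in \refdef{def:quiesc-cons}. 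Unfolding \refdef{def:allowed-delta} and \refdef{def:quiesc-cons}, $\sigma$ is allowed by $\calS$ under \qc iff there exist traces $\sigma_1', \ldots, \sigma_k'$ with $\sigma_i' \sim_U \sigma_i$ for each $i$ and $\sigma_1' \sigma_2' \ldots \sigma_k' \in L(\calS)$.

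The key observation is that, restricted to $\delta$-free words, $\sim_U$-equivalence is just permutation-equivalence. Indeed $U = \Sigma \times \Sigma$ makes every pair of non-$\delta$ events commute, and no $\sigma_i$ contains $\delta$, so the congruence class of $\sigma_i$ under $\sim_U$ is precisely the set of words with the same multiset of events as $\sigma_i$ (the standard fact that in a free commutative monoid a word is determined up to congruence by its Parikh image). Because the events of $\sigma$ are assumed unique, this multiset is a set, and hence $\sigma_i' \sim_U \sigma_i$ holds iff $\sigma_i'$ is a word using each event of $\sigma_i$ exactly once, which by \refdef{def:MU} is iff $\sigma_i' \in L(\machinea{\sigma_i})$. (That $\machinea{\sigma_i}$ also accepts permutations which are not legal is harmless: the intersection with $L(\calS)$ discards them, since by \refasm{asm:legal} every trace of $\calS$ is legal.)

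It then remains to assemble. By the definition of the language product and \refdef{def:machine}, $L(\machine \sigma) = L(\machinea{\sigma_1}) \cdots L(\machinea{\sigma_k})$ is exactly the set of concatenations $\sigma_1' \sigma_2' \ldots \sigma_k'$ with $\sigma_i' \in L(\machinea{\sigma_i})$ for each $i$, and conversely every word of $L(\machine \sigma)$ already comes equipped with such a decomposition, so the two directions match without ambiguity. Combining this with the previous two steps, there exist $\sigma_i' \sim_U \sigma_i$ with $\sigma_1' \ldots \sigma_k' \in L(\calS)$ iff some word of $L(\machine \sigma)$ lies in $L(\calS)$, i.e.\ iff $L(\machine \sigma) \cap L(\calS) \neq \emptyset$. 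The only mildly delicate point is ensuring the block decomposition is used consistently on both sides (and noting the role of event uniqueness in identifying $\sim_U$-classes with $L(\machinea{\sigma_i})$); beyond that the argument is immediate from the definitions, which is why the statement is flagged as clear.
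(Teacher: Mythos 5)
Your argument is correct and is exactly the expansion of what the paper leaves implicit (the paper states only that the proof ``is clear from the definitions''): reduce via \reflem{lem:def_membership} to \refdef{def:allowed-delta}/\refdef{def:quiesc-cons}, identify the $\sim_U$-class of each $\delta$-free block with $L(\machinea{\sigma_i})$ using event uniqueness, and assemble via the language product. No gaps; the remarks on uniqueness of the block decomposition and on illegal permutations being filtered out by the intersection with $L(\calS)$ are exactly the right points to make explicit.
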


 We now arrive at our main result for this section. 
\begin{theorem}
  Suppose that there exists an upper limit $b \in \nat$, such that for
  each $\sigma \in L(\pdelta)$ there are at most $b$ events between
  two occurrences of $\delta$ in $\sigma$. Then the membership problem
  for \qc is in PTIME. 
\end{theorem}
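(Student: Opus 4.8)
The plan is to use the automata construction $\machine{\sigma}$ from \refdef{def:machine} together with the preceding proposition, which already reduces membership to testing whether $L(\machine{\sigma}) \cap L(\calS) \neq \emptyset$. So it suffices to show that, under the bound $b$, the automaton $\machine{\sigma}$ can be built in time polynomial in $|\sigma|$ and $|\calS|$, and that emptiness-of-intersection for two finite automata is decidable in polynomial time (this is the standard product-automaton construction). The one subtlety is that the generic $\machinea{\sigma_i}$ in \refdef{def:MU} has state space $2^\Sigma$, which is exponential; the point of the bound $b$ is precisely that each end-to-end quiescent block $\sigma_i$ has length at most $b$, so $\machinea{\sigma_i}$ actually only needs states that are subsets of the $\le b$ events appearing in $\sigma_i$, giving at most $2^b$ states — a \emph{constant}, since $b$ is fixed.

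First I would spell out the size bookkeeping. Write $\sigma = \delta\,\sigma_1\,\delta\,\sigma_2\,\delta\ldots\delta\,\sigma_k\,\delta$ with each $\ete{\sigma_i}$; then $|\sigma_i| \le b$ and $k \le |\sigma|$. For each $i$, the reachable part of $\machinea{\sigma_i}$ has at most $2^b$ states and at most $b \cdot 2^b$ transitions, all constructible in $O(b\,2^b)$ time, i.e. constant time per block since $b$ is fixed. Concatenating the $k$ blocks (adding an empty transition from the single final state of one to the initial state of the next, as described after \refdef{def:machine}) yields $\machine{\sigma}$ with $O(k\,2^b) = O(|\sigma|)$ states and transitions, built in $O(|\sigma|)$ time.

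Next I would invoke the standard fact that, given two finite automata $\calA$ and $\calB$, one can decide $L(\calA) \cap L(\calB) = \emptyset$ by constructing the product automaton on state set (states of $\calA$) $\times$ (states of $\calB$) and checking reachability of an accepting pair via BFS/DFS; this runs in time polynomial in $|\calA|\cdot|\calB|$. Applying this to $\calA = \machine{\sigma}$ (size $O(|\sigma|)$) and $\calB = \calS$ gives a running time polynomial in $|\sigma|$ and $|\calS|$. By the proposition preceding the theorem, a positive answer to this emptiness test is equivalent to $\sigma \in \cl(\sdelta)$, which by \reflem{lem:def_membership} is equivalent to $\sigma$ being allowed by $\calS$ under \qc. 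Hence the membership problem is in PTIME.

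I do not expect a genuine obstacle here — the theorem is essentially a size argument riding on results already established. The only place requiring care is making explicit that the bound $b$ collapses the $2^\Sigma$ state space of \refdef{def:MU} down to the $\le 2^b$ subsets of the events actually occurring in a single block; once that observation is stated, the rest is routine polynomial-time automata manipulation. One should also note that identifying the decomposition into end-to-end quiescent blocks $\sigma_i$ is itself linear-time (scan $\sigma$ tracking the count of pending invocations, cutting at each point where it returns to zero), so no hidden cost is incurred there.
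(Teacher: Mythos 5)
Your proposal is correct and follows essentially the same route as the paper: decompose $\sigma$ into end-to-end quiescent blocks, observe that the bound $b$ makes each $\machinea{\sigma_i}$ of constant size so that $\machine{\sigma}$ is polynomial in $|\sigma|$, and then decide $L(\machine{\sigma}) \cap L(\calS) \neq \emptyset$ via the standard polynomial-time intersection test. Your additional bookkeeping (explicit state counts, the product-automaton reachability check, and the linear-time block decomposition) only makes explicit what the paper leaves implicit.
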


 \begin{proof}
   By \refasm{asm:quiescent-path}, $\sigma$ is quiescent, and by
   \refprop{prop:end-to-end-qu} and the definition of $\pdelta$,
   $\sigma$ can be written as $ \sigma = \delta \sigma_1 \delta
   \sigma_2 \delta \ldots \delta \sigma_k \delta $, where each
   $\sigma_i$ is legal and end-to-end quiescent.

   For each $\sigma_i$, the size of $\machinea {\sigma_i}$ is
   exponential in terms of the length of $\sigma_i$.  If we place an
   upper limit $b$ on the number of events between two occurrences of
   quiescence then the size of $\machinea {\sigma_i}$ is polynomial
   (it is exponential in terms of $b$). Therefore, $\machine \sigma$
   is of polynomial size (the sum of the sizes of the $\machinea
   {\sigma_i}$) and the result follows from it being possible to
   decide whether $L(\machine \sigma) \cap L(S) \neq \emptyset$ in
   time that is polynomial in terms of the sizes of $\calS$ and
   $\machine \sigma$.
 \end{proof}

\section{The correctness problem}
\label{sec:correctness-problem}

For the correctness problem, we might directly compare $L(\calQ)$ and
$L(\calS)$, i.e., require that  $L(\calQ) \subseteq L(\calS)$.
However, this limits the potential for
concurrency --- $\calQ$ would essentially be sequential. The effect of
using a relaxed notions of correctness (such as \qc) is that it allows
$L(\calQ)$ to be compared with $L(\calS)$ using some notion of
\emph{observational equivalence}. Therefore, for \qc, we explore the
following problem: given an implementation $\calQ$ and specification
$\calS$, do we have that $L(\pdelta) \subseteq \cl(\sdelta)$?  We show
that this question is decidable, coNP-hard and in EXPSPACE.

A language is a \emph{rational trace language} if it is defined by a
finite automaton and a symmetric independence relation.  Decidability
of the correctness problem is proved by using the following result
from trace theory \cite{AalbersbergR89}.
 \begin{lemma}
   \label{lem:rat-tr-lang}
   Suppose $\calA$ and $\calB$ are FA with set of events $\Sigma$ and $I
   \subseteq \Sigma \times \Sigma$ is a symmetric independence
   relation. Then, the inclusion $\clI(\calA) \subseteq \clI(\calB)$ is
   decidable iff $I$ is transitive.
 \end{lemma}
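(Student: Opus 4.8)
The plan is to establish the two implications separately; since the paper applies the lemma only with the (trivially transitive) relation $U=\Sigma\times\Sigma$, I would put the weight on the forward implication and treat the converse more briefly. For transitivity $\Rightarrow$ decidability, I would first exploit the structure forced on a symmetric, transitive $I$: together with the diagonal it is an equivalence relation, so $\Sigma$ splits into blocks $\Sigma_1,\dots,\Sigma_m$ with two distinct events independent exactly when they lie in a common block. Hence all letters of a block commute freely while letters of different blocks never commute, and a $\sim_I$-class is determined by the word over the ``block alphabet'' obtained by forgetting block-internal order, together with, at each position of that word, the multiset of letters contributed there. The key technical step is then to prove the two closure properties that fail for non-transitive $I$: the family of languages of the form $\clI(\calM)$ (for an FA $\calM$ over $\Sigma$) is \emph{effectively} closed under intersection and under complement (taken inside $\Sigma^{*}$, which preserves $\sim_I$-closedness). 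Intersection is a product construction on the block-level automata combined with intersection of the semilinear conditions describing the admissible per-position Parikh vectors (semilinearity coming from Parikh's theorem); complement is obtained by determinising and complementing the block-level automaton and complementing the semilinear side conditions, using that semilinear sets are effectively closed under Boolean operations. With these in hand the decision procedure is immediate: $\clI(\calA)\subseteq\clI(\calB)$ iff $\clI(\calA)\cap\bigl(\Sigma^{*}\setminus\clI(\calB)\bigr)=\emptyset$; the left-hand side is again $\clI(\calM)$ for an effectively constructible FA $\calM$, and $\clI(\calM)=\emptyset$ iff $L(\calM)=\emptyset$, which is decidable. (In the case $I=U$ needed by the paper this collapses to a one-liner: $\clI(\calA)\subseteq\clI(\calB)$ iff the Parikh image of $L(\calA)$ is contained in that of $L(\calB)$, and inclusion of semilinear sets is decidable.)

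For non-transitivity $\Rightarrow$ undecidability, I would fix pairwise distinct letters $a,b,c$ with $(a,b),(b,c)\in I$ but $(a,c)\notin I$ (such a triple exists since $I$ is not transitive). Restricting attention to FA whose labels use only $a,b,c$ is harmless, since $\rightarrow_I$ only permutes letters and so preserves the set of letters used; on such labels $b$ commutes with both $a$ and $c$ whereas $a$ and $c$ are rigid, so a trace over $\{a,b,c\}$ is faithfully described by the pair consisting of its $\{a,c\}$-subword and the number of $b$'s it contains. The plan is then to reduce a standard undecidable problem --- Post's Correspondence Problem, or the halting problem for two-counter machines --- to trace inclusion over this sub-alphabet, using the freely floating $b$'s as movable tokens or counter increments and the rigid order of the $a$'s and $c$'s to carry the ``solution word'' or tape contents, arranging the two automata so that the inclusion holds exactly when the encoded instance is negative. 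A decision procedure for inclusion would then decide the chosen problem, a contradiction.

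The step I expect to be the main obstacle is the effective complementation of $\clI(\calM)$ in the transitive case: the complement encodes a \emph{universal} condition (``every $\sim_I$-reordering of the word lies outside $L(\calM)$'') and cannot be produced by naive automaton surgery --- indeed $\clI\bigl((ab)^{*}\bigr)=\{w\in\{a,b\}^{*}:|w|_a=|w|_b\}$ already when $a,b$ share a block, so $\clI(\calM)$ need not be a regular subset of $\Sigma^{*}$ at all; it is the block-automaton-plus-semilinear-constraint normal form, together with the fact that it is closed under complement precisely when $I$ is transitive, that does the real work. For the converse the delicate point is purely combinatorial: designing the encoding so that no information is lost when the $b$'s are permuted past the $a$'s and $c$'s; the rest is routine.
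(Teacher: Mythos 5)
First, a point of orientation: the paper does not prove this lemma at all --- it is imported as a black box from the trace-theory literature \cite{AalbersbergR89}, so you are reconstructing a known external result rather than competing with an in-paper argument. Your forward direction follows the standard route and is sound in outline: a symmetric, transitive $I$ (taken modulo the diagonal) partitions $\Sigma$ into commuting blocks, the trace monoid becomes a free product of free commutative monoids, rational subsets acquire a normal form as block-level automata whose transitions carry semilinear constraints, and effective closure under Boolean operations is inherited from the Ginsburg--Spanier effective Boolean algebra of semilinear sets; emptiness of $\clI(\calA)\cap\overline{\clI(\calB)}$ then decides inclusion. You correctly identify complementation as the crux and correctly observe that $\clI(\calM)$ need not be regular. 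The one technical point you elide is that ``determinise and complement'' at the block level only works after the semilinear edge labels have been refined so that, at each state and for each block, the outgoing labels are pairwise disjoint and jointly cover the block's free commutative monoid; without that normalisation, complementing the automaton does not complement the language. Your parenthetical for $I=U$ (inclusion of Parikh images) is correct and is in fact the only instance of the ``if'' direction the paper ever uses --- it reappears later as \refprop{QC_checkinclusion1}.

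The converse is where there is a genuine gap: what you call ``routine'' is the hard part. A failure of transitivity on distinct letters hands you exactly the submonoid $\{a,c\}^{*}\times\{b\}^{*}$ --- a rigid word over $\{a,c\}$ plus a \emph{single} commuting counter of $b$'s. Undecidability of inclusion (indeed of equivalence) for rational subsets of $A^{*}\times\{b\}^{*}$ is equivalent to undecidability of equivalence for $\varepsilon$-free nondeterministic generalized sequential machines with unary output, a nontrivial theorem of Ibarra on which the Aalbersberg--Hoogeboom proof of this very lemma rests; neither of your proposed sources reduces to it in an obvious way. A two-counter machine needs two independent counters, but you have only one block of floating letters; a naive PCP encoding wants to compare two rigid words, but only one rigid component is available. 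So ``arranging the two automata so that inclusion holds exactly when the encoded instance is negative'' is precisely the step that carries all the difficulty, and it is missing. A smaller issue with the same direction: your witness triple requires three \emph{pairwise distinct} letters, but if $I$ is irreflexive the literal relation can fail transitivity already via $(a,b),(b,a)\in I$, $(a,a)\notin I$ with no genuine triple present --- and in that case inclusion is still decidable. The lemma has to be read with ``transitive'' meaning transitive after adjoining the diagonal (equivalently, on triples of distinct letters), which is how both halves of your argument implicitly, but not explicitly, read it.
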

 
 The following is an immediate consequence.

\begin{theorem}
  \label{thm:correctness-decidable}
  $L(\pdelta) \subseteq \cl(\sdelta)$ is decidable.
\end{theorem}

\begin{proof}
  The independence relation $U = \Sigma \times \Sigma$ is transitive.
  This result thus follows from Lemma \ref{lem:rat-tr-lang} and the
  fact that $L(\pdelta) \subseteq \cl(\sdelta)$ iff $\cl(\pdelta)
  \subseteq \cl(\sdelta)$.
\end{proof}

%



We now explore the complexity of the correctness problem, which is
equivalent to the complexity of deciding whether the inclusion
$\cl(\pdelta) \subseteq \cl(\sdelta)$ holds. We show that this problem
is coNP-hard by considering the problem of deciding inclusion of the
set of Parikh images of regular languages.  For the rest of this
section we assume that $\calA$ and $\calB$ are FA.

\subsection{Lower bound for unrestricted \qc}
 Given alphabet $\Sigma = \{e_1, \ldots, e_k\}$ and $\sigma \in
 \Sigma^*$, the \emph{Parikh image} of $\sigma$ is the tuple $(n_1,
 \ldots, n_k)$ such that $\sigma$ contains exactly $n_i$ instances of
 $e_i$ (all $1 \le i \le k$). 
 We use $\parikh \calA$ to denote the set of Parikh images of the \traces in $L(\calA)$
 and the inclusion problem for Parikh images is to decide whether
 $\parikh \calA \subseteq \parikh \calB$. Deciding inclusion for the Parikh
 images of regular languages is known to be
 coNP-hard (even if the size of the alphabets of both $\calA$ and $\calB$ are
 fixed) \cite{KopczynskiT10}. 

 To use the coNP-hard result for Parikh images, we construct FA
 $\calA'$ and $\calB'$ from $\calA$ and $\calB$ such that $\parikh
 \calA \subseteq \parikh \calB$ iff $\cl(\calA'_\delta) \subseteq
 \cl(\calB'_\delta)$, where $\calA'_\delta$ (and $\calB'_\delta$)
 extends $\calA'$ (resp. $\calB'$) with $\delta$ events and transitions
 as defined in \refsec{sec:quiesc-cons}.  
 Suppose $\Sigma$ is the alphabet of both $\calA$ and $\calB$.  For
 each $x \in \Sigma$ we define an invoke event $e_x$ and corresponding
 response event $\rete_x$. We also include an additional invoke event
 $e$ and corresponding response $\rete$ that do not correspond to any
 $x \in \Sigma$ and hence, the resulting event set is:
 \[\Gamma =
 \{e,\rete\} \cup \{e_x \mid x \in \Sigma\}\cup \{\rete_x \mid x\in \Sigma\}
 \]

 \noindent To construct FA $\calA'$, we initialise the state set of
 $A'$ to the state set of $A$ and the event set of $\calA'$ to
 $\Gamma$. We then modify $A'$ and construct the initial state,
 transitions, and final states of $\calA'$ as follows.
 \begin{enumerate}
 \item 
   For the initial state $q_0$ of $\calA$, add a new state $q'_0 \notin
   A$ to $A'$, make $q'_0$ the initial state of $\calA'$, and add
   the transition $(q'_0,e,q_0)$ to $\calA'$.
 \item For each transition $t = (q,x,q')$ in $\calA$, add transitions
   $(q,e_x,q_t)$ and $(q_t,\rete_x,q')$ in $\calA'$, where $q_t \notin
   A$, then add $q_t$ to $A'$.
 \item Add a state $q_F \notin A$ to $A'$, make this the only
   final state, and from every final state $q$ of $\calA$, add the
   transition $(q,\rete,q_F)$.
 \end{enumerate}
 We have the following relationship between $L(\calA)$ and $L(\calA')$.
 \begin{proposition}\label{prop:A-A'} 
   $x_1 x_2 \ldots x_k \in L(\calA)$ iff $e\,e_{x_1} \rete_{x_1}\allowbreak
   e_{x_2} \rete_{x_2} \allowbreak \ldots e_{x_k} \rete_{x_k} \rete \in
   L(\calA')$.
 \end{proposition}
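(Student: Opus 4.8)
The plan is to prove Proposition~\ref{prop:A-A'} by unwinding the construction of $\calA'$ and exhibiting an explicit bijection between accepting paths of $\calA$ and accepting paths of $\calA'$. First I would observe that every transition $t = (q,x,q')$ of $\calA$ has, by construction step~2, been replaced in $\calA'$ by the two-step segment $(q,e_x,q_t)(q_t,\rete_x,q')$ passing through the fresh intermediate state $q_t$; and that, because each $q_t \notin A$ is introduced for exactly one transition and has exactly one incoming and one outgoing edge, there is \emph{no other} way for a path of $\calA'$ to enter or leave such a state. Likewise, by steps~1 and~3, the fresh initial state $q'_0$ has the single outgoing edge $(q'_0,e,q_0)$, and the fresh (unique) final state $q_F$ has incoming edges only of the form $(q,\rete,q_F)$ from final states $q$ of $\calA$.

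The forward direction is then straightforward: given an accepting path $\rho$ of $\calA$ with $label(\rho) = x_1 x_2 \ldots x_k$, say $\rho = (q_0,x_1,p_1)(p_1,x_2,p_2)\cdots(p_{k-1},x_k,p_k)$ with $q_0$ the initial state and $p_k \in Q_\dagger$, we build the path of $\calA'$ that starts with $(q'_0,e,q_0)$, then for each $i$ uses the segment $(p_{i-1},e_{x_i},q_{t_i})(q_{t_i},\rete_{x_i},p_i)$ corresponding to the $i$-th transition, and finishes with $(p_k,\rete,q_F)$; this path starts in the initial state of $\calA'$, ends in its unique final state, and has label $e\,e_{x_1}\rete_{x_1}\cdots e_{x_k}\rete_{x_k}\rete$, as required.

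For the converse, suppose $e\,e_{x_1}\rete_{x_1}\cdots e_{x_k}\rete_{x_k}\rete \in L(\calA')$ via some accepting path $\rho'$. Since the only edge labelled $e$ is $(q'_0,e,q_0)$, $\rho'$ must begin there; since the only edges labelled $\rete$ go into $q_F$ from a final state of $\calA$, $\rho'$ must end with $(q,\rete,q_F)$ for some $q \in Q_\dagger$; and the intermediate label $e_{x_1}\rete_{x_1}\cdots e_{x_k}\rete_{x_k}$ forces $\rho'$ to traverse, in order, $k$ segments each of which — because every state $q_t$ is entered only by $(q,e_x,q_t)$ and left only by $(q_t,\rete_x,q')$ for the \emph{same} original transition $t=(q,x,q')$ — must be exactly $(q,e_{x_i},q_{t_i})(q_{t_i},\rete_{x_i},q')$ coming from a transition $t_i = (q, x_i, q')$ of $\calA$. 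Concatenating these original transitions yields a path of $\calA$ from $q_0$ to $q \in Q_\dagger$ with label $x_1 x_2 \ldots x_k$, so $x_1 x_2 \ldots x_k \in L(\calA)$. The only mildly delicate point — the ``main obstacle,'' though it is minor — is making the segment-decomposition argument precise: one must check that the freshness of the auxiliary states $q'_0$, the $q_t$, and $q_F$ genuinely prevents any ``shortcut'' or interleaving, i.e.\ that a path reading the given label cannot, say, re-enter some $q_t$ via a different transition or pass through $q_F$ in the middle; this follows immediately from the disjointness conditions $q'_0,q_t,q_F \notin A$ and the fact that each $q_t$ is associated with a unique transition, but it is the step that needs to be spelled out rather than asserted.
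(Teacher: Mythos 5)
Your proof is correct, and it is exactly the argument the paper leaves implicit: Proposition~\ref{prop:A-A'} is stated without proof as an immediate consequence of the construction of $\calA'$, and your path-bijection argument (including the observation that the freshness and degree-one nature of the states $q'_0$, $q_t$, $q_F$ rule out shortcuts) is the intended justification. No gaps.
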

 One important property of $\calA'$ is that every $\sigma \in L(\calA')$ is
 end-to-end quiescent. Thus, under \qc, 
 $\sigma$ is allowed by the specification of $\calA'$ iff some permutation
 of $\sigma$ is in the language defined by the specification.

 FA $\calB'$ is constructed as follows. Initialise the state set of $B'$ to the state set of $B$ and
 the event set of $\calB'$ to $\Gamma$, then set the initial state of $\calB$
 as the initial state of $\calB'$. Then perform the following.
 \begin{enumerate}
 \item For each transition $t = (q,x,q')$ in $\calB$, add transitions
   $(q,e_x,q_t)$ and $(q_t,\rete_x,q')$ to $\calB'$ for a new state $q_t
   \notin B'$, then add $q_t$ to $B'$.
 \item Add new states $q''$ and $q_F$ to $B'$, then for every final
   state $q$ of $\calB$ add transitions $(q,e,q'')$ and $ (q'',\rete,q_F)$
   to $\calB'$.  Finally, make $q_F$ the only final state of $\calB'$.
 \end{enumerate}
 We have the following relationship between $L(\calB)$ and $L(\calB')$.
 \begin{proposition}\label{prop:B-B'}
   $x_1x_2 \ldots x_k \in L(\calB)$ iff $e_{x_1} \rete_{x_1} \allowbreak
   e_{x_2} \rete_{x_2} \allowbreak \ldots e_{x_k} \rete_{x_k}
   \allowbreak e\, \rete \in L(\calB')$.
 \end{proposition}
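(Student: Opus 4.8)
The plan is to establish a correspondence between accepting paths of $\calB$ and accepting paths of $\calB'$ labelled by the displayed word, handling each direction of the biconditional in turn. For the forward direction, suppose $x_1 x_2 \cdots x_k \in L(\calB)$, witnessed by a path of $\calB$ through states $p_0, p_1, \ldots, p_k$, where $p_0$ is the initial state, $p_k$ is a final state, and each $t_i = (p_{i-1}, x_i, p_i)$ is a transition of $\calB$. By step~1 of the construction, $\calB'$ contains the transitions $(p_{i-1}, e_{x_i}, q_{t_i})$ and $(q_{t_i}, \rete_{x_i}, p_i)$ for each $i$, and by step~2 it contains $(p_k, e, q'')$ and $(q'', \rete, q_F)$ because $p_k$ is a final state of $\calB$. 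Splicing these transitions together produces a path of $\calB'$ from its initial state $p_0$ to $q_F$ with label $e_{x_1}\rete_{x_1}\cdots e_{x_k}\rete_{x_k}\,e\,\rete$; since $q_F$ is the unique final state of $\calB'$, this word lies in $L(\calB')$.

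For the converse, I would use the fact that each state $q_t$ introduced in step~1 is private to the single transition $t$ of $\calB$: if $t = (p,x,p')$, then $q_t$ has exactly one incoming transition of $\calB'$ --- labelled $e_x$, from $p$ --- and exactly one outgoing transition --- labelled $\rete_x$, to $p'$. Similarly, $q''$ is entered only by an $e$-transition from a final state of $\calB$ and left only by a $\rete$-transition to $q_F$, and $q_F$ has no outgoing transitions. Hence any accepting path of $\calB'$ with label $e_{x_1}\rete_{x_1}\cdots e_{x_k}\rete_{x_k}\,e\,\rete$ must, immediately after reading each block $e_{x_i}\rete_{x_i}$, be at some original state $p_i$ of $\calB$, and the two transitions used inside that block must come from a single transition $t_i = (p_{i-1},x_i,p_i)$ of $\calB$; the trailing $e\,\rete$ then forces $p_k$ to be a final state of $\calB$. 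Removing the decorations leaves a path of $\calB$ from its initial state $p_0$ to a final state $p_k$ with label $x_1 \cdots x_k$, so $x_1 \cdots x_k \in L(\calB)$.

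The crux is the uniqueness of this block decomposition in the converse direction, and this is exactly what the ``new state $q_t \notin B'$'' clause in step~1 buys: because no intermediate state is shared between two transitions of $\calB$, reading $e_x$ from an original state commits the run to one transition $t$ of $\calB$, after which the matching $\rete_x$ lands deterministically at its target, so no genuine case analysis is needed beyond this. The argument mirrors the one required for \refprop{prop:A-A'}, the only structural differences being the placement of the $e$/$\rete$ gadget --- at the start for $\calA'$ and at the end for $\calB'$ --- and that $\calA'$ additionally acquires a fresh initial state.
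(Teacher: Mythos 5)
Your proof is correct and is exactly the path-correspondence argument the paper implicitly intends: the paper states \refprop{prop:B-B'} without proof, treating it as immediate from the construction, and your two directions (splicing the gadget transitions for the forward direction, and using the privacy of each intermediate state $q_t$ to force the unique block decomposition for the converse) supply precisely the details being elided. No gaps.
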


 The next lemma links inclusion of Parikh images for $\calA$ and $\calB$ to
 inclusion of the languages of $\calA'_\delta$ and $\calB'_\delta$ under
 independence relation $U$. 
 
 \begin{lemma}\label{lemma_inclusion_Parikh}
   $\parikh \calA \subseteq \parikh \calB$ iff
   $\cl(\calA'_\delta) \subseteq \cl(\calB'_\delta)$.
 \end{lemma}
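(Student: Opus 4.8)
The plan is to prove the biconditional in \reflem{lemma_inclusion_Parikh} by chaining together the structural correspondences already established in Propositions~\ref{prop:A-A'} and~\ref{prop:B-B'} with the observation that, under the universal independence relation $U$, two traces over $\Gamma$ are $\sim_U$-equivalent precisely when they have the same Parikh image (since $\delta$ does not occur in the relevant traces and every pair of non-$\delta$ events commutes). First I would unwind the definitions: by the construction of $\delta$-extensions in \refsec{sec:quiesc-cons}, and because every $\sigma \in L(\calA')$ is end-to-end quiescent (as noted just after \refprop{prop:A-A'}), we have $\cl(\calA'_\delta) = \{\delta\,\sigma'\,\delta \mid \sigma \in L(\calA'),\ \sigma' \sim_U \sigma\}$, and similarly for $\calB'$. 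So the inclusion $\cl(\calA'_\delta) \subseteq \cl(\calB'_\delta)$ reduces to: for every $\sigma \in L(\calA')$ there exists $\tau \in L(\calB')$ with $\sigma \sim_U \tau$, i.e.\ with the same Parikh image.

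Next I would handle the forward direction. Suppose $\parikh \calA \subseteq \parikh \calB$ and take any $\sigma \in L(\calA')$; by \refprop{prop:A-A'}, $\sigma = e\,e_{x_1}\rete_{x_1}\ldots e_{x_k}\rete_{x_k}\,\rete$ for some $x_1 \ldots x_k \in L(\calA)$. Its Parikh image over $\Gamma$ records one $e$, one $\rete$, and $n_x$ copies of each $e_x$ and of each $\rete_x$, where $(n_x)_{x\in\Sigma}$ is the Parikh image of $x_1\ldots x_k$ in $L(\calA)$. By hypothesis this tuple lies in $\parikh \calB$, so there is a word $y_1\ldots y_m \in L(\calB)$ with the same Parikh image (hence $m=k$ and the multiset of $y_j$ equals that of $x_i$); by \refprop{prop:B-B'}, $\tau = e_{y_1}\rete_{y_1}\ldots e_{y_k}\rete_{y_k}\,e\,\rete \in L(\calB')$. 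Since $\sigma$ and $\tau$ agree in Parikh image over $\Gamma$ (both have one $e$, one $\rete$, and matching counts of the $e_x$'s and $\rete_x$'s), we get $\sigma \sim_U \tau$ and therefore $\delta\,\sigma\,\delta \in \cl(\calB'_\delta)$; as every element of $\cl(\calA'_\delta)$ has this form, the inclusion follows. For the reverse direction, suppose $\cl(\calA'_\delta)\subseteq\cl(\calB'_\delta)$ and let $(n_x)_{x\in\Sigma}\in\parikh\calA$, witnessed by $x_1\ldots x_k\in L(\calA)$. Then $\sigma = e\,e_{x_1}\rete_{x_1}\ldots e_{x_k}\rete_{x_k}\,\rete\in L(\calA')$, so $\delta\sigma\delta\in\cl(\calA'_\delta)\subseteq\cl(\calB'_\delta)$, which by the characterisation above means some $\tau\in L(\calB')$ satisfies $\tau\sim_U\sigma$; by \refprop{prop:B-B'}, $\tau$ corresponds to some $y_1\ldots y_k\in L(\calB)$, and equality of Parikh images gives that $y_1\ldots y_k$ has Parikh image $(n_x)_{x\in\Sigma}$, so $(n_x)_{x\in\Sigma}\in\parikh\calB$.

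The only mildly delicate point — and the place I would be most careful — is the exact shape of $\cl(\calA'_\delta)$ and $\cl(\calB'_\delta)$: one must check that the auxiliary events $e,\rete$ and the matched pairs $e_x,\rete_x$ genuinely force the $\delta$-extensions to be $\delta\,(\cdot)\,\delta$ with no internal $\delta$, which is exactly the ``end-to-end quiescent'' observation recorded after \refprop{prop:A-A'} (and the analogous fact for $\calB'$, where the trailing $e,\rete$ plays the role of making the whole trace end-to-end quiescent). Given that, the argument is purely a matter of translating ``same Parikh image'' back and forth through the two propositions, and there is no real obstacle; the lemma then feeds directly into the coNP-hardness conclusion via the known coNP-hardness of Parikh-image inclusion for regular languages.
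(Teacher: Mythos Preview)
Your proof follows essentially the same route as the paper's, and the forward and reverse arguments via \refprop{prop:A-A'} and \refprop{prop:B-B'} are correct. One point needs tightening, however: your claim that $\cl(\calB'_\delta)$ has the same form as $\cl(\calA'_\delta)$, justified by the trailing $e,\rete$ ``making the whole trace end-to-end quiescent'', is not right. In $\calB'$ the pair $e,\rete$ occurs at the \emph{end}, so the trace is already quiescent after each $\rete_{x_i}$; since $\calB'$ plays the role of the specification (and thus receives $\delta$ self-loops at every quiescent state, per \refsec{sec:quiesc-cons}), $L(\calB'_\delta)$ and hence $\cl(\calB'_\delta)$ contain traces with arbitrarily many internal $\delta$'s. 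So the equality you assert for $\calB'_\delta$ is false, and your reverse direction invokes it directly.

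This does not actually break the argument, because what your reduction needs is only two one-sided facts about $\calB'_\delta$: first, if $\tau \in L(\calB')$ then $\delta\tau\delta \in L(\calB'_\delta)$ (take the $\delta$ self-loops at the initial and final states), so any $\delta\sigma'\delta$ with $\sigma'\sim_U\tau$ lies in $\cl(\calB'_\delta)$; second, if $\delta\sigma\delta \in \cl(\calB'_\delta)$ then the witnessing word in $L(\calB'_\delta)$ must itself have the form $\delta\tau\delta$ (since $\delta$ commutes with nothing and $\sigma$ contains no $\delta$), and stripping the two self-loop $\delta$'s gives $\tau \in L(\calB')$ with $\tau\sim_U\sigma$. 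With these in hand your reduction ``$\cl(\calA'_\delta)\subseteq\cl(\calB'_\delta)$ iff every $\sigma\in L(\calA')$ is $\sim_U$-equivalent to some $\tau\in L(\calB')$'' is valid, and the remainder of the proof matches the paper's.
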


 \begin{proof}
   First assume $\parikh \calA \subseteq \parikh \calB$.  Suppose that
   $\sigma \in \cl(\calA'_{\delta})$; it is sufficient to prove that
   $\sigma \in \cl(\calB'_{\delta})$.  By \refprop{prop:A-A'} there is
   some $x_1x_2 \ldots x_k \in L(\calA)$ such that $\sigma \sim_U \delta
   \sigma' \delta$, where $\sigma' = e\, e_{x_1} \rete_{x_1}
   \allowbreak e_{x_2}\rete_{x_2} \allowbreak \ldots \allowbreak
   e_{x_k}\rete_{x_k} \allowbreak \rete$.  Since $\parikh \calA
   \subseteq \parikh \calB$ we have that $L(\calB)$ contains a
   permutation $y_1 \ldots y_k$ of $x_1 \ldots x_k$.  By
   \refprop{prop:B-B'}, $e_{y_1}\rete_{y_1} \allowbreak e_{y_2}
   \rete_{y_2} \allowbreak \ldots \allowbreak e_{y_k} \rete_{y_k}
   \allowbreak e\, \rete \in L(\calB')$ and so we also have that
   $\delta \sigma'' \delta \in \cl(\calB'_{\delta})$ where $\sigma'' =
   e_{y_1} \rete_{y_1} \allowbreak e_{y_2} \rete_{y_2} \allowbreak
   \ldots \allowbreak e_{y_k} \rete_{y_k} \allowbreak e\, \rete$. As
   $y_1 \ldots y_k$ is a permutation of $x_1 \ldots x_k$, $\sigma''
   \sim_U \sigma'$. Since $\delta \sigma'' \delta \in
   \cl(\calB'_{\delta})$ and $\sigma'' \sim_U \sigma'$ we have that
   $\delta \sigma' \delta \in \cl(\calB'_{\delta})$.  Thus, since
   $\sigma = \delta \sigma' \delta$, we have that $\sigma \in
   \cl(\calB'_{\delta})$ as required.

   Now assume $\cl(\calA'_{\delta}) \subseteq \cl(\calB'_{\delta})$. Suppose
   that $\gamma \in \parikh \calA$ and so there is some $\sigma' = x_1
   \ldots x_k$ in $L(\calA)$ with Parikh Image $\gamma$.  By
   \refprop{prop:A-A'}, $e\, e_{x_1} \rete_{x_1}\allowbreak e_{x_2}
   \rete_{x_2} \allowbreak \ldots \allowbreak e_{x_k}
   \rete_{x_k}\allowbreak \rete \in L(\calA')$.  Thus, $\delta\, e\,
   e_{x_1} \rete_{x_1} \allowbreak e_{x_2} \rete_{x_2}\allowbreak
   \ldots \allowbreak e_{x_k} \rete_{x_k}\allowbreak \rete\, \delta \in
   \cl(\calA'_{\delta})$.  Since $\cl(\calA'_{\delta}) \subseteq
   \cl(\calB'_{\delta})$, $\delta\, e\, e_{x_1} \rete_{x_1} \allowbreak
   e_{x_2} \rete_{x_2} \allowbreak \ldots\allowbreak e_{x_k}
   \rete_{x_k}\allowbreak \rete\, \delta \in \cl(\calB'_{\delta})$.  By
   construction, this implies that $e_{y_1} \rete_{y_1} \allowbreak
   e_{y_2} \rete_{y_2} \allowbreak \ldots \allowbreak e_{y_k}
   \rete_{y_k} \allowbreak e\, \rete \in L(\calB')$ for some permutation
   $y_1 \ldots y_k$ of $x_1 \ldots x_k$.  By \refprop{prop:B-B'} we
   therefore know that $y_1 \ldots y_k \in L(\calB)$.  Finally, since $y_1
   \ldots y_k$ and $x_1 \ldots x_k$ are permutations of one another
   they have the same Parikh Image and so $\gamma \in \parikh \calB$ as
   required. 
 \end{proof}

We therefore have the following result.

 \begin{theorem}
 The correctness problem for Quiescent Consistency is coNP-hard.
 \end{theorem}

 \begin{proof}
   By \reflem{lemma_inclusion_Parikh} and inclusion of Parikh images
   being coNP-hard. 
 \end{proof}

\subsection{Upper bound for unrestricted \qc}
\label{sec:upper-bound-unrestr}
We now investigate the upper bounds on the complexity of
deciding correctness of \qc and show that the problem is in
EXPSPACE. This proof is much more involved than the lower bound result
as it is necessary to first derive an algorithm for checking
correctness quiescent consistency (see \refalg{algorithm1}) and derive
an upper bound on its running time.  

We start by introducing some new notation. For $m \in M$ and FA $\calM
= (M,m_0,\Sigma,t,M_\dagger)$, we let $\prei{m}{\calM}$ denote the FA
$(M,m,\Sigma,t,M_\dagger)$ formed by replacing the initial state of
$\calM$ by $m$. Furthermore, for $M' \subseteq M$ (recalling that
$\ete{\sigma}$ denotes that $\sigma$ is end-to-end quiescent), we
define:
\begin{align*}
  \trdi{m}{\calM} & {} = \{ \sigma \in \cl(\prei{m}{\calM}) \mid
  \ete{\sigma}\} 
  & \qquad 
  \trdi{M'}{\calM} & {} = \bigcup_{m \in M'} \trdi{m}{\calM}
\end{align*}
Thus, $\trdi{m}{\calM}$ is the set of 
end-to-end quiescent \traces that start in state $m$ of $\calM$.  The
following is immediate from this definition.

\begin{proposition}
If $\calQ$ is a correct implementation of $\calS$ with respect
to \qc and $q_0$ and $s_0$ are the initial states of $\calQ$ and $\calS$
respectively then
$\trdi{q_0}{\calQ} \subseteq \trdi{s_0}{\calS}$.
\end{proposition}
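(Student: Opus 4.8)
The plan is to peel off the definition of $\trdi{q_0}{\calQ}$, reduce the claim to a language-inclusion statement about $\calQ$ and $\calS$ with the $\delta$-events stripped away, and then bridge that statement to the $\delta$-annotated automata using the constructions of $\pdelta$ and $\sdelta$ from \refsec{sec:quiesc-cons} together with \reflem{lem:correctness} and \reflem{lem:def_membership}. First I would note that $\prei{q_0}{\calQ} = \calQ$ and $\prei{s_0}{\calS} = \calS$, so the goal reads: for every end-to-end quiescent \trace $\sigma$, if $\sigma \in \cl(\calQ)$ then $\sigma \in \cl(\calS)$ (the side condition $\ete{\sigma}$ is literally the same on both sides). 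So fix such a $\sigma$; then $\sigma \sim_U \sigma'$ for some $\sigma' \in L(\calQ)$. By \refasm{asm:legal} and \refasm{asm:quiescent-path}, $\sigma'$ is legal and quiescent, so by \refprop{prop:end-to-end-qu} it factors as $\sigma' = \sigma'_1 \sigma'_2 \cdots \sigma'_m$ with every $\sigma'_i$ legal and end-to-end quiescent.

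Next I would lift $\sigma'$ to a \trace of $\pdelta$. The point is that the path of $\calQ$ reading $\sigma'$ visits a quiescent state of $\calQ$ precisely at the boundaries between the $\sigma'_i$: after $\sigma'_1 \cdots \sigma'_i$ the \trace read so far is a concatenation of quiescent \traces and hence quiescent, whereas no non-empty proper prefix of any $\sigma'_i$ is quiescent (that is exactly end-to-end quiescence), and by \refprop{prop:qui-init-path} together with \refasm{asm:quiescent-path} a path from $q_0$ is quiescent iff it ends in a final (quiescent) state. Hence, by the construction of $\pdelta$, a $\delta$ is forced at the start, at each of these $m-1$ internal boundaries, and at the final state, but nowhere inside a block, giving $\delta\,\sigma'_1\,\delta\,\sigma'_2\,\delta \cdots \delta\,\sigma'_m\,\delta \in L(\pdelta)$.

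Then I would invoke correctness. By hypothesis and \reflem{lem:correctness}, $L(\pdelta) \subseteq \cl(\sdelta)$, so $\delta\,\sigma'_1\,\delta \cdots \delta\,\sigma'_m\,\delta \in \cl(\sdelta)$. By \reflem{lem:def_membership} this \trace is allowed by $\calS$ under \qc, which by \refdef{def:allowed-delta} and \refdef{def:quiesc-cons} supplies permutations $\sigma''_i \sim_U \sigma'_i$ (for $1 \le i \le m$) with $\sigma''_1 \sigma''_2 \cdots \sigma''_m \in L(\calS)$. Since $\sim_U$ is a congruence for concatenation, $\sigma''_1 \cdots \sigma''_m \sim_U \sigma'_1 \cdots \sigma'_m = \sigma' \sim_U \sigma$, hence $\sigma \sim_U \sigma''_1 \cdots \sigma''_m \in L(\calS)$, i.e.\ $\sigma \in \cl(\calS)$. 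Combined with $\ete{\sigma}$ (part of the hypothesis $\sigma \in \trdi{q_0}{\calQ}$), this gives $\sigma \in \trdi{s_0}{\calS}$, which is what we wanted.

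The step I expect to be the main obstacle is the second one: justifying that $\delta\,\sigma'_1\,\delta \cdots \delta\,\sigma'_m\,\delta$ is genuinely accepted by $\pdelta$. This rests on two small facts that must be checked carefully — that a concatenation of end-to-end quiescent \traces is quiescent, and that $\pdelta$'s partition of the states of $\calQ$ into quiescent and non-quiescent ones is exactly the one sanctioned by \refprop{prop:qui-init-path}/\refasm{asm:quiescent-path}, so that no spurious $\delta$ is forced in the interior of a block. Everything after that is routine chaining of \reflem{lem:correctness} and \reflem{lem:def_membership}, using that $\sim_U$ never moves a $\delta$ (as $\delta \notin \Sigma$ and $U = \Sigma \times \Sigma$) and is preserved under concatenation, so the blockwise permutations furnished by \refdef{def:quiesc-cons} assemble into the single permutation we need.
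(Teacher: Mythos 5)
Your argument is correct. The paper gives no proof of this proposition --- it is recorded as ``immediate from this definition'' --- so there is no written argument to compare against; what you have produced is the natural unwinding of that claim through \reflem{lem:correctness} and \reflem{lem:def_membership}. The one point that is genuinely not immediate, and which you identify and handle properly, is that a $U$-permutation $\sigma'\in L(\calQ)$ of an end-to-end quiescent $\sigma$ need not itself be end-to-end quiescent, so the corresponding word of $\pdelta$ may take the form $\delta\,\sigma'_1\,\delta\cdots\delta\,\sigma'_m\,\delta$ with $m>1$; your observation that $\sim_U$ is a congruence for concatenation then lets the blockwise permutations supplied by \refdef{def:quiesc-cons} reassemble into a single $U$-permutation of $\sigma$ lying in $L(\calS)$. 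Your justification of the $\delta$-placement (the prefix path is quiescent exactly at block boundaries, by \refasm{asm:quiescent-path} together with \refprop{prop:qui-init-path}, since a concatenation of quiescent \traces is quiescent while no non-empty proper prefix of a block is) is likewise sound, so nothing is missing.
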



We will use an implicit powerset construction when reasoning about
\qc. 
Given states $m, m' \in M$ of $\calM$, sets of states $M_1, M_2
\subseteq M$ and \trace $\sigma$, we define some further notation:
\begin{align*}
  \esti{m}{\calM}{\sigma}{m'} & \text{ iff }
  \begin{array}[t]{@{}l@{}}
  \exists \rho \in Paths(\calM) \, .\, start(\rho) = m \land end(\rho) = m'  \land  label(\rho) \in [\sigma]_U
  \end{array}
  \\
  \estii{M_1}{\calM}{\sigma}{M_2} & \text{ iff } 
  \forall \rho \in Paths(\calM) \, .\,  
  start(\rho) \in M_1 \land label(\rho) \in
  [\sigma]_U  
  \Rightarrow 
  end(\rho) \in M_2
\end{align*}
Thus, $\esti{m}{\calM}{\sigma}{m'}$ holds iff there is some path in
$\calM$ with labels in $[\sigma]_U$ from state $m$ to state
$m'$. Furthermore, $\estii{M_1}{\calM}{\sigma}{M_2}$ holds iff every
path of $\calM$ starting from a state in $M_1$ with label in
$[\sigma]_U$ ends in a state of $M_2$.



If $\calQ$ is not a correct implementation of $\calS$ with respect to
\qc then there must be a quiescent \trace $\sigma$ that demonstrates
this.  We will use the following result, which shows that if there is a counterexample to \qc
then there is one of the form $\sigma = \sigma_1 \dots \sigma_{k+1}$ (where $\xi(\sigma_i)$)
such that $\sigma_{k+1}$ is
the portion of $\sigma$ that is in $\calQ$ but not in $\calS$ (under
independence relation $U$) and $k$ is bounded by $|Q| \cdot
2^{|S|}$.

\begin{proposition}\label{prop:NT}
  $\calQ$ is not a correct implementation of $\calS$ under \qc iff
  there exists some \trace $\sigma = \sigma_1 \ldots \sigma_{k+1}$ for
  end-to-end quiescent $\sigma_1, \ldots, \sigma_{k+1}$ and
  corresponding pairs $(q_0,S_0), \allowbreak (q_1, S_1),\allowbreak
  \ldots,\allowbreak (q_{k},S_{k}) \in Q \times 2^S$ such that $S_0 =
  \{s_0\}$, $\esti{q_{i-1}}{\calQ}{\sigma_i}{q_i}$ and
  $\estii{S_{i-1}}{\calS}{\sigma_i}{S_i}$ (all $1 \leq i \leq k$) such
  that: 

\begin{enumerate}

\item $\sigma_{k+1} \in \trdi{q_k}{\calQ}$ and  $\sigma_{k+1} \not \in
  \trdi{S_k}{\calS}$, and

\item $k \leq |Q| \cdot 2^{|S|}$.

\end{enumerate}
\end{proposition}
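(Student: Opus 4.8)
\medskip

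The plan is to prove both implications by moving between a counterexample trace in $L(\pdelta)$ and a block decomposition $\sigma_1 \ldots \sigma_{k+1}$, to realise the sets $S_i$ by a subset construction on $\calS$, and to get the bound on $k$ by pigeonhole on $Q \times 2^S$. The workhorse is an explicit description of membership in $\cl(\sdelta)$, read off directly from the construction of $\sdelta$: since $\delta$ commutes with nothing and the $\delta$-transitions of $\sdelta$ are exactly self-loops at the quiescent states of $\calS$, a trace $\delta w_1 \delta \ldots \delta w_r \delta$ with the $w_i$ free of $\delta$ lies in $\cl(\sdelta)$ iff there are legal permutations $w_i' \sim_U w_i$ with $w_1' \ldots w_r' \in L(\calS)$. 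Moreover, because $\calS$ is sequential (\refasm{asm:spec-seq}), any $\calS$-path leaving a quiescent state has a legal, quiescent label, so reading labels in $[w_i]_U$ from a quiescent state reaches exactly the least set $T_i$ with $\estii{T_{i-1}}{\calS}{w_i}{T_i}$; hence $\delta w_1 \delta \ldots \delta w_r \delta \in \cl(\sdelta)$ iff, starting from $\{s_0\}$ and iterating this subset operation through $w_1, \ldots, w_r$, one ends in a set meeting $S_\dagger$.

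For the ``if'' direction I would compose the $\calQ$-paths witnessing $\esti{q_{i-1}}{\calQ}{\sigma_i}{q_i}$ ($1 \le i \le k$) with one witnessing $\sigma_{k+1} \in \trdi{q_k}{\calQ}$; their labels are permutations $\tau_i \sim_U \sigma_i$, and by \refasm{asm:legal} each $\tau_i$ is legal, so (as a variant of \refprop{prop:quies-perm}) each $\tau_i$ and each prefix $\tau_1 \ldots \tau_i$ is legal and quiescent, whence by \refasm{asm:quiescent-path} each $q_i$ is a final (quiescent) state of $\calQ$. Thus $\tau := \delta \tau_1 \delta \ldots \delta \tau_{k+1} \delta \in L(\pdelta)$. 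It remains to show $\tau \notin \cl(\sdelta)$: if it were, the description above gives legal permutations $\nu_i \sim_U \tau_i \sim_U \sigma_i$ with $\nu_1 \ldots \nu_{k+1} \in L(\calS)$ along a path $s_0 = s^{(0)}, s^{(1)}, \ldots, s^{(k+1)}$ of $\calS$ with $\nu_i$ the label from $s^{(i-1)}$ to $s^{(i)}$ and $s^{(k+1)} \in S_\dagger$. Since $\nu_i \in [\sigma_i]_U$, an induction on $i$ using the hypothesis $\estii{S_{i-1}}{\calS}{\sigma_i}{S_i}$ gives $s^{(i)} \in S_i$, so $s^{(k)} \in S_k$; but then $\nu_{k+1}$, read from $s^{(k)}$ to a final state, witnesses $\sigma_{k+1} \in \trdi{s^{(k)}}{\calS} \subseteq \trdi{S_k}{\calS}$, contradicting condition~1. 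Hence $L(\pdelta) \not\subseteq \cl(\sdelta)$ and, by \reflem{lem:correctness}, $\calQ$ is not correct.

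For the ``only if'' direction I would take, via \reflem{lem:correctness}, a trace $\sigma \in L(\pdelta) \setminus \cl(\sdelta)$. By the construction of $\pdelta$ and \refprop{prop:end-to-end-qu}, $\sigma = \delta \sigma_1 \delta \ldots \delta \sigma_m \delta$ with each $\sigma_i$ end-to-end quiescent, and an accepting $\pdelta$-path supplies quiescent states $q_0, q_1, \ldots, q_m$ of $\calQ$ with $\calQ$ reading $\sigma_i$ from $q_{i-1}$ to $q_i$; so $\esti{q_{i-1}}{\calQ}{\sigma_i}{q_i}$ holds and, since $q_m$ is final and $\ete{\sigma_m}$, $\sigma_m \in \trdi{q_{m-1}}{\calQ}$. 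Put $k := m-1$, $S_0 := \{s_0\}$, and for $1 \le i \le k$ let $S_i$ be the least set with $\estii{S_{i-1}}{\calS}{\sigma_i}{S_i}$. By the $\cl(\sdelta)$-membership description, $\sigma \in \cl(\sdelta)$ iff the least set reachable from $S_{m-1}$ via $[\sigma_m]_U$-labelled paths meets $S_\dagger$, which is precisely $\sigma_m \in \trdi{S_{m-1}}{\calS}$; since $\sigma \notin \cl(\sdelta)$, condition~1 holds for this decomposition.

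Finally, for condition~2 I would use pigeonhole: the pairs $(q_0,S_0), \ldots, (q_k,S_k)$ lie in $Q \times 2^S$, so whenever $k \ge |Q| \cdot 2^{|S|}$ two coincide, say $(q_i,S_i) = (q_j,S_j)$ with $i < j$, and deleting the blocks $\sigma_{i+1}, \ldots, \sigma_j$ yields a decomposition with strictly smaller $k$ that still satisfies every requirement, since $q_i = q_j$ reglues the $\calQ$-side, $S_i = S_j$ reglues the subset computation, and the carried set $S_k$ (hence condition~1) is unchanged; iterating terminates with $k \le |Q| \cdot 2^{|S|}$. The step I expect to be the crux is the $\cl(\sdelta)$-membership description used throughout, where one must be careful (i) that the blocks are \emph{end-to-end} quiescent, so that the $\delta$-separated form and the block boundaries are forced, and (ii) that the coarse, ``for all paths'' relation $\estii{}{}{}{}$ nevertheless computes exactly the subset of $\calS$-states reachable by \emph{legal} permutations of each block --- a point that genuinely relies on $\calS$ being sequential. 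Once this is pinned down, the induction in the ``if'' direction and the pigeonhole/loop-cutting in the ``only if'' direction are routine.
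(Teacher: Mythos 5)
Your proof is correct and follows essentially the same route as the paper: decompose a counterexample of $L(\pdelta)\setminus\cl(\sdelta)$ into end-to-end quiescent blocks, track the reachable subsets of $S$ alongside the states of $Q$, and bound $k$ by pigeonhole on $Q\times 2^{|S|}$ — your a-posteriori loop-cutting is just the iterated form of the paper's ``shortest counterexample'' argument. The only difference is that you spell out the right-to-left direction (which the paper dismisses as immediate) and the characterisation of membership in $\cl(\sdelta)$; both elaborations are sound.
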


\begin{proof}
  The existence of such a $\sigma$ demonstrates that $\calQ$ is not a
  correct implementation of $\calS$ under \qc and so it is sufficient
  to prove the left-to-right direction.  We therefore assume that
  $\calQ$ is not a correct implementation of $\calS$ under \qc.  Thus,
  there exists a quiescent \trace $\sigma$ that is in $L(\calQ)$ but
  not in $\trd(\calS)$.  Assume that we have a shortest such \trace
  $\sigma$, $\sigma = \sigma_1 \ldots \sigma_{k+1}$ for end-to-end
  quiescent $\sigma_1, \ldots, \sigma_{k+1}$.  Since $\sigma$ is in
  $L(\calQ)$ but not in $\trd(\calS)$, by the minimality of $\sigma$
  we must have that $\sigma_{k+1} \in \trdi{q_k}{\calQ}$ and
  $\sigma_{k+1} \not \in \trdi{S_k}{\calS}$ and so the first condition
  holds.  Further, by the minimality of $\sigma$ we must have that
  $(q_i,S_i) \neq (q_j,S_j)$, all $0 \leq i < j \leq k$; otherwise we
  can remove $\sigma_{i} \ldots \sigma_{j-1}$ from $\sigma$ to obtain
  a shorter \trace that is in $\trd(\calQ)$ but not in $\trd(\calS)$.
  But, there are $|Q| \cdot 2^{|S|}$ possible pairs and so the second
  condition, $k < |Q| \cdot 2^{|S|}$, must hold.  
\end{proof}

Using \refprop{prop:NT}, we develop \refalg{algorithm1}, which defines
a non-deterministic Turing Machine that solves the problem of deciding
correctness. At each iteration, the non-deterministic Turing Machine
first checks whether $\trdi{q_c}{\calQ} \not \subseteq
\trdi{S_c}{\calS}$; if not, it has demonstrated that $\calQ$ is not a
correct implementation of $\calS$ (the first condition of Proposition
\ref{prop:NT}).  If this condition holds then the non-deterministic
Turing Machine increments the counter $c$ and guesses a next pair
$(q_c,S_c)$.  It then checks that there is some $\sigma_c$ such that
$\esti{q_{c-1}}{\calQ}{\sigma_c}{q_c}$ and
$\estii{S_{c-1}}{\calS}{\sigma_c}{S_c}$.  If there is such a
$\sigma_c$ then the process can continue, otherwise the result is
inconclusive.  The bound on $c$ ensures that the algorithm terminates
as long as we can decide the conditions contained in the {\bf if}
statements (we explore this below).

\begin{algorithm}
\begin{algorithmic}
\STATE 
$c = 0$, $S_0 = \{s_0\}$, $Q_0 = \{q_0\}$

\WHILE{$c \leq  |Q| \cdot 2^{|S|}$}

\IF{$\trdi{q_c}{\calQ} \not \subseteq \trdi{S_c}{\calS}$} \STATE
Return Fail
\ENDIF

\STATE c = c+1

\STATE Choose some $(q_c,S_c) \in Q \times 2^S$

\IF{
  $\not\! \exists \sigma_c$ such that
  $\esti{q_{c-1}}{\calQ}{\sigma_c}{q_{c}}$ and
  $\estii{S_{c-1}}{\calS}{\sigma_c}{S_c}$}
\STATE Return Ok  
\ENDIF

\ENDWHILE
\end{algorithmic}
\caption{Deciding correctness for \qc}
\label{algorithm1}
\end{algorithm}

If a non-deterministic Turing Machine operates as above then it will
return Fail if there is some sequence of choices that leads to Fail
being returned.  The following is thus immediate from
\refprop{prop:NT}.

\begin{proposition}
If a non-deterministic Turing Machine applies Algorithm \ref{algorithm1} to $\calQ$ and $\calS$ then it
returns Fail iff $\calQ$ is not a correct implementation of $\calS$ with respect to \qc.
\end{proposition}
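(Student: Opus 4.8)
The plan is to read the proposition off \refprop{prop:NT}, viewing \refalg{algorithm1} as a faithful nondeterministic search for an object of the kind characterised there. As in the paragraph preceding the statement, ``returns Fail'' is read as ``some sequence of nondeterministic choices drives the machine to \textbf{Return Fail}'', so it suffices to (i) exhibit one such run when $\calQ$ is incorrect, and (ii) show that every Fail-returning run yields the data required by \refprop{prop:NT}.

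For (i), suppose $\calQ$ is not a correct implementation of $\calS$ under \qc. Applying \refprop{prop:NT} gives end-to-end quiescent $\sigma_1,\dots,\sigma_{k+1}$ and pairs $(q_0,S_0),\dots,(q_k,S_k)\in Q\times 2^S$ with $S_0=\{s_0\}$, $\esti{q_{i-1}}{\calQ}{\sigma_i}{q_i}$, $\estii{S_{i-1}}{\calS}{\sigma_i}{S_i}$ for $1\le i\le k$, $\sigma_{k+1}\in\trdi{q_k}{\calQ}\setminus\trdi{S_k}{\calS}$, and $k\le|Q|\cdot 2^{|S|}$. I would then describe the run in which the machine, in the iteration that sets the counter to $c$, guesses exactly $(q_c,S_c)$. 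Since $\esti{q_{c-1}}{\calQ}{\sigma_c}{q_c}$ and $\estii{S_{c-1}}{\calS}{\sigma_c}{S_c}$ hold, the required $\sigma_c$ exists, so the test guarding \textbf{Return Ok} never fires on this run before the counter reaches $k$; and since $k\le|Q|\cdot 2^{|S|}$ the \textbf{while} guard still holds at counter $k$, where the test $\trdi{q_k}{\calQ}\not\subseteq\trdi{S_k}{\calS}$ succeeds (witnessed by $\sigma_{k+1}$) and the machine returns Fail. If some earlier iteration already returns Fail, so much the better.

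For (ii), suppose a run returns Fail; it does so at the first \textbf{if} for some counter value $c$. By that point the machine has, in the preceding iterations, guessed pairs $(q_1,S_1),\dots,(q_c,S_c)$ (put $(q_0,S_0)=(q_0,\{s_0\})$) and each time passed the \textbf{Return Ok} test, so for every $1\le i\le c$ there is $\sigma_i$ with $\esti{q_{i-1}}{\calQ}{\sigma_i}{q_i}$ and $\estii{S_{i-1}}{\calS}{\sigma_i}{S_i}$; the failed test supplies $\sigma_{c+1}\in\trdi{q_c}{\calQ}\setminus\trdi{S_c}{\calS}$; and passing the \textbf{while} guard gives $c\le|Q|\cdot 2^{|S|}$. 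Taking $k:=c$, these are exactly the data on the right-hand side of \refprop{prop:NT} (including its bound), so $\calQ$ is not a correct implementation of $\calS$ under \qc.

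No step is a genuine obstacle, since \refprop{prop:NT} carries all the semantic content; the only care needed is the off-by-one bookkeeping around \refalg{algorithm1} --- in particular checking that when the first \textbf{if} is evaluated with counter $c$ the pairs $(q_0,S_0),\dots,(q_c,S_c)$ are already fixed, and observing that the \textbf{Return Ok} test is precisely what forces the guessed sequence to be realised by genuine $\esti{\cdot}{\calQ}{\cdot}{\cdot}$ and $\estii{\cdot}{\calS}{\cdot}{\cdot}$ transitions, so that \refprop{prop:NT} applies in both directions.
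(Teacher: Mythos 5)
Your proposal is correct and follows exactly the route the paper intends: the paper gives no explicit proof, declaring the proposition ``immediate from'' \refprop{prop:NT}, and your argument is precisely the spelling-out of that immediacy (forward: the witness from \refprop{prop:NT} determines a Fail-returning run; backward: a Fail-returning run yields the data, including the bound $k\le|Q|\cdot 2^{|S|}$, required by \refprop{prop:NT}). The off-by-one bookkeeping you check is handled correctly, so there is nothing to add.
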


We now consider the two problems encoded in the conditions of
Algorithm \ref{algorithm1}: deciding whether $\trdi{q_c}{\calQ} \not
\subseteq \trdi{S_c}{\calS}$; and deciding whether there exists a
\trace $\sigma_c$ such that $\esti{q_{c-1}}{\calQ}{\sigma_c}{q_{c}}$
and $\estii{S_{c-1}}{\calS}{\sigma_c}{S_c}$.

We start with problem of deciding whether $\trdi{q_c}{\calQ} \subseteq
\trdi{S_c}{\calS}$.  This involves checking whether the Parikh Image
of one regular language is contained in the Parikh Image of another
regular language.  It is known that this problem can be solved in
non-deterministic exponential time (NEXPTIME) \cite{Huynh85}.

\begin{proposition}\label{QC_checkinclusion1}
It is possible to decide whether $\trdi{q_c}{\calQ} \subseteq \trdi{S_c}{\calS}$
in NEXPTIME.
\end{proposition}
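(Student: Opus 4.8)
The plan is to reduce the question ``$\trdi{q_c}{\calQ} \subseteq \trdi{S_c}{\calS}$?'' to the Parikh-image inclusion problem for regular languages, then invoke the known NEXPTIME bound of Huynh \cite{Huynh85}. First I would recall that by definition $\trdi{q_c}{\calQ}$ is the set of end-to-end quiescent \traces of $\prei{q_c}{\calQ}$ (closed under $U$), and similarly $\trdi{S_c}{\calS} = \bigcup_{s \in S_c} \trdi{s}{\calS}$. Because the independence relation $U = \Sigma \times \Sigma$ makes every pair of events from $\Sigma$ commute, two \traces are $\sim_U$-equivalent precisely when they have the same Parikh image (the $\delta$ marker does not appear inside an end-to-end quiescent block, so it plays no role here). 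Hence $\sigma \in \trdi{q_c}{\calQ}$ amounts to: the Parikh image of $\sigma$ is the Parikh image of some end-to-end quiescent \trace of $\prei{q_c}{\calQ}$.

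Next I would make precise that we need the Parikh images of the underlying \emph{regular} languages. The set of end-to-end quiescent \traces reachable from $q_c$ in $\calQ$ is a regular language: one intersects $L(\prei{q_c}{\calQ})$ with the regular set of end-to-end quiescent words (a \trace is end-to-end quiescent iff it is quiescent and has no non-empty proper quiescent prefix, which is checkable by a finite automaton tracking the invoke/response balance), and an analogous construction gives a finite automaton $\calB_c$ whose language is the set of end-to-end quiescent \traces reachable from any state in $S_c$ in $\calS$ --- here one takes a disjoint union over $s \in S_c$ of copies of $\prei{s}{\calS}$ and again intersects with the end-to-end-quiescent regular set. Both automata are constructible in time polynomial in $|\calQ|$, $|\calS|$, $|S_c|$, so they are at most of size $2^{O(|S|)}$ as a function of the original input. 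Then $\trdi{q_c}{\calQ} \subseteq \trdi{S_c}{\calS}$ holds iff $\parikh{\calA_c} \subseteq \parikh{\calB_c}$, where $\calA_c, \calB_c$ are these two finite automata: closure under $U$ exactly identifies \traces with the same Parikh image, so membership of $\trdi{q_c}{\calQ}$ in $\trdi{S_c}{\calS}$ reduces to inclusion of the image multisets. By \cite{Huynh85}, Parikh-image inclusion of regular languages is in NEXPTIME, i.e., solvable in nondeterministic time $2^{p(\|\calA_c\|+\|\calB_c\|)}$ for a polynomial $p$; composing with the (at most exponential) blow-up in constructing $\calA_c,\calB_c$ still leaves the whole procedure in NEXPTIME measured in the original input size, giving the claim.

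The main obstacle I anticipate is the bookkeeping around end-to-end quiescence and the powerset component $S_c$: one must be careful that intersecting with the ``end-to-end quiescent'' regular language and taking the disjoint union over $S_c$ stays within a single-exponential size bound so that the subsequent NEXPTIME step does not escalate to a worse class. A minor subtlety is the role of $\delta$ --- since $\trdi{\cdot}{\cdot}$ consists of $\delta$-free \traces (the $\sigma_i$ in the decomposition $\sigma = \delta\sigma_1\delta\cdots\delta\sigma_k\delta$), the $\delta$ letter can simply be ignored when forming $\calA_c$ and $\calB_c$, so the reduction to Parikh inclusion over the plain alphabet $\Sigma$ is clean. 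Everything else is routine once the reduction to Parikh-image inclusion is in place.
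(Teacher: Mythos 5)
Your high-level strategy --- reduce the question to Parikh-image inclusion of regular languages and invoke Huynh's NEXPTIME bound --- is exactly the route the paper takes (its proof is essentially that one-line observation). But your concrete instantiation of the reduction has a genuine error: you intersect both languages with the set of end-to-end quiescent words \emph{before} taking Parikh images, and the resulting test is not equivalent to $\trdi{q_c}{\calQ} \subseteq \trdi{S_c}{\calS}$. The definition $\trdi{m}{\calM} = \{\sigma \in \cl(\prei{m}{\calM}) \mid \ete{\sigma}\}$ requires only that $\sigma$ \emph{itself} be end-to-end quiescent; the witnessing word of $L(\prei{m}{\calM})$ in whose $\sim_U$-class $\sigma$ lies need not be. This bites hard on the specification side: $\calS$ is sequential (\refasm{asm:spec-seq}), so every \trace of $\calS$ containing two or more operations has a quiescent proper prefix and is therefore \emph{not} end-to-end quiescent. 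Your $\calB_c$ thus accepts essentially only single-operation words, $\parikh{\calB_c}$ is far too small, and your test would report non-inclusion for virtually any implementation exhibiting concurrency, including correct ones. The dual failure occurs on the implementation side: a path of $\prei{q_c}{\calQ}$ to a final state may pass through an intermediate quiescent state (label $e_1\,\rete_1\,e_2\,\rete_2$, say), so its label is excluded from $\calA_c$ and its Parikh vector is silently dropped, even though $\trdi{q_c}{\calQ}$ contains the end-to-end quiescent permutation $e_1\,e_2\,\rete_1\,\rete_2$, which may well fail to lie in $\trdi{S_c}{\calS}$; the test can then accept spuriously.

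The repair is to drop the intersection with the end-to-end quiescent language altogether and compare $\parikh{\prei{q_c}{\calQ}}$ with $\bigcup_{s\in S_c}\parikh{\prei{s}{\calS}}$ directly. The missing observation is this: since $\sim_U$-equivalence is exactly equality of Parikh images, $\sigma \in \trdi{m}{\calM}$ iff $\ete{\sigma}$ and the Parikh image of $\sigma$ lies in $\parikh{\prei{m}{\calM}}$; moreover every word of these languages is quiescent, hence has a balanced Parikh vector, and every non-zero balanced vector is realised by \emph{some} end-to-end quiescent word (place all invocations before all responses). Hence the inclusion of the ${\sf Z}$-sets is equivalent (up to the trivial empty-word case) to plain Parikh-image inclusion of the two regular languages, on automata of size polynomial in $|Q|$ and $|S|$ (the union over $S_c$ costs a factor of at most $|S|$, not an exponential), so Huynh's result applies directly. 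Note finally that your fallback claim --- that running a NEXPTIME procedure on exponentially larger automata ``still leaves the whole procedure in NEXPTIME'' --- is false in general ($2^{p(2^n)}$ is doubly exponential); the bound survives only because the automata are in fact of polynomial size.
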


The remaining problem we need to decide, for states $q_{c-1},q_c$ of
$\calQ$ and sets $S_{c-1}, S_c$ of states of $\calS$, whether there
exists some $\sigma_c$ that can
\begin{enumerate}\renewcommand{\labelenumi}{(\roman{enumi})}
\item take $\calQ$ from $q_{c-1}$ to $q_c$ and
\item take $\calS$ from the set $S_{c-1}$ of states to the set $S_c$
  of states.
\end{enumerate}

We introduce some further notation. For $m \in M$ and $M' \subseteq
M$, we let $\preii{m}{\calM}{M'}$ denote the FA
$(M,m,\Sigma,t,M')$ formed by making $m$ the initial state of
$\calM$ and $M'$ the final states.  We introduce the
following (assuming all states in $M'$ and $M''$ are quiescent).
\begin{align*}
  \trdii{m}{\calM}{M'} = {} & \left\{ \sigma \in
    \cl(\preii{m}{\calM}{M'}) \mid \xi(\sigma)\right\}
  & \qquad 
  \trdii{M'}{\calM}{M''} = {} & \bigcup_{m\in M'} \trdii{m}{\calM}{M''}
\end{align*}
That is, $\trdii{m}{\calM}{M''}$ is the set of end-to-end quiescent
\traces of $\calM$ that start in state $m$ and end at a state in $M''$. We
use shorthand $\trdii{m}{\calM}{m'}$ for $\trdii{m}{\calM}{\{m'\}}$
(similarly $\trdii{M'}{\calM}{m'}$).

Using this notation, condition (i) above may be formalised as the
predicate $\sigma_c \in \trdii{q_{c-1}}{\calQ}{q_c}$. Condition (ii)
above requires that $\sigma_c$ can take $\calS$ to all states in $S_c$
(and so that $\sigma_c \in \bigcap_{s \in S_c}
\trdii{S_{c-1}}{\calS}{s}$) and cannot take $\calS$ from $S_{c-1}$ to
any state outside of $S_c$ (and so that $\sigma_c \not \in \bigcup_{s
  \in (S \setminus S_c)} \trdii{S_{c-1}}{\calS}{s}$).  The negation of
the overall condition thus reduces to the following.
\begin{equation}
  \not \exists \sigma_c \in (A \setminus B ) \cap C\label{eq:1}
\end{equation}
where 
\begin{align*}
  A & {} =  \textstyle \bigcap_{s \in S_c} \trdii{S_{c-1}}{\calS}{s}
  &
  B & {} = \textstyle \bigcup_{s \in (S \setminus S_c)} \trdii{S_{c-1}}{\calS}{s} 
  &
  C & {} = \trdii{q_{c-1}}{\calQ}{q_c}
\end{align*}
Using some straightforward set manipulation, \refeq{eq:1} is equal to
$A \cap C \subseteq B$.











Thus, the problem is reduced to deciding whether the intersection of a
set of Parikh Images of regular languages is contained within the
Parikh Image of another regular language.  We also note that if we use
$\overline{L}$ to represent the complement of a language $L$ then $B
\subseteq C$ iff $B \subseteq C \cup \overline{A}$, and by de Morgan's
Law $\overline{\bigcap_i A_i}$ is equivalent to $\bigcup_i
\overline{A_i}$. The condition therefore becomes
\[
\begin{array}[t]{@{}l@{}}
\trdii{q_{c-1}}{\calQ}{q_c} \subseteq 
 \left(\textstyle \bigcup_{s \in (S \setminus S_c)}
\trdii{S_{c-1}}{\calS}{s}\right) \cup 
\left(\bigcup_{s \in S_c} \overline{\trdii{S_{c-1}}{\calS}{s}}\right)
\end{array}
\]


The Parikh Image of a regular language can be
represented by a semi-linear set that contains exponentially many
terms \cite{KopczynskiT10}.
In addition,
the complement of a semi-linear set can be represented by polynomially many terms \cite{Huynh80}.
Thus,
all of 
$\trdii{q_{c-1}}{\calQ}{q_c}$,
$\bigcup_{s \in (S \setminus S_c)} \trdii{S_{c-1}}{\calS}{s}$, 
and $\bigcup_{s \in S_c} \overline{\trdii{S_{c-1}}{\calS}{s}}$ can be represented using
exponentially many terms (linear sets).
Further,
the problem of deciding whether one semi-linear set is contained in
another is in $\Sigma_2^p$ \cite{Huynh86}\footnote{Cited in \cite{Huynh85}.} and so is in PSPACE.
The overall problem is thus in EXPSPACE (since there are exponentially many terms).

\begin{proposition}\label{QC_checkmove1}
It is possible to decide whether there exists \trace $\sigma_c$ such that
$\esti{q_{c-1}}{\calQ}{\sigma_c}{q_{c}}$ and $\estii{S_{c-1}}{\calS}{\sigma_c}{S_c}$
in EXPSPACE.
\end{proposition}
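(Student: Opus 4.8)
The plan is to reduce the existential question ``does there exist a $\sigma_c$ witnessing both conditions?'' to a containment check between Parikh images of regular languages, and then to bound the space needed to decide that containment by EXPSPACE. First I would recall from the discussion preceding \refprop{QC_checkmove1} that condition (i), $\esti{q_{c-1}}{\calQ}{\sigma_c}{q_c}$, is captured by $\sigma_c \in \trdii{q_{c-1}}{\calQ}{q_c}$, and condition (ii), $\estii{S_{c-1}}{\calS}{\sigma_c}{S_c}$, is captured by $\sigma_c$ lying in $\bigcap_{s \in S_c} \trdii{S_{c-1}}{\calS}{s}$ but in none of the $\trdii{S_{c-1}}{\calS}{s}$ for $s \in S \setminus S_c$. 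Because all of these are end-to-end quiescent trace languages under the universal (hence transitive) independence relation $U$, each corresponds to the set of words whose Parikh image lies in the Parikh image of an ordinary regular language; in particular the $\delta$-free pieces between two quiescent points are freely permutable, so membership in $\cl(\cdot)$ is exactly equality of Parikh images within each block. So the non-existence of $\sigma_c$ becomes the containment $A \cap C \subseteq B$, equivalently $\trdii{q_{c-1}}{\calQ}{q_c} \subseteq \left(\bigcup_{s \in (S \setminus S_c)} \trdii{S_{c-1}}{\calS}{s}\right) \cup \left(\bigcup_{s \in S_c} \overline{\trdii{S_{c-1}}{\calS}{s}}\right)$, as already derived in the text.

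Next I would assemble the complexity bound from three ingredients, all cited in the paragraph preceding the proposition. (a) The Parikh image of a regular language is a semilinear set describable by at most exponentially many linear sets \cite{KopczynskiT10}; applying this to $\trdii{q_{c-1}}{\calQ}{q_c}$ and to each $\trdii{S_{c-1}}{\calS}{s}$ gives exponentially many linear sets per language. (b) The complement of a semilinear set is again semilinear, with only polynomially many linear sets relative to the input description \cite{Huynh80}; hence $\bigcup_{s \in S_c} \overline{\trdii{S_{c-1}}{\calS}{s}}$ also has at most exponentially many terms. (c) Deciding whether one semilinear set (given as a union of linear sets) is contained in another is in $\Sigma_2^p$ \cite{Huynh86}, hence in PSPACE, \emph{in the size of the representations}. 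Since the two semilinear sets here have representations of exponential size in $|\calQ|$ and $|\calS|$, running the PSPACE containment algorithm on them uses space polynomial in that exponential size, i.e.\ exponential space overall; therefore the decision problem is in EXPSPACE.

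I would then conclude: combine the reduction of step one with the EXPSPACE bound of step two to obtain \refprop{QC_checkmove1}. The main obstacle — and the part requiring the most care — is verifying that the semilinear representations of the $\trdii{\cdot}{\cdot}{\cdot}$ languages (and their unions and complements) can genuinely be computed from $\calQ$ and $\calS$ in exponential size: one must observe that $\preii{m}{\calM}{M'}$ is just $\calM$ with relabelled initial/final states, so it has the same size as $\calM$; that restricting to end-to-end quiescent traces and stripping $\delta$ only changes the automaton polynomially; that taking finite unions and intersections of the $S$-indexed families costs at most a polynomial blow-up in the number of automata and an exponential blow-up in the number of linear-set terms; and that complementation via \cite{Huynh80} keeps the term count within the exponential budget. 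Once the sizes are pinned down, the space analysis is routine, since EXPSPACE is closed under running a PSPACE algorithm on an exponential-size input.
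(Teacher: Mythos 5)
Your proposal is correct and follows essentially the same route as the paper: the same reduction of the non-existence of $\sigma_c$ to the containment $\trdii{q_{c-1}}{\calQ}{q_c} \subseteq \bigl(\bigcup_{s \in (S \setminus S_c)} \trdii{S_{c-1}}{\calS}{s}\bigr) \cup \bigl(\bigcup_{s \in S_c} \overline{\trdii{S_{c-1}}{\calS}{s}}\bigr)$, followed by the same three ingredients (exponential-size semilinear representations of Parikh images, polynomial-size complements, and $\Sigma_2^p \subseteq \mathrm{PSPACE}$ containment of semilinear sets) to reach EXPSPACE. The only difference is that you spell out the size bookkeeping more explicitly than the paper does, which is a harmless elaboration rather than a new argument.
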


We can now bring these results together.


\begin{theorem}
The correctness problem for \qc is in EXPSPACE.
\end{theorem} 
\begin{proof}
  We know that a non-deterministic Turing Machine can use Algorithm
  \ref{algorithm1} to solve the problem.  Further, by Propositions
  \ref{QC_checkinclusion1} and \ref{QC_checkmove1} the conditions of
  the {\bf if} statements can be solved in NEXPTIME and EXPSPACE.
  Observe also that the storage required for the algorithm, beyond
  determining the conditions in the {\bf if} statements, is polynomial
  since the algorithm only has to store the current values of $q_c$,
  $S_c$ and $c$, the latter taking $\log(|Q|2^{|S|})$ space.  Since
  NEXPTIME is contained in EXPSPACE, we therefore have that a
  non-deterministic Turing Machine can solve the problem in
  nondeterministic EXPSPACE (NEXPSPACE).  The result now follows from
  Savitch's theorem \cite{Savitch}, which implies that NEXPSPACE =
  EXPSPACE.
\end{proof}

%

\subsection{Upper bound for restricted quiescent consistency}

We now consider the case where there is a limit $b$ on the lengths of
subsequences of \traces of $\calQ$ 
between two occurrences of quiescence.

\begin{proposition}\label{QC_checkinclusion2}
  If there is a bound on the length of end-to-end quiescent \traces
  in $\calQ$ and $\calS$, then it is possible to decide whether
  $\trdi{q_c}{\calQ} \subseteq \trdi{S_c}{\calS}$ in PSPACE.
\end{proposition}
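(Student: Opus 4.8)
The plan is to give a nondeterministic polynomial-space decision procedure for the \emph{complement} of the inclusion and then appeal to Savitch's theorem \cite{Savitch}, using NPSPACE\,=\,PSPACE together with closure of PSPACE under complement. The bound hypothesis is exactly what makes this work: it guarantees that every \trace in $\trdi{q_c}{\calQ}$ and in every $\trdi{s}{\calS}$ ($s \in S_c$) has length at most $b$. Hence $\trdi{q_c}{\calQ} \not\subseteq \trdi{S_c}{\calS}$ holds iff there is a witness \trace $\sigma$ with $|\sigma| \le b$, $\sigma \in \trdi{q_c}{\calQ}$, and $\sigma \notin \trdi{s}{\calS}$ for every $s \in S_c$; such a $\sigma$ can be guessed and stored in space $O(b \log |\Sigma|)$, which is polynomial in the input.

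Having guessed $\sigma$, the procedure first checks $\ete{\sigma}$, a simple scan of $\sigma$. It then decides the two remaining conditions. For ``$\sigma \in \trdi{q_c}{\calQ}$'': since $\ete{\sigma}$ is already known, by the definitions of $\trdi{\cdot}{\cdot}$ and $\cl(\cdot)$ this is equivalent to the existence of a path of $\calQ$ from $q_c$ to a final state whose label is a permutation of $\sigma$, i.e.\ to $L(\machinea{\sigma}) \cap L(\prei{q_c}{\calQ}) \neq \emptyset$, with $\machinea{\sigma}$ from \refdef{def:MU}. Crucially, $\machinea{\sigma}$ need never be built explicitly: a configuration of the (intersection) product of $\prei{q_c}{\calQ}$ and $\machinea{\sigma}$ is a pair consisting of a state of $\calQ$ and the subset of $\sigma$'s events consumed so far, which is representable in $O(\log|Q| + b)$ bits because, by \refdef{def:MU}, $\machinea{\sigma}$ only ever reads symbols from the event set of $\sigma$ and only moves ``forward''. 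So this reachability question is decidable in polynomial space. For ``$\sigma \notin \trdi{s}{\calS}$'': again using $\ete{\sigma}$, this asserts that \emph{no} permutation of $\sigma$ labels a path of $\calS$ from $s$ to a final state, i.e.\ $L(\machinea{\sigma}) \cap L(\prei{s}{\calS}) = \emptyset$, the complement of a reachability question in a graph whose configurations are likewise succinctly represented in $O(\log|S| + b)$ bits; this too is decidable in polynomial space. The quantifier over the at most $|S|$ states $s \in S_c$ is handled by a loop costing $\log|S|$ additional bits.

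Putting the pieces together, the procedure runs in nondeterministic polynomial space, so deciding the complement is in NPSPACE\,=\,PSPACE, and therefore deciding $\trdi{q_c}{\calQ} \subseteq \trdi{S_c}{\calS}$ is in PSPACE. The step I expect to be the main obstacle is the condition $\sigma \notin \trdi{s}{\calS}$: stated directly it universally quantifies over the exponentially many permutations of $\sigma$ and over all paths of $\calS$, so it is not obviously in PSPACE. The resolution is to recast it as \emph{non}-reachability of a final configuration in the product of $\prei{s}{\calS}$ with the ``shuffle'' automaton $\machinea{\sigma}$ --- whose reachable configurations stay small even though the automaton itself is of exponential size --- and then use PSPACE (equivalently NPSPACE) reachability together with closure under complement. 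A minor point to verify is that the product construction indeed only visits configurations indexed by subsets of $\sigma$'s \emph{own} event set, which is immediate from \refdef{def:MU}, so that the space parameter is $b$ rather than $|\Sigma|$.
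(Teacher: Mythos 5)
Your proposal is correct and follows essentially the same route as the paper: guess a bounded-length end-to-end quiescent witness $\sigma$, check $\sigma \in \trdi{q_c}{\calQ}$ and $\sigma \notin \trdi{S_c}{\calS}$, and conclude via NPSPACE${}={}$PSPACE and closure under complement. The only difference is that you re-derive the two membership checks in polynomial space via an on-the-fly product with $\machinea{\sigma}$, whereas the paper simply invokes the already-established PTIME bound for the restricted membership problem; both justifications are sound.
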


\begin{proof}
  A nondeterministic Turing Machine can solve this problem in PSPACE
  as follows.  First, it guesses a \trace $\sigma$ whose length is at
  most the upper bound.  It then checks that $\sigma$ is end-to-end
  quiescent.  It then checks whether $\sigma \in \trdi{q_c}{\calQ}$
  and whether $\sigma \in \trdi{S_c}{\calS}$; we know that these
  checks can be performed in polynomial time since this is an instance of the
  restricted membership problem.  Finally, it returns
  failure if and only if $\sigma \in \trdi{q_c}{\calQ}$ and $\sigma
  \not \in \trdi{S_c}{\calS}$.
\end{proof}

\begin{proposition}\label{QC_checkmove2}
Let us suppose that there is a bound on the length of end-to-end quiescent \traces in $\calQ$ and $\calS$.
It is possible to decide whether there exists \trace $\sigma_c$ such that
$\esti{q_{c-1}}{\calQ}{\sigma_c}{q_{c}}$ and $\estii{S_{c-1}}{\calS}{\sigma_c}{S_c}$
in PSPACE.
\end{proposition}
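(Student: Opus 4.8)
The plan is to follow the recipe of \refprop{QC_checkinclusion2} and \refprop{QC_checkmove1}, but to exploit the length bound $b$ so that every automaton built along the way is of polynomial size; this collapses the EXPSPACE argument of \refprop{QC_checkmove1} down to PSPACE. Since the predicate is existential, I would describe a nondeterministic procedure and then invoke Savitch's theorem \cite{Savitch} ($\mathrm{NPSPACE} = \mathrm{PSPACE}$), as is done elsewhere in this section.

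First I would have the machine guess the witness $\sigma_c$ outright. In the setting where this predicate is used (\refalg{algorithm1}, via \refprop{prop:NT}), $\sigma_c$ ranges over legal end-to-end quiescent \traces, so under the standing bound $|\sigma_c| \le b$; the machine therefore guesses a string of length at most $b$ --- polynomial space --- and checks in polynomial time that it is legal and end-to-end quiescent. Because $\sigma_c$ contains no $\delta$, the class $[\sigma_c]_U$ is exactly the set of permutations of $\sigma_c$, so the automaton $\machinea{\sigma_c}$ of \refdef{def:MU} accepts precisely $[\sigma_c]_U$, and under the bound it has at most $2^{b}$ states, i.e.\ polynomial size (exactly as in the restricted membership argument).

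Next I would verify condition (i), $\esti{q_{c-1}}{\calQ}{\sigma_c}{q_c}$: this holds iff some word of $[\sigma_c]_U$ labels a path of $\calQ$ from $q_{c-1}$ to $q_c$, equivalently iff, running $\machinea{\sigma_c}$ in lockstep with $\preii{q_{c-1}}{\calQ}{\{q_c\}}$ from $(\emptyset, q_{c-1})$, one can reach the pair $(\Sigma, q_c)$, where $\Sigma$ is the (unique) final state of $\machinea{\sigma_c}$, signifying that all events of $\sigma_c$ have been consumed. For condition (ii), $\estii{S_{c-1}}{\calS}{\sigma_c}{S_c}$, I would run $\machinea{\sigma_c}$ in lockstep with $\calS$ and check that every $\calS$-state $s'$ for which the pair $(\Sigma, s')$ is reachable from some $(\emptyset, s)$ with $s \in S_{c-1}$ lies in $S_c$; synchronising with $\machinea{\sigma_c}$ guarantees that only \emph{complete} permutations of $\sigma_c$ (not proper prefixes) are counted, which is exactly what the universal quantifier in the definition of $\estii{}{}{}{}$ demands. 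Both products have polynomial size, so (i) is a single polynomial-time reachability query and (ii) is a conjunction of polynomially many (co-)reachability queries; hence the entire predicate, after guessing $\sigma_c$, is decidable in NPSPACE, and by Savitch's theorem in PSPACE.

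The main obstacle is condition (ii): because $\calS$ is nondeterministic and the quantifier ranges over the whole commutation class $[\sigma_c]_U$, a single-path simulation does not suffice --- one needs a reachable-set (powerset-style) argument over $\calS$, taking care that only length-$|\sigma_c|$ words (full permutations of $\sigma_c$) are accounted for. Coupling $\calS$ with $\machinea{\sigma_c}$ and reading off exactly the $\calS$-components paired with $\machinea{\sigma_c}$'s accepting state $\Sigma$ is the device that makes this precise while keeping every automaton polynomial in size under the bound $b$.
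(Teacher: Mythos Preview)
Your proposal is correct and follows essentially the same guess-and-verify strategy as the paper: nondeterministically pick an end-to-end quiescent $\sigma_c$ of length at most $b$, then check condition (i) and condition (ii) in polynomial time. The paper phrases the verifications as instances of the restricted membership problem (checking $\sigma \in \trdii{q_{c-1}}{\calQ}{q_c}$, $\sigma \in \trdii{S_{c-1}}{\calS}{S_c}$, and $\sigma \notin \trdii{S_{c-1}}{\calS}{S\setminus S_c}$), whereas you spell them out as reachability queries in the product with $\machinea{\sigma_c}$; these are the same computation, and both yield polynomial-time checks under the bound $b$. One small remark: your reachability formulation of (ii) captures exactly the $\Longrightarrow$ predicate as stated, while the paper's proof additionally checks $\sigma \in \trdii{S_{c-1}}{\calS}{S_c}$, which corresponds to the stronger ``$S_c$ is the exact reachable set'' condition actually used in \refalg{algorithm1}; this does not affect the correctness of your argument for the proposition as written.
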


\begin{proof}
  A nondeterministic Turing Machine can solve this problem in PSPACE
  as follows.  First, it guesses a \trace $\sigma$ whose length is at
  most the upper bound and checks that $\sigma$ is end-to-end
  quiescent. It then checks whether
  \begin{align*}
    \sigma \in \trdii{q_{c-1}}{\calQ}{q_{c}} \quad \text{and} \quad \sigma \in
    \trdii{S_{c-1}}{\calS}{S_c} \setminus \trdii{S_{c-1}}{\calS}{S \setminus
        S_c}.
  \end{align*}
  We know that the first check (solving the membership problem for
  bounded \qc) can be performed in polynomial time.  The second check
  can be solved by deciding whether $\sigma \in \trdii{S_{c-1}}{\calS}{S_c}$ and
  whether $\sigma \in \trdii{S_{c-1}}{\calS}{S \setminus S_c}$ and, again, these
  checks can be performed in polynomial time.  The nondeterministic
  Turing Machine returns True iff it finds that $\sigma \in
  \trdii{q_{c-1}}{\calQ}{q_c}$,  $\sigma \in \sigma \in \trdii{S_{c-1}}{\calS}{S_c}$, and $\sigma \not
  \in \trdii{S_{c-1}}{\calS}{S \setminus S_c}$.
%
%
\end{proof}

\begin{theorem}
The correctness problem for bounded \qc is in PSPACE.
\end{theorem}

\begin{proof}
From Propositions \ref{QC_checkinclusion2} and \ref{QC_checkmove2}
we know that the two conditions in Algorithm \ref{algorithm1} can
be decided in PSPACE.
Thus,
a nondeterministic Turing Machine
can apply Algorithm \ref{algorithm1} using polynomial space.
The result thus follows. 
\end{proof}

%

\section{\Qsc}
\label{sec:qsc}

In this section, we consider \qsc, which adds a \emph{sequential
  consistency} constraint \cite{Lamport79} to \qc, i.e., we are not
allowed to reorder the events of the same process. For concurrent
objects, this means that the order of effects of operation calls by
the same process will take place in program order: if operation calls
identified by events $e$, $\return{e}$ and $e'$, $\return{e}{\,}'$ all
have the same process, and a concrete implementation has a \trace
where $\return{e}$ occurs before $e'$, then such a trace cannot be
justified by a sequential \trace where $e'$ occurs before
$\return{e}$.

In the context of client-object systems, sequential consistency has
been shown to be equivalent to \emph{observational refinement}
\cite{HeHS86} provided that the client threads are
independent (i.e., do not share data)
\cite{Filipovic-LinvsRef2010}. Observational refinement provides the
conditions necessary for replacing a specification object within a
client program by an implementatation. The sorts of guarantees that
\qc provides a client is still a subject of further study; as Shavit
says, exploiting concurrency in the multiprocessor age requires a
rethinking of traditional notions of correctness \cite{Shavit11}.


We now present some background for \qsc in preparation for the
membership and correctness
problems. 
In order to formally define \qsc, we define a projection function that
also preserves $\delta$ states in the projection.

\begin{definition}
  Given \trace $\sigma \in \Sigma_{\delta}^*$, event $e \in
  \Sigma_{\delta}$ and process $p$, $\pi^\delta_p(\sigma)$ is defined
  by the following:
\begin{align*}
  \pi^\delta_p(\varepsilon) = \varepsilon \qquad\qquad 
  \pi^\delta_p(e \sigma) =
  {\sf if}\ e \in \Sigma(p) \cup \{\delta\}\ {\sf then}\ e
  \pi^\delta_p(\sigma)\ {\sf else }\ \pi^\delta_p(\sigma)
\end{align*}
For $\sigma, \sigma' \in \Sigma_{\delta}^*$, we write $\sigma \approx
\sigma'$ iff $\pi^\delta_p(\sigma) = \pi^\delta_p(\sigma')$ for every
process $p$.
\end{definition}

%
%

We can now define \qsc in a similar manner to \qc,
except that we include the constraint that events on a process are ordered.

\begin{definition}
  Suppose $\sigma = \sigma_1 \sigma_2 \ldots \sigma_m \in \Sigma^*$ is a quiescent
  \trace and each $\sigma_i$ is end-to-end
  quiescent. 
  Then $\sigma$ is \emph{allowed by specification $\calS$ under \qsc}
  iff there exists a permutation $\sigma'_i$ for each $1 \le i \le m$
  such that $\sigma_i \approx \sigma_i'$ and $\sigma'_1 \sigma'_2
  \ldots \sigma'_m$ is a \trace of $\calS$.
\end{definition}


Since we cannot reorder events on a process we obtain the following independence relation.
\[
R = \{(a,b) \mid \exists p, p'. p \neq p' \wedge a \in \Sigma(p) \wedge b \in \Sigma(p')\}
\]
The essential idea is that quiescence ($\delta$) does not commute with
anything, as with \qc, and that two events from $\Sigma$ are
independent if and only if they are on different processes.

Given an FA $M$ with alphabet $\Sigma_{\delta}$, we will use $\cls(M)$
to denote $\clI(M)$ in which the independence relation $I$ is $R$.  

We will use the FA $\pdelta$ and $\sdelta$, which are derived from the
implementation $\calQ$ and specification $\calS$, respectively, via
the construction described in \refsec{sec:quiesc-cons}.  If we
consider a quiescent \trace $\sigma$ of $\pdelta$ we have that
$\delta$ is included whenever $\sigma$ is quiescent.  We can define
what it means for a \trace that includes $\delta$ to be allowed by
$\calS$.

\begin{definition}\label{def_qsc_with_delta}
  \Trace $\sigma$ of $\pdelta$ is \emph{allowed} by $\calS$ under \qsc if
  the \trace $\pi_{\Sigma}(\sigma)$ formed from $\sigma$ by removing all instances of $\delta$
  is allowed by $\calS$ under \qsc.
\end{definition}

Recall also that all
processes observe quiescence.  As a result, we have the following
property.

\begin{lemma}
  Suppose $\sigma = \sigma_1 \sigma_2 \ldots \sigma_m$ is a quiescent
  \trace and each $\sigma_i$ is end-to-end
  quiescent.
  Given \trace $\sigma' \in \Sigma_{\delta}^*$ we have that $\sigma'
  \approx \sigma$ iff $\sigma' = \sigma'_1 \sigma'_2 \ldots
  \sigma'_m$ for some $\sigma'_1, \ldots, \sigma'_m$ with $\sigma_j
  \approx \sigma'_j$ (all $1 \leq j \leq m$).
\end{lemma}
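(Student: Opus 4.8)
The plan is to prove both directions by induction on $m$, exploiting the fact that $\delta$ is a marker that is preserved under $\approx$ and occurs exactly at the boundaries between the end-to-end quiescent blocks. The key structural observation is that if $\sigma = \sigma_1 \sigma_2 \ldots \sigma_m$ with each $\xi(\sigma_i)$, then $\sigma$ contains no $\delta$ symbols, whereas the surrounding construction of $\pdelta$ (from \refsec{sec:quiesc-cons}) guarantees that in a full \trace the $\delta$ symbols delimit exactly these blocks; more importantly, for \emph{any} $\sigma' \approx \sigma$, since $\pi^\delta_p(\sigma') = \pi^\delta_p(\sigma) = \pi_p(\sigma)$ for every process $p$, no $\sigma'$ satisfying $\sigma' \approx \sigma$ can contain a $\delta$ either. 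So the statement is really about decomposing a $\delta$-free \trace that is $\approx$-equivalent to a concatenation of end-to-end quiescent blocks.

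\textbf{Right-to-left direction.} This is the routine direction: if $\sigma' = \sigma'_1 \ldots \sigma'_m$ with $\sigma_j \approx \sigma'_j$ for all $j$, then for each process $p$ we have $\pi^\delta_p(\sigma') = \pi^\delta_p(\sigma'_1)\ldots\pi^\delta_p(\sigma'_m) = \pi^\delta_p(\sigma_1)\ldots\pi^\delta_p(\sigma_m) = \pi^\delta_p(\sigma)$, using that $\pi^\delta_p$ distributes over concatenation. Hence $\sigma' \approx \sigma$.

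\textbf{Left-to-right direction.} This is the substantive part. Assume $\sigma' \approx \sigma$. I would show that $\sigma'$ can be cut into $m$ pieces $\sigma'_1, \ldots, \sigma'_m$ with $\sigma'_j \approx \sigma_j$. The crucial point is that each $\sigma_i$, being legal and end-to-end quiescent, has the property that $\pi_p(\sigma_i)$ is a sequential \trace with matching invoke/response pairs and a one-to-one correspondence between invokes and responses (by \refprop{prop:qu-legal} and the definition of quiescence). I would proceed by induction on $m$: for $m = 1$ there is nothing to do. For the inductive step, I claim there is a unique prefix $\sigma'_1$ of $\sigma'$ such that $\sigma'_1 \approx \sigma_1$, and then $\sigma'$ factors as $\sigma'_1 \tau'$ where $\tau' \approx \sigma_2 \ldots \sigma_m$, at which point the induction hypothesis finishes the job. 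To pin down $\sigma'_1$: scan $\sigma'$ from the left and stop at the first point $\sigma'_1$ such that for every process $p$, $\pi_p(\sigma'_1)$ equals $\pi_p(\sigma_1)$ \emph{and} $\pi_p(\sigma'_1)$ is quiescent (has matching invoke/response counts). Such a point exists because $\pi_p(\sigma')$ has $\pi_p(\sigma_1)$ as a prefix for each $p$ (since $\pi_p(\sigma') = \pi_p(\sigma_1)\pi_p(\sigma_2)\ldots\pi_p(\sigma_m)$, as $\sigma$ has no $\delta$ and $\pi^\delta_p$ agrees with $\pi_p$ on $\delta$-free \traces) — so once $\sigma'_1$ is long enough to ``cover'' all of $\pi_p(\sigma_1)$ for every $p$, it has consumed exactly the $\sigma_1$-part on each process. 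One must check this cut is consistent, i.e.\ that the same prefix works for all processes simultaneously; here the legality/quiescence of $\sigma_1$ (no pending invocations straddling the boundary) is what makes the per-process cut points coincide into a single global cut.

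\textbf{Main obstacle.} The delicate step is verifying that the per-process cut points amalgamate into one well-defined global cut $\sigma'_1$ of $\sigma'$ — in principle, process $p$'s share of $\sigma_1$ could end at a different position of $\sigma'$ than process $p''$'s share. The resolution uses that $\sigma_1$ is end-to-end quiescent and legal: no operation is ``in flight'' at the block boundary, so after consuming the last event of $\sigma'$ that belongs to $\pi_p(\sigma_1)$ for the process $p$ whose contribution extends furthest, all other processes have already exhausted their $\sigma_1$-contributions (and none has started a $\sigma_2$-contribution, by minimality of the cut). I expect this to require a short argument tracking invoke/response balance on each process, invoking \refprop{prop:qu-legal} and the prefix-closure of legality. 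Everything else is bookkeeping with $\pi^\delta_p$ and concatenation.
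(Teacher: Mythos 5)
Your right-to-left direction is fine, but the left-to-right direction has a genuine gap at exactly the point you flag as the ``main obstacle'', and under your (literal, $\delta$-free) reading of the statement the gap cannot be closed because the claim is then false. Take two processes with $\sigma_1 = e_1\,\return{e}_1$ on process $1$ and $\sigma_2 = e_2\,\return{e}_2$ on process $2$; both blocks are legal and end-to-end quiescent, and $\sigma = \sigma_1\sigma_2$ is legal and quiescent. The trace $\sigma' = e_2\,\return{e}_2\,e_1\,\return{e}_1$ satisfies $\sigma' \approx \sigma$, since every per-process projection agrees, yet there is no factorisation $\sigma' = \sigma'_1\sigma'_2$ with $\sigma'_1 \approx \sigma_1$: that would force $\pi^\delta_2(\sigma'_1)=\varepsilon$ while also forcing $\sigma'_1$ to contain $e_1$, and every prefix of $\sigma'$ containing $e_1$ already contains the process-$2$ events. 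This defeats both of your key claims: a prefix ``long enough to cover $\pi_p(\sigma_1)$ for every $p$'' need not have consumed \emph{exactly} the $\sigma_1$-part on the other processes, and legality and end-to-end quiescence of $\sigma_1$ buy you nothing here, because blocks living on disjoint process sets can be permuted in $\sigma'$ without disturbing any projection. No argument about invoke/response balance can repair this.

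What actually makes the lemma work --- and what the paper is gesturing at with ``all processes observe quiescence'' --- is that the blocks are separated by $\delta$ events, which $\pi^\delta_p$ retains for \emph{every} process $p$. Reading $\sigma$ as the corresponding $\pdelta$-trace $\delta\sigma_1\delta\cdots\delta\sigma_m\delta$, any $\sigma'$ with $\sigma'\approx\sigma$ must contain exactly $m+1$ occurrences of $\delta$, and matching the position of the $j$-th $\delta$ in $\pi^\delta_p(\sigma')$ against its position in $\pi^\delta_p(\sigma)$ for each $p$ forces every process's contribution to the $j$-th block of $\sigma'$ to lie between the $j$-th and $(j{+}1)$-st $\delta$ of $\sigma'$. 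The $\delta$'s are the globally visible barriers that make your per-process cut points coincide; once they are in place the decomposition is immediate, and the delicate amalgamation argument you sketch is not needed (and, as above, would not succeed without them).
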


Based on Definition \ref{def_qsc_with_delta}, this leads directly to
the following simplified ways of expressing when a \trace is allowed
under \qsc.

\begin{proposition}\label{prop:sufficient}
  Suppose $\sigma \in L(\pdelta)$ is a quiescent \trace. Then the following
  statements are equivalent:
  \begin{enumerate}
  \item $\sigma$ is allowed by $\calS$ under \qsc.
  \item There exists a $\sigma' \in L(\sdelta)$ such that $\sigma'
    \approx \sigma$.
  \item $\sigma  \in \trs(\sdelta)$.
  \end{enumerate}
\end{proposition}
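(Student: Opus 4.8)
The plan is to prove the chain of equivalences $1 \Leftrightarrow 2 \Leftrightarrow 3$ by unwinding the definitions, using the lemma immediately preceding the proposition (call it the decomposition lemma) to handle the block structure, and relying on the fact that $\sigma$ is quiescent so it decomposes as $\sigma = \delta \sigma_1 \delta \cdots \delta \sigma_m \delta$ with each $\sigma_i$ end-to-end quiescent (via \refprop{prop:end-to-end-qu} and the construction of $\pdelta$ in \refsec{sec:quiesc-cons}).

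For $1 \Leftrightarrow 2$: By \refdef{def_qsc_with_delta}, $\sigma$ is allowed by $\calS$ under \qsc iff $\pi_\Sigma(\sigma)$ is allowed by $\calS$ under \qsc, i.e.\ (writing $\pi_\Sigma(\sigma) = \tau_1 \cdots \tau_m$ with each $\tau_i$ end-to-end quiescent) there exist $\tau'_i \approx \tau_i$ with $\tau'_1 \cdots \tau'_m \in L(\calS)$. First I would show this is equivalent to: there exists $\sigma'' \in L(\calS)$ with $\sigma'' \approx \pi_\Sigma(\sigma)$. The non-trivial direction uses the decomposition lemma: any $\sigma'' \approx \pi_\Sigma(\sigma)$ splits as $\sigma'' = \tau'_1 \cdots \tau'_m$ with $\tau'_i \approx \tau_i$ (since $\pi_\Sigma(\sigma)$ is quiescent with end-to-end quiescent blocks). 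Then I would lift this to $\sdelta$: since $\calS$ is initially quiescent and each $\tau'_i$ is a $\approx$-image of a quiescent $\tau_i$ hence itself quiescent (analogue of \refprop{prop:quies-perm} for $R$; note $R \subseteq U$ so $\approx$-images are $\sim_U$-images and quiescence is preserved), the word $\delta \tau'_1 \delta \cdots \delta \tau'_m \delta$ is accepted by $\sdelta$, and it is $\approx \sigma$ because inserting the matching $\delta$'s on both sides respects $\pi^\delta_p$ for every $p$ (all processes observe every quiescence point). Conversely, if $\sigma' \in L(\sdelta)$ with $\sigma' \approx \sigma$, then $\pi_\Sigma(\sigma') \in L(\calS)$ (removing the self-loop $\delta$'s) and $\pi_\Sigma(\sigma') \approx \pi_\Sigma(\sigma)$, giving condition 1.

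For $2 \Leftrightarrow 3$: By definition $\sigma \in \trs(\sdelta)$ means $\sigma \in \clI(\sdelta)$ with $I = R$, i.e.\ there is $\sigma' \in L(\sdelta)$ with $\sigma \in [\sigma']_R$, i.e.\ $\sigma' \sim_R \sigma$. So I must show $\sigma' \sim_R \sigma \Leftrightarrow \sigma' \approx \sigma$ (for the relevant $\sigma'$). The direction $\sim_R \Rightarrow \approx$ is routine: each elementary swap permitted by $R$ exchanges two events on distinct processes, which changes neither $\pi^\delta_p$ (the swapped pair is not both retained) — so $\approx$ is preserved under $\rightarrow_R$. The converse, $\approx \Rightarrow \sim_R$, is the standard projection lemma of Mazurkiewicz trace theory: two words with the same $R$-dependence graph (equivalently, the same projections onto each "clique" of the dependence alphabet — here, for each process $p$ the subsequence of $p$-events-plus-$\delta$'s, these being exactly the maximal cliques of the complement of $R$) are $\sim_R$-equivalent. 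I would either cite this directly as the classical characterisation of trace equivalence via projections, or give the short induction on word length. The main obstacle is getting this last equivalence exactly right: $\delta$ is dependent on everything (it appears in every process's projection and commutes with nothing under $R$), so one must check that the "dependence alphabet" induced by $R$ has $\{\delta\} \cup \Sigma(p)$ as its family of maximal cliques and that $\approx$ (equality of all $\pi^\delta_p$) coincides with equality of all clique-projections — after which the classical result applies verbatim. Everything else is bookkeeping about the block decomposition and the preservation of legality/quiescence, which follows the pattern already established for \qc in \reflem{lem:def_membership}.
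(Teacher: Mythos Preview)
Your proposal is correct and follows the path the paper intends. The paper itself gives no proof at all for \refprop{prop:sufficient}: it simply states that ``based on Definition~\ref{def_qsc_with_delta}, this leads directly to'' the proposition, relying on the preceding decomposition lemma. Your argument is a faithful and correct expansion of that sketch: the equivalence $1 \Leftrightarrow 2$ is exactly the block-by-block matching enabled by the decomposition lemma together with the $\sdelta$ construction (mirroring \reflem{lem:def_membership}), and for $2 \Leftrightarrow 3$ you correctly identify the one nontrivial ingredient the paper leaves implicit, namely that $\sim_R$ coincides with $\approx$ on $\Sigma_\delta^*$. Your justification via the projection characterisation of Mazurkiewicz trace equivalence --- with maximal dependence cliques $\{\delta\} \cup \Sigma(p)$, since $\delta$ commutes with nothing and events on the same process are mutually dependent --- is the standard and correct way to close this gap.
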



The following lemma links \qc and \qsc.
\begin{lemma}
  If $\sigma \in L(\pdelta)$ is allowed by $\calS$ under \qsc then
  $\sigma$ is allowed by $\calS$ under \qc, but not
  vice-versa.
\end{lemma}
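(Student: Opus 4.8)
\emph{If $\sigma \in L(\pdelta)$ is allowed by $\calS$ under \qsc then $\sigma$ is allowed by $\calS$ under \qc, but not vice-versa.}

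The plan is to prove the two halves separately. For the forward implication, I would unfold the definitions and exploit the fact that the independence relation $R$ is contained in $U = \Sigma \times \Sigma$. Concretely, suppose $\sigma \in L(\pdelta)$ is allowed by $\calS$ under \qsc. Writing $\pi_\Sigma(\sigma) = \sigma_1 \sigma_2 \ldots \sigma_m$ with each $\sigma_i$ end-to-end quiescent (possible by \refprop{prop:end-to-end-qu}), the \qsc definition gives permutations $\sigma'_i$ with $\sigma_i \approx \sigma'_i$ and $\sigma'_1 \sigma'_2 \ldots \sigma'_m \in L(\calS)$. The key observation is that $\sigma_i \approx \sigma'_i$ (equality of all per-process projections) implies $\sigma'_i$ is obtainable from $\sigma_i$ by swapping adjacent events on distinct processes, i.e.\ $\sigma'_i \sim_R \sigma_i$, and since $R \subseteq U$ we get $\sigma'_i \sim_U \sigma_i$. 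Then \refdef{def:quiesc-cons} is satisfied, so $\pi_\Sigma(\sigma)$ is allowed by $\calS$ under \qc, and hence by \refdef{def:allowed-delta} so is $\sigma$. (Alternatively, and more slickly, I could cite \refprop{prop:sufficient} and \reflem{lem:def_membership}: \qsc-allowed means $\sigma \in \trs(\sdelta)$, and since $R \subseteq U$ we have $\trs(\sdelta) = \cls(\sdelta) \subseteq \cl(\sdelta)$, which is exactly \qc-allowed by \reflem{lem:def_membership}.)

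For the ``not vice-versa'' half, I would exhibit a concrete counterexample, and the natural candidate is already in hand: the histories $h_1$ and $h_2$ from \refex{ex:qch1}. The history $h_1$ begins with a pending \texttt{dequeue} $D_1$ by process $1$ and ends with its matching return $\return{D}_1(c)$; in $h_1$ process $1$'s projection is $D_1\,\return{D}_1(c)$, whereas any sequential witness in the specification (such as $h_2$, whose $\delta$-extension $h_{2.3}^\delta$ places $D_1\,\return{D}_1(c)$ last) necessarily reorders $D_1$ relative to the enqueue/dequeue events of the other processes. I would argue that $\delta\,h_1\,\delta \in L(\pdelta)$ is allowed under \qc (since $h_1 \sim_U h_2$ restricted appropriately, as shown in the example) but \emph{not} under \qsc: under \qsc the only reorderings permitted within a single end-to-end quiescent block are between distinct processes, and one checks that no permutation $\sigma'$ of $h_1$ with $\pi^\delta_p(\sigma') = \pi^\delta_p(h_1)$ for all $p$ lies in $L(\sdelta)$ — intuitively because $D_1$ is forced by process-order to appear before $\return{D}_4(b)$, $\return{D}_5(a)$, $\return{E}_6$ etc.\ in $h_1$ on process $1$'s timeline relative to... actually the cleanest argument is that the sequential queue specification cannot return $c$ from a dequeue that (by legality of the witness, which must be sequential) has its invoke before the enqueue of $c$; so process order of process $1$ is incompatible with the queue semantics. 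I would make this precise by noting process $1$ performs only $D_1$; the single-process projection constraint is vacuous there, so I actually need a cleaner example — see below.

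The main obstacle I anticipate is constructing the counterexample cleanly: in $h_1$ the only ``problematic'' process ($1$) calls a single operation, so $\pi^\delta_1$ imposes no ordering constraint, and the \qsc/\qc distinction for $h_1$ must instead be traced to the queue's data semantics interacting with which process dequeues what. To sidestep this I would instead build a minimal tailored example: take two processes, process $1$ performing $e_1 \return{e}_1 e_1' \return{e}_1'$ (two operations in order) and a specification $\calS$ that accepts $e_1' \return{e}_1' e_1 \return{e}_1$ but not $e_1 \return{e}_1 e_1' \return{e}_1'$, all within one end-to-end quiescent block (using a fresh outer invoke/response on process $2$ to keep the block from being quiescent prematurely, as in the proof of \reflem{lem:membership-unrestricted-qc}). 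Then the implementation \trace is \qc-allowed (all of $\Sigma$ commutes under $U$, so the reordering to the specification \trace is legal) but not \qsc-allowed (the reordering swaps $\return{e}_1$ with $e_1'$, both on process $1$, which $\approx$ forbids). Verifying both claims is then a routine check against the definitions. I would present this tailored example rather than relying on $h_1$, remarking only in passing that the diffracting queue of \refsec{sec:quiesc-cons-queue} furnishes a ``real'' instance of the same phenomenon.
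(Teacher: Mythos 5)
Your proposal is correct and follows essentially the same route as the paper: the forward direction is exactly the observation that the independence relation $R$ for \qsc is contained in $U$ for \qc, and the converse is refuted by a single end-to-end quiescent block, wrapped in an outer invoke/response on a second process, in which the specification accepts only a reordering of two operations belonging to one process. Your tailored counterexample (process $1$ performing $e_1\,\return{e}_1\,e_1'\,\return{e}_1'$ sequentially) is in fact slightly more careful than the paper's, whose witness $e\,e_1 e_2\return{e}_1\return{e}_2\return{e}$ with $e_1,e_2,\return{e}_1,\return{e}_2$ all on one process is not legal in the sense of \refasm{asm:legal}; your version avoids that wrinkle.
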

\begin{proof}
  The first part follows from the independence relation $R$ for \qsc
  being a subset of the independence relation $U$ for \qc.  To prove
  the second part it is sufficient to obtain a \trace $\sigma$ and
  specification $\calS$ such that $\sigma$ is allowed by $\calS$ for
  \qc but not for \qsc.  Suppose $\calQ$ allows \trace $\sigma = e\,
  e_1 e_2 \return{e}_1 \return{e}_2 \return{e}$ where events $e_1$,
  $e_2$, $\return{e}_1$ and $\return{e}_2$ are on the same process $p$
  and $\calS$ allows the \trace $\sigma' = e\, \return{e}\, e_1
  \return{e}_1 e_2 \return{e}_2$ but no other permutation of
  $\sigma$. Then $\sigma'$ is allowable under \qc, but not under \qsc.
\end{proof}


\section{Membership for \qsc}
\label{sec:membership-qsc}
In this section we consider the membership problem for \qsc. Some of
the results are similar to those for \qc, and hence, the proofs for
these results are elided.  The structure of this section is similar to
\refsec{sec:membership-problem} --- we first present the unrestricted
case (\refsec{sec:unrestricted-qsc}), then present the upper bounds for
the restricted cases (\refsec{sec:restricted-qsc}).

\subsection{Unrestricted \qsc}
\label{sec:unrestricted-qsc}

The unrestricted version of \qsc is NP-complete. First, we show that
the problem is in NP, then show that the problem is NP-hard.
\begin{lemma}
The membership problem for \qsc\ is in NP.
\end{lemma}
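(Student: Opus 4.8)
The plan is to mirror the NP-membership argument already used for \qc (\reflem{lem:def_membership} and the subsequent ``in NP'' lemma), adapting it to the independence relation $R$ and the process-wise equivalence $\approx$. By \refprop{prop:sufficient}, for a quiescent \trace $\sigma \in L(\pdelta)$ we have that $\sigma$ is allowed by $\calS$ under \qsc iff $\sigma \in \trs(\sdelta)$, i.e., iff there is some $\sigma' \in L(\sdelta)$ with $\sigma' \approx \sigma$ (equivalently, $\sigma \sim_R \sigma'$). So the membership problem reduces to: given $\sigma \in L(\pdelta)$ and the specification automaton $\sdelta$, decide whether $L(\sdelta)$ contains a \trace $\sigma'$ that is $R$-equivalent to $\sigma$.

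First I would describe the nondeterministic Turing machine: on input $\sigma$ (of length $n$) and $\sdelta$, it guesses a word $\sigma'$ of length exactly $n$ over $\Sigma_\delta$, verifies in polynomial time that $\sigma' \in L(\sdelta)$ by simulating a guessed accepting path of $\sdelta$, and verifies that $\sigma' \approx \sigma$, i.e., that $\pi^\delta_p(\sigma) = \pi^\delta_p(\sigma')$ for every process $p$. Since \refasm{asm:proc-bounded} bounds the number of processes and each projection is computable in linear time, this check is polynomial; the overall machine runs in nondeterministic polynomial time, so membership for \qsc is in NP. I would also note (as the paper does for \qc) the alternative phrasing: guess $\sigma' \in L(\sdelta)$ and guess a sequence of $R$-commutations transforming $\sigma$ into $\sigma'$ — but bounding the number of such commutations and checking them is fiddlier, so the projection-equality test via $\approx$ is the cleaner certificate, and it is exactly what \refprop{prop:sufficient} licenses.

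The only genuinely substantive point — and the one I expect to be the main obstacle to state cleanly rather than to prove — is justifying that a polynomial-size certificate suffices: namely that $\sigma \in \trs(\sdelta)$ is witnessed by a single $\sigma' \in L(\sdelta)$ of the same length as $\sigma$, rather than requiring a long derivation. This is immediate from the definition $\trs(\sdelta) = \bigcup_{\tau \in L(\sdelta)} [\tau]_R$ together with the fact that $\sim_R$ (hence $\approx$, by \refprop{prop:sufficient}) preserves length and multiset of events, so any witness $\tau$ with $\sigma \in [\tau]_R$ has $|\tau| = |\sigma|$. Thus the guessed $\sigma'$ and the guessed accepting path of $\sdelta$ through it together form a certificate of polynomial size, and verification is polynomial-time. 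I would close by remarking that NP-hardness will be established separately (by a reduction analogous to \reflem{lem:membership-unrestricted-qc}), so that together these give NP-completeness of membership for \qsc.
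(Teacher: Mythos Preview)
Your proposal is correct and follows essentially the same approach as the paper: guess a same-length \trace $\sigma' \in L(\sdelta)$ and verify in polynomial time that it is $R$-equivalent to $\sigma$. The only difference is cosmetic: the paper phrases the second step as guessing a permutation $\sigma''$ of $\sigma$ consistent with $R$ and checking $\sigma'' = \sigma'$, whereas you verify $\sigma' \approx \sigma$ directly via the per-process projections, which is the same check made explicit (and, as you note, is exactly what \refprop{prop:sufficient} licenses).
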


\begin{proof}
Given \trace $\sigma$ and specification $\calS$, a non-deterministic Turing machine
can solve the membership problem, of deciding whether $\sigma \in \cls(\sdelta)$,
as follows.
First, the Turing machine guesses a \trace $\sigma'$ of $\sdelta$ with the same length as $\sigma$.
The Turing machine then guesses a permutation $\sigma''$ of $\sigma$ that is consistent
with the independence relation $R$. 
Finally, the Turing machine checks whether $\sigma'' = \sigma'$.
This process takes polynomial time and so,
since a non-deterministic Turing machine can solve the membership problem in polynomial time,
the problem is in NP.
\end{proof}

We can adapt the proof, that the membership problem for \qc is NP-hard
(Lemma \ref{lem:membership-unrestricted-qc}), by simply having a
separate process for each invoke/response pair.  We therefore have the
following.

\begin{lemma}
The membership problem for \qsc is NP-hard.
\end{lemma}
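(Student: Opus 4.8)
The plan is to mimic the NP-hardness proof for \qc (\reflem{lem:membership-unrestricted-qc}) almost verbatim, adjusting only the process labelling so that the sequential-consistency constraint imposed by \qsc becomes vacuous. Recall that in that proof the specification $\calS$ built from a one-in-three SAT instance uses invocation/response pairs $e_j,\rete_j$, one pair per clause $C_j$, together with padding events $e_0,\rete_0,e,\rete$, and the target \trace is $\sigma = e_0 e_1 \rete_1 \ldots e_n \rete_n\, e\, \rete\, \rete_0$. The only place the independence relation is used is in asserting that $\sigma \in \cl(\sdelta)$ iff some permutation of $\sigma$ consistent with $U$ lies in $L(\sdelta)$; under \qc every pair of $\Sigma$-events commutes, so ``consistent with $U$'' means ``any permutation''. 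To make the identical argument go through for \qsc we assign a \emph{distinct process} to each invoke/response pair: $e_j$ and $\rete_j$ are on process $p_j$, and similarly $e_0,\rete_0$ on $p_0$ and $e,\rete$ on process $p$, all processes pairwise distinct. Then any two events from different pairs are on different processes, so they are related by $R$; and within a pair the invoke precedes the response in $\sigma$ and in every \trace of $\sdelta$, so those orders never need to be swapped. Hence for this instance $[\sigma]_R$ contains exactly the same relevant permutations as $[\sigma]_U$, and $\sigma \in \cls(\sdelta)$ iff $\sigma \in \cl(\sdelta)$.

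The steps I would carry out, in order, are as follows. First, restate the construction of $\sdelta$ from an arbitrary one-in-three SAT instance $v_1,\dots,v_k$, $C_1,\dots,C_n$ exactly as in \reflem{lem:membership-unrestricted-qc}, but now additionally stipulating the per-pair process assignment described above; note this keeps \refasm{asm:legal} satisfied since each $\pi_{p_j}$ restricted to any \trace of $\sdelta$ or to $\sigma$ is sequential. Second, observe that $\sigma = e_0 e_1 \rete_1 \ldots e_n \rete_n\, e\, \rete\, \rete_0$ is legal, quiescent, and end-to-end quiescent (its only proper prefixes have a pending invocation, namely $e_0$), so $\sigma$ is allowed by $\calS$ under \qsc iff some $\sigma' \approx \sigma$ lies in $L(\sdelta)$, which by \refprop{prop:sufficient} is iff $\sigma \in \cls(\sdelta)$. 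Third, argue the key equivalence: since distinct pairs lie on distinct processes, $\sigma \approx \sigma'$ allows the relative order of the pairs $(e_j,\rete_j)$ to be permuted arbitrarily, while forcing $e_j$ before $\rete_j$, forcing $e$ before $\rete$ before the final $\rete_0$, and forcing $e_0$ at the front; this is precisely the same set of reorderings available under $U$ in the original proof, because in the original proof the orders $e_j$ before $\rete_j$ and $e_0$ first and $e,\rete$ last were already the only ones realisable by a \trace of $\sdelta$. Fourth, conclude by the same counting argument: $\sigma \in \cls(\sdelta)$ iff $\sdelta$ has a path $\rho_1^{B_1}\cdots\rho_k^{B_k}$ from $s_0$ to $s_k$ whose label contains each $e_j,\rete_j$ exactly once, which holds iff the assignment $v_i = B_i$ makes each clause contain exactly one true literal, i.e.\ iff the one-in-three SAT instance is satisfiable. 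Since the construction is polynomial and one-in-three SAT is NP-complete \cite{Schaefer78}, NP-hardness follows.

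I do not expect a genuine obstacle here; the proof is essentially a reuse of \reflem{lem:membership-unrestricted-qc} with a relabelling. The one point needing a little care — and where I would spend most of the writing — is making precise that the per-pair process assignment really does reduce $[\sigma]_R$ (for this particular $\sigma$ and $\sdelta$) to the same thing that $[\sigma]_U$ gave in the \qc proof, rather than to something strictly smaller. Concretely: $R$ is a strict subset of $U$, so a priori $\cls(\sdelta) \subseteq \cl(\sdelta)$ only, and one must check the reverse inclusion holds \emph{for the words we care about}. The needed fact is that any $\sigma' \in L(\sdelta)$ with the same Parikh image as $\sigma$ automatically has invokes before their matching responses (true since $\calS$ is sequential, \refasm{asm:spec-seq}) and has $e_0$ first, $e,\rete$ last in the pattern dictated by the shape of $\sdelta$; given that, such $\sigma'$ is reachable from $\sigma$ by swapping only events on distinct processes, hence $\sigma \sim_R \sigma'$. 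Spelling this observation out cleanly is the only non-mechanical part, and it is short. Since the proofs of the elided \qsc results are stated to be ``similar to those for \qc'', a brief write-up in that spirit is appropriate, with the new process-assignment observation made explicit.
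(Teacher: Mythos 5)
Your proposal is correct and matches the paper's own argument, which likewise reduces from the \qc case ``by simply having a separate process for each invoke/response pair.'' The extra care you take in verifying that $[\sigma]_R$ and $[\sigma]_U$ coincide on the relevant words (since invokes always precede their matching responses in $\sigma$ and in every \trace of the sequential $\sdelta$) is exactly the observation the paper leaves implicit.
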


\begin{theorem}
The membership problem for \qsc is NP-complete.
\end{theorem}

\subsection{Restricted \qsc} 
\label{sec:restricted-qsc}

We now adapt the approach developed for \qc, that showed that the
membership problem can be solved in polynomial time if we either have
have an upper limit on the number of events between any two instances
of quiescence, or on the number of processes in the system. 

\paragraph{Upper limit on number of events between quiescence}

Like \refdef{def:MU}, we construct a finite automaton that
accepts any permutation of \trace $\sigma$ that preserves the order of
events within a single process.  For a \trace $\sigma = e_1, \ldots,
e_k$ of distinct elements, $1 \leq i \leq k$ and process $p$, we let
\begin{align*}
  pre_p(\sigma, i) & = \{e_j \mid 1 \leq j < i \land e_j \in \Sigma(p)\}
\end{align*}
be the set of elements of $\sigma$ with index smaller than $i$ that
are part of process $p$.
\begin{definition}
  Suppose \trace $\sigma_i = e_1 \ldots e_k$ is such that for all $i, j
  \in \{1,\dots,k\}$, if $i \neq j$ then $e_i \neq e_j$. We let
  $\machineb {\sigma_i}$ be the finite automaton
  $(2^\Sigma,\emptyset,\Sigma,t,\{\Sigma\})$ such that:
  \begin{itemize}
  \item $\Sigma = \{e_1, \ldots, e_k\}$ and,
  \item for all $T, T' \in 2^\Sigma$ and $e_i \in \Sigma$, we have
    $(T,e_i,T') \in t$ iff $e_i \not \in T$, $T' = T \cup \{e_i\}$ and
    $pre_p(\sigma, i) \subseteq T$.
  \end{itemize}
\end{definition}


Using this definition, we obtain a new FA $\machinex \sigma$.

\begin{definition}
Given \trace $\sigma = \sigma_1 \sigma_2 \ldots \sigma_k \in \Sigma^*$
such that each $\sigma_i$ is end-to-end quiescent,
\[
\machinex \sigma = \machineb {\sigma_1} \cdot \machineb {\sigma_2} \cdot \ldots \cdot \machineb {\sigma_k}.
\]
\end{definition}

The following is clear from the definition and from Proposition \ref{prop:sufficient}.

\begin{lemma}
  Given \trace $\sigma$ and specification $\calS$, $\sigma \in
  \trs(\sdelta)$ iff $L(\machinex \sigma) \cap L(S) \neq \emptyset$.
\end{lemma}

As before, we have the following result.

\begin{theorem}\label{theorem:poly_memb_qsc1}
  If $b$ is an upper limit on the number of events between two
  occurrences of quiescence in each \trace of $\calQ$, then
  the membership problem for \qsc is in PTIME.
\end{theorem}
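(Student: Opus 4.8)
The plan is to mirror the argument for restricted \qc almost verbatim, using the new automaton $\machineb{\sigma_i}$ in place of $\machinea{\sigma_i}$. First I would observe, via \refasm{asm:quiescent-path} and \refprop{prop:end-to-end-qu} together with the construction of $\pdelta$, that any $\sigma \in L(\pdelta)$ decomposes as $\sigma = \delta\,\sigma_1\,\delta\,\sigma_2\,\delta\,\ldots\,\delta\,\sigma_k\,\delta$ where each $\sigma_i$ is legal and end-to-end quiescent. By the lemma immediately preceding, $\sigma \in \trs(\sdelta)$ iff $L(\machinex\sigma) \cap L(\calS) \neq \emptyset$, and by \refprop{prop:sufficient} this is exactly what it means for $\sigma$ to be allowed by $\calS$ under \qsc. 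So the membership problem reduces to emptiness of the intersection of two finite automata, which is decidable in time polynomial in the product of their sizes.

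The remaining point is the size bound. For each block $\sigma_i$, the automaton $\machineb{\sigma_i}$ has state set $2^{\Sigma_i}$ where $\Sigma_i = \{e_1,\ldots,e_{|\sigma_i|}\}$, so its size is exponential in $|\sigma_i|$; but under the hypothesis $|\sigma_i| \le b$ for a fixed constant $b$, this is a constant (at most $2^b$) and hence $\machineb{\sigma_i}$ has bounded size. Since $\machinex\sigma = \machineb{\sigma_1}\cdot\machineb{\sigma_2}\cdot\ldots\cdot\machineb{\sigma_k}$ and the $k$ blocks partition $\sigma$ (so $k \le |\sigma|$), the product automaton $\machinex\sigma$ has size linear in $|\sigma|$ (up to the constant factor $2^b$). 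Deciding whether $L(\machinex\sigma) \cap L(\calS) \neq \emptyset$ can then be done in time polynomial in the sizes of $\machinex\sigma$ and $\calS$, and therefore in time polynomial in $|\sigma|$ and $|\calS|$. This establishes that membership for \qsc is in PTIME under the bound $b$.

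I expect essentially no obstacle here beyond checking that the new commutation constraint built into $\machineb{\sigma_i}$ — the requirement $pre_p(\sigma,i) \subseteq T$, which enforces that events of a single process are consumed in program order — is faithfully captured, i.e.\ that $L(\machineb{\sigma_i})$ is precisely the set of $\approx$-equivalent permutations of $\sigma_i$; but this is exactly the content of the preceding lemma, which I am entitled to assume. Consequently the proof is just the restricted-\qc argument with $\machinea{}$ replaced by $\machineb{}$ and $\cl$ replaced by $\trs$, and I would state it in precisely that compressed form, noting ``as in the proof of \refthm{} for restricted \qc'' for the routine steps.
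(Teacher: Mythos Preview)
Your proposal is correct and matches the paper's approach exactly: the paper states ``As before, we have the following result'' and gives no separate proof, relying implicitly on the same argument as for restricted \qc with $\machinea{\cdot}$ replaced by $\machineb{\cdot}$ and the preceding lemma in place of the corresponding one for \qc. Your write-up is in fact more detailed than what the paper provides.
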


\paragraph{Upper limit on number of processes} 
We now consider the membership problem for the case in which there is
a fixed upper limit on the number of processes. Note that this notion
is not covered by \refasm{asm:proc-bounded}, which states that the
number of processes for any particular implementation or specification is
bounded. The results here state that if we place an upper bound on the
number of processes, with that bound being applied to all specifications and implementations
being considered,
then the set of membership problems that satisfy this bound can be solved in polynomial time. 

As before, we start by defining an FA whose language is $[\sigma]_R$.
Given some $\sigma_i$, the basic idea is that the state of the FA will be a tuple that, for each process $p$, records the most
recent event on $p$.  Thus, a state $q$ will be represented by a tuple
of events (the most recent events observed on each process) and an
event $a$ on process $p$ will only be possible in state $q$ if the
event that immediately precedes $a$ on $p$ is in this tuple.

\begin{definition}
  Suppose that \trace $\sigma_i = e_1 \ldots e_k$, in which each $e_i$
  is distinct, has projection $\pi_p(\sigma_i) = e_1^p \ldots
  e_{k_p}^p$ on process $p$ ($1 \leq p \leq n$).  We let $\machinec
  {\sigma_i}$ be the FA $(T,q_0,\Sigma,t,F)$ such
  that
  \begin{itemize}
  \item $\Sigma = \{e_1, \ldots, e_k\}$,
  \item $T = \{e_0^1,e_1^1, \ldots, e_{k_1}^1, \varepsilon\} \times
    \{e_0^2, e_1^2, \ldots, e_{k_2}^2, \varepsilon\} \times \ldots
    \times \{e_0^n,e_1^n, \ldots, e_{k_n}^n, \varepsilon\}$,
  \item $q_0 = (e_0^1, \ldots, e_0^n)$,
  \item $F = \{(e_{k_1}^1, e_{k_2}^2,
    \ldots, e_{k_n}^n)\}$, and
  \item $(T,a,T') \in t$ for $a \in \Sigma(p)$ if and only if the
    following hold: $T = (e_{j_1}^1, e_{j_2}^2, \ldots, e_{j_n}^n)$,
    $j_p < k_p$, $a = e_{j_{p}+1}^{p}$, and $T' = (e_{j_1}^1,
    e_{j_2}^2, \ldots, e_{j_{p}+1}^{p}, \ldots, e_{j_n}^n)$.
  \end{itemize}

\end{definition}

The following defines $\machinexx \sigma$.

\begin{definition}
Given \trace $\sigma = \sigma_1 \sigma_2 \ldots \sigma_k$
such that each $\sigma_i$ is end-to-end quiescent,
\[
\machinexx \sigma = \machinec {\sigma_1} \cdot \machinec {\sigma_2} \cdot \ldots \cdot \machinec {\sigma_k}.
\]
\end{definition}

\begin{lemma}
  Given \trace $\sigma$ and specification $\calS$, $\sigma \in
  \trs(\sdelta)$ iff $L(\machinexx \sigma) \cap L(S) \neq \emptyset$.
\end{lemma}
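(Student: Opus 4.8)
The plan is to show that $\machinexx{\sigma}$ is a correct ``permutation automaton'' for the independence relation $R$, exactly mirroring the role that $\machinex{\sigma}$ plays for $R$ in the event-bounded case and that $\machinea{\sigma}$ plays for $U$ in \reflem{lem:def_membership}. By \refprop{prop:sufficient}, $\sigma \in \trs(\sdelta)$ iff there exists $\sigma' \in L(\sdelta)$ with $\sigma' \approx \sigma$, so it suffices to prove that $L(\machinexx{\sigma}) = \{\sigma' \mid \sigma' \approx \sigma\}$; intersecting with $L(\calS) = L(\sdelta)$ restricted to the $\delta$-free traces then gives the claim (using that each $\machinec{\sigma_i}$ has a unique final state so the language product $\machinec{\sigma_1}\cdot\ldots\cdot\machinec{\sigma_k}$ is built by empty transitions, as noted before \refdef{def:machine}).

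First I would prove the single-block statement: $L(\machinec{\sigma_i}) = \{\tau \mid \pi_p(\tau) = \pi_p(\sigma_i) \text{ for every process } p\}$, i.e.\ $L(\machinec{\sigma_i}) = [\sigma_i]_R$ since $R$ permits exactly the reorderings that preserve every per-process projection. For the $\subseteq$ direction, observe that a run of $\machinec{\sigma_i}$ maintains the invariant that, after reading a prefix $\tau'$, the $p$-th component of the current state is the last element of $\pi_p(\tau')$ (or $e_0^p$ if that projection is empty); the transition guard $a = e_{j_p+1}^p$ forces the events on process $p$ to be consumed in exactly the order they appear in $\pi_p(\sigma_i)$, and the unique final state forces all of them to be consumed, so $\pi_p(\tau) = \pi_p(\sigma_i)$ for every $p$. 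For the $\supseteq$ direction, given $\tau$ with all projections equal to those of $\sigma_i$, replay $\tau$ letter by letter: each letter $a \in \Sigma(p)$ is the next unconsumed $p$-event, which is exactly $e_{j_p+1}^p$ for the current component value $e_{j_p}^p$, so the required transition exists; since $\tau$ and $\sigma_i$ have the same multiset of events the run ends in the state whose components are the last per-process events, which is the accepting state.

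Next I would lift this to the full trace. Using $\xi(\sigma_i)$ for each block and the lemma stated just before \refprop{prop:sufficient} (which says $\sigma' \approx \sigma$ iff $\sigma' = \sigma'_1 \ldots \sigma'_k$ with $\sigma_j \approx \sigma'_j$ for each $j$ — here one needs that $\delta$ does not commute with anything in $R$, so block boundaries are respected), we get $\{\sigma' \mid \sigma' \approx \sigma\} = L(\machinec{\sigma_1}) \cdots L(\machinec{\sigma_k}) = L(\machinexx{\sigma})$. Combining with \refprop{prop:sufficient} yields $\sigma \in \trs(\sdelta)$ iff $L(\machinexx{\sigma}) \cap L(\calS) \neq \emptyset$, using $L(\calS) = L(\sdelta)$ on $\Sigma$-traces together with the fact that membership only asks about the $\delta$-free witness.

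The main obstacle, such as it is, is the bookkeeping in the single-block claim: one must be careful that the state guard genuinely captures ``the event immediately preceding $a$ on process $p$ has been observed'' and nothing more — in particular that there is no spurious cross-process constraint — and that the uniqueness of the accepting state of each $\machinec{\sigma_i}$ is what makes the language product well-behaved. Once that invariant is pinned down, the rest is a routine induction on trace length, which is presumably why the paper states this lemma with the remark that it ``is clear from the definition.''
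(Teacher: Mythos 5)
Your proposal is correct and follows exactly the route the paper intends: the paper states this lemma without proof, and your single-block invariant argument (the state tuple records the last consumed event of each process, so $L(\machinec{\sigma_i})=[\sigma_i]_R$) is precisely the formalization of the rationale given in the text immediately before the definition of $\machinec{\sigma_i}$, combined with the block decomposition of $\approx$ and \refprop{prop:sufficient} just as in the analogous lemma for $\machinex{\sigma}$. No gaps.
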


The important point now is that the state set of $\machinec{\sigma_i}$
has size that is exponential in terms of the number of processes but
if the number of processes is bounded then the size is polynomial in
terms of the length of $\sigma$.  In particular, if there is an upper
bound $b$ on the number of processes then $\machinec{\sigma_i}$ has at
most $|\sigma_i|^b$ states.  We therefore obtain the following result.

\begin{theorem}\label{theorem:poly_memb_qsc}
  If $b$ is an upper limit on the number of processes for each \trace
  of $\calQ$ then the membership problem for \qsc is in PTIME.
\end{theorem}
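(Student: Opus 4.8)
The plan is to mimic, essentially verbatim, the argument used for bounded \qc, replacing the permutation automaton $\machinea{\cdot}$ by $\machinec{\cdot}$ and observing that fixing the number of processes makes the latter polynomial in size. First I would invoke \refasm{asm:quiescent-path} to conclude that any $\sigma \in L(\pdelta)$ is quiescent, and then use \refprop{prop:end-to-end-qu} together with the construction of $\pdelta$ to write $\sigma = \delta\, \sigma_1\, \delta\, \sigma_2\, \delta \ldots \delta\, \sigma_k\, \delta$ with each $\sigma_i$ legal and end-to-end quiescent. By the preceding lemma, deciding whether $\sigma \in \trs(\sdelta)$ — i.e.\ membership under \qsc — is equivalent to deciding whether $L(\machinexx \sigma) \cap L(\calS) \neq \emptyset$, so it is enough to show this emptiness check can be carried out in polynomial time in the combined size of $\sigma$ and $\calS$.

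Next I would carry out the size analysis for $\machinec{\sigma_i}$. Its state set $T$ is a product, over the (at most $b$) processes $p$, of the set of events of $\sigma_i$ lying on $p$ together with a single extra sentinel component; hence $|T| \le (|\sigma_i| + 1)^b$, which for the fixed constant $b$ is polynomial in $|\sigma_i|$, and the transition relation has at most $|T| \cdot |\sigma_i|$ entries. Since each $\machinec{\sigma_i}$ has a unique final state, the language product $\machinexx \sigma = \machinec{\sigma_1} \cdot \ldots \cdot \machinec{\sigma_k}$ is realised simply by adding one $\varepsilon$-transition between consecutive components, so $\machinexx \sigma$ has at most $\sum_{i} (|\sigma_i|+1)^b \le (|\sigma|+1)^b$ states and polynomially many transitions — again polynomial in $|\sigma|$ for fixed $b$.

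Finally, deciding non-emptiness of $L(\machinexx \sigma) \cap L(\calS)$ reduces to the standard product construction of $\machinexx \sigma$ with $\calS$ followed by a reachability test from the initial to a final state, which runs in time polynomial in $|\machinexx \sigma| \cdot |\calS|$, hence polynomial in the combined size of $\sigma$ and $\calS$. This yields the theorem, noting that the bound $b$ is treated as a fixed parameter (applying uniformly across all implementations and specifications considered), so that $|\sigma_i|^b$ counts as polynomial.

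I expect the only real obstacle to be the bookkeeping in the size bound for $\machinec{\sigma_i}$ and the verification that the $\varepsilon$-concatenation of the $\machinec{\sigma_i}$ does not blow up — both routine once one is careful that $b$ is a constant rather than part of the input. The correctness of the reduction itself is inherited from the earlier lemma characterising $[\sigma]_R$ via $\machinec{\cdot}$ and from \refprop{prop:sufficient}, so no new combinatorial argument about process orderings is required here.
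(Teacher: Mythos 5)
Your proposal is correct and follows essentially the same route as the paper: decompose $\sigma$ at the $\delta$'s, use the per-process ``most recent event'' automata $\machinec{\sigma_i}$ whose state count is at most polynomial ($\approx |\sigma_i|^b$) once the number of processes is bounded by the constant $b$, concatenate them, and reduce membership to a polynomial-time emptiness-of-intersection check with $\calS$ via the preceding lemma. The paper leaves the proof implicit in the surrounding discussion; your write-up just fills in the same bookkeeping.
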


\section{Correctness for \qsc}
\label{sec:correctness-qsc}

This section now presents decidability and complexity results for
\qsc. Following the pattern of the previous sections, we present
unrestricted \qsc (\refsec{sec:unrestricted-qsc-1}), and then
restricted \qsc (\refsec{sec:restricted-qsc-1}).

\subsection{Unrestricted \qsc}
\label{sec:unrestricted-qsc-1}

If we consider a system with two processes $1$ and $2$ such that $e_1,
e_2 \in \Sigma(1)$ and $e_3 \in \Sigma(2)$ then we have that:
$(e_1,e_2) \in R$, $(e_2,e_1') \in R$ but $(e_1,e_2') \not \in
R$. Therefore, the independence relation is not transitive and so we
expect correctness to be undecidable (Lemma \ref{lem:rat-tr-lang}).
It could, however, be argued
that we might have changed the nature of the problem by placing
restrictions on the structure of the specification.  In this section
we therefore prove that, as expected, the correctness problem is
undecidable for \qsc.  The proof will be based on showing how an
instance of Post's Correspondence Problem can be reduced to an
instance of the correctness problem for \qsc.

\begin{definition}
Given alphabet $\Gamma$ and sequences $\alpha_1, \ldots, \alpha_n \in \Gamma^*$ and $\beta_1, \ldots, \beta_n \in \Gamma^*$,
Post's Correspondence Problem (PCP) is to decide whether there is a non-empty sequence
$i_1 \ldots i_k \in [1,n]$ of indices such that
$\alpha_{i_1} \ldots \alpha_{i_k} = \beta_{i_1} \ldots \beta_{i_k}$.
\end{definition}

Post's Correspondence Problem is known to be undecidable \cite{post46}.

First we explain how the proof operates.  Given an instance of the PCP
defined by sequences $\alpha_1, \ldots, \alpha_n$ and $\beta_1,
\ldots, \beta_n$, we will construct FA $\machinepcp{\alpha_1, \ldots,
  \alpha_n,\beta_1, \ldots, \beta_n}$ that will act as the
implementation.  The quiescent \traces of this FA will have a
particular form: a quiescent \trace $\sigma$ will be defined by a
sequence $i_1 \ldots i_k \in [1,n]$ of indices, the projection of
$\sigma$ on process $1$ will correspond to $\alpha_{i_1} \ldots
\alpha_{i_k}$ and the projection of $\sigma$ on process $2$ will
correspond to $\beta_{i_1} \ldots \beta_{i_k}$.  Thus, there is a
solution to this instance of the PCP if and only if
$\machinepcp{\alpha_1, \ldots, \alpha_n,\beta_1, \ldots, \beta_n}$ has
a non-empty quiescent trace $\sigma$ such that the projections on the
two processes define the same sequences in
$\Gamma^*$.  We then define a specification $\calS$ that allows the
set of \traces in which the projections on the two
processes differ (plus the empty sequence).  When brought
together, $\machinepcp{\alpha_1, \ldots, \alpha_n,\beta_1, \ldots,
  \beta_n}$ is not a correct implementation of $\calS$ under \qsc if
and only if $\machinepcp{\alpha_1, \ldots, \alpha_n,\beta_1, \ldots,
  \beta_n}$ has a quiescent \trace whose projections on the two
processes define the same sequences in $\Gamma^*$ and this is the case
if and only if there is a solution to this instance of the PCP.  As a
result, correctness under \qsc being undecidable follows from the PCP
being undecidable.

We now construct the FA $\machinepcp{\alpha_1, \ldots,
  \alpha_n,\beta_1, \ldots, \beta_n}$.  There will be two processes and for each $p
\in \{1,2\}$ and letter $a$ in the alphabet $\Gamma$ used, we
will create an invoke event $\pcpi{e}{a}{p}$ and a response event
$\pcpr{e}{a}{p}$.  We will add two additional events: a matching
invoke $e$ and response $\return{e}$ for process $1$. (This choice of
process $1$ for $e$ and $\rete$ is is arbitrary, i.e., we could have
chosen $e$ and $\rete$ to be events of process $2$ as well).  Thus, we
have that:
\[
\Sigma = \{\pcpi{e}{a}{p} \mid a \in \Gamma \wedge p \in \{1,2\}\} \cup
\{\pcpr{e}{a}{p} \mid a \in \Gamma \wedge p \in \{1,2\}\} \cup \{e,\rete\}
\]

We define a mapping from a sequence in $\Gamma^*$ and process number
$p$ to a sequence in $\Sigma^*$ as follows (in which $a \in \Gamma$
and $\gamma \in \Gamma^*$).
\begin{align*}
to_\Sigma(\varepsilon,p)  ={} & \varepsilon & \qquad 
to_\Sigma(a \gamma,p)  = {}& \pcpi{e}{a}{p} \:\: \pcpr{e}{a}{p} \:\: to_\Sigma(\gamma,p)
\end{align*}
Then, we define an equivalence relation on sequences in $\Sigma^*$
that essentially `ignores' the process number.
\[
\begin{array}{rcl}
  \varepsilon & \equiv & \varepsilon \\
  \pcpi{e}{a}{p} \: \sigma_1 & \equiv & \pcpi{e}{b}{q}  \: \sigma_2
  \mbox{ if and only if } a = b \mbox{ and } \sigma_1 \equiv \sigma_2
  \\ 
  \pcpr{e}{a}{p} \: \sigma_1 & \equiv & \pcpr{e}{b}{q}  \: \sigma_2
  \mbox{ if and only if } a = b \mbox{ and } \sigma_1 \equiv \sigma_2 
\end{array}
\]

From the initial state $q_0$ of $\machinepcp{\alpha_1, \ldots,
  \alpha_n,\beta_1, \ldots, \beta_n}$ there is a transition with label
$e$ to state $q$.  For all $1 \leq i \leq n$ there is a path from
$q$ to state $q'$ with the following label:
\[
to_\Sigma(\alpha_i,1) \: to_\Sigma(\beta_i,2)
\]

Similarly, for all $1 \leq i \leq n$ there is a path from $q'$ to
state $q'$ with label $to_\Sigma(\alpha_i,1) \: to_\Sigma(\beta_i,2)$.
Finally, there is a transition from $q'$ with label $\rete$ to the
unique final state $q_F$.

Now consider a path from $q$ or $q'$ that ends in $q'$.
Such a path must have a corresponding sequence 
$i_1 \ldots i_k \in [1,n]$ of indices and the projections of the label of this path
on processes $1$ and $2$ are 
$\alpha_{i_1} \ldots \alpha_{i_k}$ and $\beta_{i_1} \ldots \beta_{i_k}$ respectively.
We therefore have the following key property.

\begin{lemma}\label{lemma:PCP_P}
  $L(\machinepcp{\alpha_1, \ldots, \alpha_n,\beta_1, \ldots,
    \beta_n})$ contains a \trace $e \: \sigma \: \rete$ such that
  $\pi_1(\sigma) \equiv \pi_2(\sigma)$ iff there is a solution to the
  instance of the PCP defined by $\alpha_1, \ldots, \alpha_n$ and
  $\beta_1, \ldots, \beta_n$.
\end{lemma}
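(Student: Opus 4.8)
The plan is to characterise $L(\machinepcp{\alpha_1, \ldots, \alpha_n,\beta_1, \ldots, \beta_n})$ explicitly and then rewrite the condition $\pi_1(\sigma)\equiv\pi_2(\sigma)$ as an equality of concatenations in $\Gamma^*$. By inspection of the construction, $q$ is the only state reachable from $q_0$ in one step and is reached only via the $e$-edge; every edge into $q'$ carries a complete block $to_\Sigma(\alpha_i,1)\,to_\Sigma(\beta_i,2)$ (from $q$ or from $q'$); and the unique final state $q_F$ is reached only via the $\rete$-edge out of $q'$. Hence every word of the language has the form $e\,\sigma\,\rete$ with $\sigma = to_\Sigma(\alpha_{i_1},1)\,to_\Sigma(\beta_{i_1},2)\cdots to_\Sigma(\alpha_{i_k},1)\,to_\Sigma(\beta_{i_k},2)$ for some $k\ge 1$ and $i_1,\dots,i_k\in[1,n]$, and conversely each such word is accepted; I would record this as a preliminary claim.

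Next I would establish two routine facts, each by induction on the length of the $\Gamma^*$-argument: (a) $to_\Sigma(\gamma\gamma',p) = to_\Sigma(\gamma,p)\,to_\Sigma(\gamma',p)$; and (b) for $\gamma,\gamma'\in\Gamma^*$, $to_\Sigma(\gamma,1)\equiv to_\Sigma(\gamma',2)$ iff $\gamma=\gamma'$ (the inductive cases follow directly from the defining clauses of $\equiv$, since $to_\Sigma(a\gamma,p)$ begins with $\pcpi{e}{a}{p}$ followed by $\pcpr{e}{a}{p}$). Because every event of $to_\Sigma(\gamma,p)$ lies in $\Sigma(p)$ and $e,\rete\notin\sigma$, the projections of the word $\sigma$ above are $\pi_1(\sigma)=to_\Sigma(\alpha_{i_1},1)\cdots to_\Sigma(\alpha_{i_k},1)$ and $\pi_2(\sigma)=to_\Sigma(\beta_{i_1},2)\cdots to_\Sigma(\beta_{i_k},2)$, which by (a) equal $to_\Sigma(\alpha_{i_1}\cdots\alpha_{i_k},1)$ and $to_\Sigma(\beta_{i_1}\cdots\beta_{i_k},2)$ respectively.

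Putting these together, $L(\machinepcp{\alpha_1, \ldots, \alpha_n,\beta_1, \ldots, \beta_n})$ contains some $e\,\sigma\,\rete$ with $\pi_1(\sigma)\equiv\pi_2(\sigma)$ iff there are $k\ge 1$ and indices $i_1,\dots,i_k\in[1,n]$ with $to_\Sigma(\alpha_{i_1}\cdots\alpha_{i_k},1)\equiv to_\Sigma(\beta_{i_1}\cdots\beta_{i_k},2)$, which by (b) holds iff $\alpha_{i_1}\cdots\alpha_{i_k}=\beta_{i_1}\cdots\beta_{i_k}$, i.e.\ iff $i_1\ldots i_k$ solves the PCP instance; this is exactly the claimed equivalence. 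The argument is essentially bookkeeping, and the one point that needs care is the structural characterisation of the accepting paths — in particular that there is no shortcut from $q$ to $q_F$ and that the $e\,\ldots\,\rete$ frame is forced — together with keeping the two process projections apart, which is precisely why the construction tags each letter with its process number.
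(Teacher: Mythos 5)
Your proposal is correct and follows essentially the same route as the paper, which justifies the lemma by the same structural observation that every accepting path is an $e$-edge, followed by a nonempty sequence of blocks $to_\Sigma(\alpha_{i_j},1)\,to_\Sigma(\beta_{i_j},2)$, followed by the $\rete$-edge, so that the two process projections are exactly $to_\Sigma(\alpha_{i_1}\cdots\alpha_{i_k},1)$ and $to_\Sigma(\beta_{i_1}\cdots\beta_{i_k},2)$. Your write-up merely makes explicit the bookkeeping (homomorphism property of $to_\Sigma$ and the characterisation of $\equiv$) that the paper leaves implicit.
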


Another important property is that \traces in $L(\machinepcp{\alpha_1,
  \ldots, \alpha_n,\beta_1, \ldots, \beta_n})$ are end-to-end
quiescent.  If we define a FA $\calS$ such that $\trs(\calS)$ is the
language of all \traces of the form $\sigma \: e\: \rete$ such that
\trace $\sigma$ does not have the same projections at processes $1$
and $2$ then we will have that $\trs(\machinepcp{\alpha_1, \ldots,
  \alpha_n,\beta_1, \ldots, \beta_n}) \subseteq \trs(S)$ if and only
if there is no solution to this instance of the PCP.
The following shows how we can construct such a specification $\calS$.

$\calS$ has initial state $s_0$ and for all $a \in \Gamma$ there is a
cycle that has label $\pcpi{e}{a}{1} \:\pcpr{e}{a}{1}\:
\pcpi{e}{a}{2}\: \pcpr{e}{a}{2}$.  This cycle models the case where
the projections are identical.  We add the following to $\calS$ to
represent the ways in which a first difference in projections, on
processes $1$ and $2$, can occur.
\begin{enumerate}
\item For all $a, b \in \Gamma$ with $a \neq b$ there is a path from
  $s_0$ to state $s_1$ with label $\pcpi{e}{a}{1}\: \pcpr{e}{a}{1}\:
  \pcpi{e}{b}{2}\: \pcpr{e}{b}{2}$.  In $s_1$ there are separate cycles with labels
  $\pcpi{e}{a}{1}\: \pcpr{e}{a}{1}$ and $\pcpi{e}{a}{2}\:
  \pcpr{e}{a}{2}$ for all $a \in \Gamma$.

\item For all $a\in \Gamma$  there is a path from $s_0$ to state $s_2$
with label $\pcpi{e}{a}{1}\: \pcpr{e}{a}{1}$.
In $s_2$ there is a cycle with label
$\pcpi{e}{a}{1} \pcpr{e}{a}{1}$ for all $a \in \Gamma$.

\item For all $a\in \Gamma$  there is a path from $s_0$ to state $s_3$
with label $\pcpi{e}{a}{2}\: \pcpr{e}{a}{2}$.
In $s_3$ there is a cycle with label
$\pcpi{e}{a}{2}\: \pcpr{e}{a}{2}$ for all $a \in \Gamma$.
\end{enumerate}

State $s_1$ represents the case where after some common prefix we have
$\pcpi{e}{a}{1}\: \pcpr{e}{a}{1}$ at process $1$ and $\pcpi{e}{b}{2}\:
\pcpr{e}{b}{2}$ at process $2$ for some $a \neq b$.  States $s_2$ and $s_3$
model the cases where one projection is a proper prefix of the other
($s_2$ models the case where the projection on process $1$ is longer
and $s_3$ models the case where the projection on process $2$ is
longer).  We add a final state $s_F$ and paths from $s_1, s_2, s_3$ to
$s_F$ with label $e\, \rete$.  The following is immediate from the
construction.

\begin{lemma}\label{lemma:PCP_S}
\Trace $\sigma \in \Sigma^*$ is in $L(\calS)$ if and only if $\sigma = \sigma' \: e \: \rete$ for some $\sigma'$
such that $\pi_1(\sigma') \not \equiv \pi_2(\sigma')$.
\end{lemma}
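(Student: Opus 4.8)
The plan is to prove both directions by directly inspecting the shape of accepting paths in the automaton $\calS$ constructed above, which is why it is ``immediate from the construction.''

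For the forward direction I would first note that the unique final state $s_F$ is reached only through the transitions labelled $e\,\rete$ leaving $s_1$, $s_2$ and $s_3$, so every $\sigma \in L(\calS)$ factors as $\sigma = \sigma'\,e\,\rete$ where $\sigma'$ labels a path from $s_0$ to one of $s_1, s_2, s_3$; it then suffices to show $\pi_1(\sigma') \not\equiv \pi_2(\sigma')$ in each of the three cases. A path into $s_2$ is a block of the length-four ``common'' cycles at $s_0$ (each adding the same letter to process $1$ and process $2$), followed by the $s_0 \to s_2$ transition $\pcpi{e}{a}{1}\pcpr{e}{a}{1}$ and then some process-$1$ cycles at $s_2$; hence $\pi_1(\sigma')$ encodes a strictly longer $\Gamma$-word than $\pi_2(\sigma')$ and the two cannot be $\equiv$. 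The case of $s_3$ is symmetric with the roles of the processes exchanged. A path into $s_1$ is a block of common cycles at $s_0$, laying down a shared $\Gamma$-prefix $w$ on both processes, then the transition $\pcpi{e}{a}{1}\pcpr{e}{a}{1}\pcpi{e}{b}{2}\pcpr{e}{b}{2}$ with $a \neq b$, then independent process-$1$ and process-$2$ cycles at $s_1$; so $\pi_1(\sigma')$ encodes a word starting $w\,a\ldots$ and $\pi_2(\sigma')$ one starting $w\,b\ldots$, which disagree at position $|w|+1$, so again $\pi_1(\sigma') \not\equiv \pi_2(\sigma')$.

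For the converse I would take $\sigma'$ with $\pi_1(\sigma') \not\equiv \pi_2(\sigma')$, let $u$ and $v$ be the $\Gamma$-words encoded by $\pi_1(\sigma')$ and $\pi_2(\sigma')$, and split on why $u \neq v$. If $u$ and $v$ first disagree at a position $j$ with both letters present, then $w = u_1\cdots u_{j-1} = v_1\cdots v_{j-1}$ and I build a path: $j-1$ common cycles at $s_0$ to lay down $w$, the $s_0 \to s_1$ transition instantiated with $a = u_j$, $b = v_j$, then the independent cycles at $s_1$ to append the rest of $u$ on process $1$ and the rest of $v$ on process $2$ in any interleaving, and finally $e\,\rete$ into $s_F$. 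If instead $v$ is a proper prefix of $u$ I route through $s_2$ (common cycles at $s_0$ laying down $v$, the $s_0 \to s_2$ transition, process-$1$ cycles at $s_2$ for the remaining letters of $u$, then $e\,\rete$), and ``$u$ a proper prefix of $v$'' uses $s_3$ symmetrically; these three subcases are exhaustive. In each case the resulting trace ends in $e\,\rete$ and has the prescribed projections on processes $1$ and $2$, giving the required membership in $L(\calS)$ — up to the reorderings of process-$1$ with process-$2$ events that the independence relation $R$ of \qsc quotients out, which is all that is used downstream.

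The one point needing care, and the place I expect the (small) obstacle, is exactly this last remark: the cycles at $s_0$ fix the ``process $1$ then process $2$'' interleaving inside a common block, so $\calS$ does not literally generate every trace with the stated projections, only one representative per $R$-class. Stating this cleanly — that $\pi_1$ and $\pi_2$ are invariant under the swaps allowed by $R$, so matching the projections is precisely matching the trace up to $R$, and that \qsc only ever compares traces up to $R$ — is what promotes the case analysis to a complete proof.
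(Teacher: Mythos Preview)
Your forward direction is exactly right and matches what the paper has in mind. The paper offers no detailed proof (it simply declares the lemma ``immediate from the construction''), so your case analysis over the three targets $s_1,s_2,s_3$ is the natural way to spell it out.

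Where you deserve credit is the final paragraph: you have noticed a genuine imprecision in the paper. The converse direction of the lemma, read literally as a statement about $L(\calS)$, is \emph{false}. For instance, with $a\neq b$ the trace $\pcpi{e}{b}{2}\,\pcpr{e}{b}{2}\,\pcpi{e}{a}{1}\,\pcpr{e}{a}{1}\,e\,\rete$ has $\pi_1(\sigma')\not\equiv\pi_2(\sigma')$, but the first event forces the path into $s_3$, where no process-$1$ events are available; so this trace is not in $L(\calS)$. The cycles at $s_0$ and the $s_0\to s_1$ paths are rigidly ``process~$1$ before process~$2$'', exactly as you say.

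So your repair --- arguing instead that $\calS$ accepts at least one $R$-representative of every such $\sigma'$, and that $R$-equivalence preserves the projections $\pi_1,\pi_2$ --- does not ``promote'' your argument to a proof of the lemma as written; it proves the correct replacement statement, namely that $\trs(\calS)$ (rather than $L(\calS)$) is the set of traces $\sigma'\,e\,\rete$ with $\pi_1(\sigma')\not\equiv\pi_2(\sigma')$. That is precisely the property the paper announces it wants just before constructing $\calS$, and it is what the undecidability theorem actually uses. Your case split (first position of disagreement routes through $s_1$; one projection a proper prefix of the other routes through $s_2$ or $s_3$) cleanly constructs the required representative, and $R$-invariance of the projections closes the argument. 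In short: your proof is sound, and sharper than the paper's one-line dismissal; just be clear that you are proving the $\trs$-version, not the $L$-version.
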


We therefore obtain the following result.

\begin{theorem}
The correctness problem for \qsc is undecidable.
\end{theorem}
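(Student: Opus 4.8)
The plan is to package the construction and the two lemmas preceding the statement into a single computable many-one reduction from the PCP to the correctness problem for \qsc, and then invoke undecidability of the PCP \cite{post46}.

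First I would fix an arbitrary PCP instance $\alpha_1,\dots,\alpha_n,\beta_1,\dots,\beta_n$ over $\Gamma$, take $\calQ := \machinepcp{\alpha_1,\dots,\alpha_n,\beta_1,\dots,\beta_n}$ as the implementation and the automaton $\calS$ built above as the specification. I would record the routine facts that keep the reduction legitimate: (a) both automata are produced in time polynomial in the size of the PCP instance, so the map is computable; (b) $\calS$ is sequential and all \traces of $\calQ$ and $\calS$ are legal, since on each process every \trace is a concatenation of invoke-then-matching-response pairs, so \refasm{asm:spec-seq} and \refasm{asm:legal} hold; and (c) every \trace of $\calQ$ is end-to-end quiescent and the quiescent states of both automata are easily identified, so the $\delta$-extensions $\pdelta$ and $\sdelta$ from \refsec{sec:quiesc-cons} are well defined, and ``$\calQ$ is a correct implementation of $\calS$ under \qsc'' means precisely $L(\pdelta) \subseteq \trs(\sdelta)$.

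The core of the argument is the equivalence: $\calQ$ is \emph{not} a correct implementation of $\calS$ under \qsc iff the PCP instance has a solution. Since $\delta$ commutes with nothing and every \trace of $\calQ$ is a single end-to-end quiescent block, wrapped as $\delta\,e\,\sigma\,\rete\,\delta$ in $\pdelta$, \refprop{prop:sufficient} lets me rephrase allowance as: $e\,\sigma\,\rete$ is allowed by $\calS$ under \qsc iff some $\sigma'$ with $\sigma' \approx e\,\sigma\,\rete$ lies in $L(\calS)$. For the right-to-left direction, a PCP solution yields, by \reflem{lemma:PCP_P}, a \trace $e\,\sigma\,\rete \in L(\calQ)$ with $\pi_1(\sigma) \equiv \pi_2(\sigma)$; every $\sigma'$ with $\sigma' \approx e\,\sigma\,\rete$ has the same per-process projections as $e\,\sigma\,\rete$, so the two projections of $\sigma'$ still agree, and hence by \reflem{lemma:PCP_S} no such $\sigma'$ is a \trace of $\calS$, which shows $e\,\sigma\,\rete$ witnesses incorrectness. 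For the left-to-right direction I would prove the contrapositive: if the PCP has no solution then, by \reflem{lemma:PCP_P}, every \trace of $\calQ$ has the form $e\,\sigma\,\rete$ with $\pi_1(\sigma) \not\equiv \pi_2(\sigma)$, and I would use the explicit shape of $\calS$ --- the $s_0$-cycle covering the common prefix, the $s_0 \to s_1$ edge for the first mismatched letter together with the independent $s_1$-cycles, and the $s_0 \to s_2$, $s_0 \to s_3$ branches for the case where one projection is a proper prefix of the other --- to produce a $\sigma' \in L(\calS)$ with $\sigma' \approx e\,\sigma\,\rete$. Hence every \trace of $\calQ$ is allowed and $\calQ$ is correct. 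Undecidability of the correctness problem for \qsc then follows immediately from undecidability of the PCP.

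The main obstacle is the bridging equivalence, and within it its left-to-right half: verifying that whenever the two projections of $e\,\sigma\,\rete$ differ, $\calS$ genuinely accepts \emph{some} $\approx$-equivalent reordering. This requires the case split hard-wired into $\calS$ (a first position of disagreement versus one projection being a proper prefix of the other) together with a check that the free cycles at $s_1$ (and at $s_2$, $s_3$) generate exactly the interleavings needed to realise the per-process projections of $\sigma$. A secondary point needing care is the reduction from the $\delta$-extended formulation of correctness down to plain $\approx$-reorderings; this is where one uses that $\delta$ never commutes and that every \trace of $\calQ$ forms a single end-to-end quiescent block, so that only the $\delta$-free \traces of $\calS$ are ever relevant. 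Once these are settled, the theorem is a one-line corollary of \reflem{lemma:PCP_P}, \reflem{lemma:PCP_S}, \refprop{prop:sufficient} and the undecidability of the PCP.
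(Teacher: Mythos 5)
Your proposal is correct and follows essentially the same route as the paper: the reduction from the PCP via the implementation $\machinepcp{\alpha_1,\dots,\alpha_n,\beta_1,\dots,\beta_n}$ and the specification $\calS$, combined through Lemmas~\ref{lemma:PCP_P} and~\ref{lemma:PCP_S}, \refprop{prop:sufficient}, and the undecidability of the PCP. The paper states this as a one-line corollary of those two lemmas; you simply make explicit the bridging equivalence and the verification burden that the paper leaves implicit in the phrase ``immediate from the construction.''
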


\begin{proof}
This follows from Lemmas \ref{lemma:PCP_P} and \ref{lemma:PCP_S} and the PCP being undecidable.
\end{proof}

\subsection{Restricted \qsc} 
\label{sec:restricted-qsc-1}

Our results for restricted \qsc extends the constructions used in
\refsec{sec:upper-bound-unrestr}. Given FA $\calM =
(M,m_0,\Sigma,t,M_\dagger)$ and state $m \in M$, we let
$\trsi{q}{\calM}$ denote the set of \traces that are equivalent to
end-to-end quiescent \traces that label paths that start at $m$, i.e.,
\[
\trsi{q}{\calM} = \{ \sigma \in \cls(\prei{m}{\calM}) \mid \ete{\sigma}\}
\]

We now consider the case where there is a fixed bound $b$ on the
number of events that can occur in an end-to-end quiescent trace.  For
the unbounded case we could not use Algorithm \ref{algorithm1} since
this would require us to decide whether $\trsi{q_c}{\calQ} \subseteq
\trsi{S_c}{\calS}$ and we know that this is undecidable.  However, it
is straightforward to see that in the presence of bound $b$ the
conditions controlling the loop and If statement of Algorithm
\ref{algorithm1} involve reasoning about finite languages and are
decidable.  We can therefore apply Algorithm \ref{algorithm1} and will
now show that correctness is in PSPACE.

\begin{proposition}\label{QC_checkinclusion2a}
  Let us suppose that there is a limit $b$ on the length of end-to-end
  quiescent \traces in $\calQ$ and $\calS$.  It is possible to decide
  whether $\trsi{q_c}{\calQ} \subseteq \trsi{S_c}{\calS}$ in NP.
\end{proposition}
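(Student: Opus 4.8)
The plan is to reproduce, almost verbatim, the argument behind \refprop{QC_checkinclusion2}, with the independence relation $U$ replaced by $R$. The key observation is that under the bound $b$ every end-to-end quiescent trace of $\calQ$ or of $\calS$ has length at most $b$, and since $R$-rewriting preserves length, both $\trsi{q_c}{\calQ}$ and $\trsi{S_c}{\calS}$ consist entirely of traces of length $\le b$; here I also use the (obvious) set-indexed extension $\trsi{S_c}{\calS}=\bigcup_{s\in S_c}\trsi{s}{\calS}$. Consequently $\trsi{q_c}{\calQ}\not\subseteq\trsi{S_c}{\calS}$ holds if and only if there is a single trace $\sigma$ with $|\sigma|\le b$ such that $\sigma\in\trsi{q_c}{\calQ}$ and $\sigma\notin\trsi{S_c}{\calS}$. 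So it suffices to exhibit a non-deterministic Turing machine that guesses such a $\sigma$ and verifies both conditions in polynomial time (equivalently, we show that detecting failure of the inclusion — which is precisely the condition of the {\bf if} statement of \refalg{algorithm1} — is in NP).

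The verification steps are: first check $\ete{\sigma}$, a linear scan for pending invocations and for non-empty proper quiescent prefixes; then check $\sigma\in\trsi{q_c}{\calQ}$, which is exactly the restricted \qsc membership problem of \refthm{theorem:poly_memb_qsc1} applied to the automaton $\prei{q_c}{\calQ}$ (with $q_c$ as initial state) rather than $\calQ$: build $\machineb{\sigma}$, whose language is $[\sigma]_R$ and which — because $|\sigma|\le b$ for the fixed constant $b$ — has at most $2^b$ states, and test $L(\machineb{\sigma})\cap L(\prei{q_c}{\calQ})\neq\emptyset$ on a product automaton of size $O(2^{b}|Q|)$, which is polynomial. (As in \refsec{sec:restricted-qsc}, we may assume the events of $\sigma$ are distinct by labelling, so that $L(\machineb{\sigma})=[\sigma]_R$.) Finally check $\sigma\notin\trsi{S_c}{\calS}$: since $\trsi{S_c}{\calS}=\bigcup_{s\in S_c}\trsi{s}{\calS}$ and $|S_c|\le|S|$, this is the conjunction over the $\le|S|$ states $s\in S_c$ of the polynomial-time tests $\sigma\notin\trsi{s}{\calS}$, each carried out exactly as above with $\prei{s}{\calS}$. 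The machine reports that the inclusion fails precisely when some run finds a $\sigma$ passing all of these tests; all steps run in polynomial time and the machine guesses only polynomially many bits, so the procedure is in NP.

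The one point that genuinely needs care — and the reason the unbounded correctness problem of \refsec{sec:unrestricted-qsc-1} is undecidable — is that $\trsi{q_c}{\calQ}$ and $\trsi{S_c}{\calS}$ are rational trace languages over the \emph{non-transitive} relation $R$, so by \reflem{lem:rat-tr-lang} their inclusion is not decidable in general. The bound $b$ is exactly what rescues the argument: it confines every relevant trace, and the whole class $[\sigma]_R$, to length $\le b$, so that the problem collapses to finitely many bounded-length words and $\machineb{\sigma}$ has constant size. I would emphasise that this proposition is simply the $\trs$-analogue of \refprop{QC_checkinclusion2}, with the Parikh-image containment test (forced there because $U$ collapses classes to multisets) replaced by the "permutation preserving per-process order" test embodied in $\machineb{\cdot}$; nothing else in the reasoning changes, which is why only a proof sketch of this density is needed.
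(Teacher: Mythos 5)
Your proposal is correct and follows the same route as the paper's proof: nondeterministically guess a witness trace $\sigma$ of length at most $b$, check that it is end-to-end quiescent, and decide $\sigma\in\trsi{q_c}{\calQ}$ and $\sigma\in\trsi{S_c}{\calS}$ in polynomial time via the restricted \qsc membership machinery of \refthm{theorem:poly_memb_qsc1}, reporting failure of the inclusion exactly when $\sigma$ lies in the first set but not the second. The only difference is that you spell out the $\machineb{\cdot}$ product-automaton construction and the union over $s\in S_c$, which the paper leaves implicit by citing \refthm{theorem:poly_memb_qsc1}.
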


\begin{proof}
  A non-deterministic Turing Machine can solve this problem as
  follows.  First, it guesses a \trace $\sigma$ whose length is at
  most the upper limit. 
  The Turing
  Machine then checks that $\sigma$ is end-to-end quiescent and
  whether $\sigma \in \trsi{q_c}{\calQ}$ and $\sigma \in
  \trsi{S_c}{\calS}$; from Theorem \ref{theorem:poly_memb_qsc1} we
  know that these checks can be performed in time that is polynomial
  in terms of the sizes of $\calQ$ and $\calS$.  It then returns
  failure if and only if $\sigma \in \trsi{q_c}{\calQ}$ and $\sigma
  \not \in \trsi{S_c}{\calS}$.  Thus, a non-deterministic Turing
  Machine can solve this problem in polynomial time and so the problem
  is in NP.
\end{proof}

\begin{proposition}\label{QC_checkmove2a}
  Let us suppose that there is a limit $b$ on the length of end-to-end
  quiescent \traces in $\calQ$ and $\calS$.  Given states $q_{c-1}$
  and $q_c$ of $\calQ$ and sets $S_{c-1}$ and $S_c$ of states of
  $\calS$, it is possible to decide whether there exists \trace
  $\sigma_c$ such that $\esti{q_{c-1}}{\calQ}{\sigma_c}{q_{c}}$ and
  $\estii{S_{c-1}}{\calS}{\sigma_c}{S_c}$ in NP.
\end{proposition}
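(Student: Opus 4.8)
The plan is to follow the same strategy as in the proof of Proposition~\ref{QC_checkmove2}, but working with the \qsc machinery (the automata $\machineb{\cdot}$ and the independence relation $R$) instead of the \qc machinery, and using the bound $b$ to keep every sub-computation polynomial. Since by Proposition~\ref{prop:NT}, read for \qsc, the witness $\sigma_c$ is end-to-end quiescent, and by hypothesis such a \trace has length at most $b$, a non-deterministic Turing Machine can start by guessing a \trace $\sigma_c$ over $\Sigma$ with $|\sigma_c| \le b$ and checking in polynomial time that $\ete{\sigma_c}$ holds (assuming, as in Section~\ref{sec:restricted-qsc}, that the events of $\sigma_c$ are distinct).

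The two remaining conditions are then reduced to emptiness tests for intersections of automata of bounded size. For condition~(i), note that $L(\machineb{\sigma_c}) = [\sigma_c]_R$, so $\esti{q_{c-1}}{\calQ}{\sigma_c}{q_{c}}$ holds iff $L(\machineb{\sigma_c}) \cap L(\preii{q_{c-1}}{\calQ}{q_c}) \neq \emptyset$; because $|\sigma_c| \le b$, the automaton $\machineb{\sigma_c}$ has at most $2^b$ states, so this is a polynomial-time check (cf.\ Theorem~\ref{theorem:poly_memb_qsc1}). For condition~(ii), I would rewrite $\estii{S_{c-1}}{\calS}{\sigma_c}{S_c}$, exactly as done for \qc in the proof of Proposition~\ref{QC_checkmove2}, as the conjunction of: (ii.a) there is some $s \in S_c$ with $L(\machineb{\sigma_c}) \cap \bigcup_{s' \in S_{c-1}} L(\preii{s'}{\calS}{s}) \neq \emptyset$; and (ii.b) for every $s \in S \setminus S_c$, $L(\machineb{\sigma_c}) \cap \bigcup_{s' \in S_{c-1}} L(\preii{s'}{\calS}{s}) = \emptyset$. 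Each of these at most $|S|$ individual checks is again an emptiness test for the intersection of $\machineb{\sigma_c}$ with a polynomial-size automaton, hence solvable in polynomial time. The Turing Machine accepts iff the guessed $\sigma_c$ passes all of (i), (ii.a) and (ii.b); since it guesses a string of length at most $b$ and then runs a deterministic polynomial-time verifier, the problem is in NP.

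The main obstacle is the universally quantified condition~(ii): unlike (i), it is not enough to exhibit a single suitable path in $\calS$ --- we must certify that $\sigma_c$ cannot drive $\calS$ from $S_{c-1}$ to any state outside $S_c$. The way I would handle this is precisely the splitting above, which turns the ``only reaches $S_c$'' requirement into polynomially many checks that the $R$-reorderings of $\sigma_c$ are \emph{not} accepted by $\preii{s'}{\calS}{s}$ for $s \notin S_c$; each such check reduces, via the bounded \qsc membership argument (Theorem~\ref{theorem:poly_memb_qsc1}), to intersecting $\machineb{\sigma_c}$ with $\preii{s'}{\calS}{s}$ and testing emptiness, which is polynomial. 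It is exactly here that the bound $b$ is essential: without it, $\machineb{\sigma_c}$ would be exponential in $|\sigma_c|$ and these checks would no longer be polynomial, mirroring the reason the unrestricted case (Proposition~\ref{QC_checkmove1}) only landed in EXPSPACE.
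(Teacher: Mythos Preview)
Your proposal is correct and follows essentially the same approach as the paper's own proof: guess an end-to-end quiescent \trace\ of length at most $b$, then verify the conditions on $\calQ$ and on $\calS$ via polynomial-time membership checks, appealing to Theorem~\ref{theorem:poly_memb_qsc1}. The paper phrases the $\calS$-side check as $\sigma \in \trsi{S_{c}}{\calS}$ and $\sigma \notin \trsi{S \setminus S_c}{\calS}$, while you unfold this into the automata intersections with $\machineb{\sigma_c}$; these are the same test, and your explicit splitting of (ii) into polynomially many emptiness tests is exactly how Theorem~\ref{theorem:poly_memb_qsc1} is proved in the first place.
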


\begin{proof}
  A non-deterministic Turing Machine can solve this as follows.  It
  guesses a \trace $\sigma$ whose length is at most $b$, requiring
  constant, and checks that $\sigma$ is end-to-end quiescent.  It then
  determines whether $\sigma \in \trsi{q_c}{\calQ}$ and whether
  $\sigma \in \trsi{S_c}{\calS} \setminus \trsi{S \setminus
    S_c}{\calS}$.  The non-deterministic Turing Machine returns True
  if and only if it finds that $\sigma \in \trsi{q_c}{\calQ}$, $\sigma
  \in \trsi{S_c}{\calS}$, and $\sigma \not \in \trsi{S \setminus
    S_c}{\calS}$.  From Theorem \ref{theorem:poly_memb_qsc1} we know
  that these steps can be performed in time that is polynomial in the
  sizes of $\calQ$ and $\calS$ and so in polynomial time.  Thus, a
  non-deterministic Turing Machine can solve this problem in
  polynomial time and so the problem is in NP.
%
%
\end{proof}

\begin{theorem}
The correctness problem for restricted \qc is in PSPACE.
\end{theorem}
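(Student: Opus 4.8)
The plan is to mirror the argument used for the correctness problem for bounded \qc, replacing the independence relation $U$ throughout by $R$ (so $\cl$ becomes $\trs$, $\trdi{\cdot}{\cdot}$ becomes $\trsi{\cdot}{\cdot}$, and $\estii{}{}{}{}$ is read with $[\cdot]_R$ in place of $[\cdot]_U$). First I would record the two structural facts behind Algorithm~\ref{algorithm1} in their \qsc form: (i) $\calQ$ is a correct implementation of $\calS$ under \qsc iff $\trs(\pdelta) \subseteq \trs(\sdelta)$, the \qsc analogue of \reflem{lem:correctness} (immediate from \refprop{prop:sufficient}); and (ii) the \qsc counterpart of \refprop{prop:NT}: if $\calQ$ is not correct under \qsc, then a shortest witnessing quiescent \trace $\sigma = \sigma_1 \ldots \sigma_{k+1}$ (with $\ete{\sigma_i}$ for all $i$) yields, by tracking the state of $\calQ$ together with the $R$-determinised set of states of $\calS$, pairs $(q_i,S_i)\in Q\times 2^S$ that must be pairwise distinct, whence $k \le |Q|\cdot 2^{|S|}$. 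Given (i) and (ii), a non-deterministic Turing Machine running Algorithm~\ref{algorithm1} with the $\trs$/$\trsi{}{}$ substitutions returns Fail iff $\calQ$ is not a correct implementation of $\calS$ under \qsc, exactly as in the bounded \qc case.

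Second I would analyse the resources. The only state carried between iterations of the loop is the triple $(q_c, S_c, c)$: $q_c$ needs $O(\log|Q|)$ bits, $S_c \subseteq S$ needs $O(|S|)$ bits, and $c$, bounded by $|Q|\cdot 2^{|S|}$, needs $O(\log|Q| + |S|)$ bits, all polynomial. The two tests appearing in the {\bf if} statements, namely $\trsi{q_c}{\calQ} \subseteq \trsi{S_c}{\calS}$ and the existence of a suitable $\sigma_c$, are decidable in NP by \refprop{QC_checkinclusion2a} and \refprop{QC_checkmove2a} under the bound $b$, and NP $\subseteq$ PSPACE, so each such test runs in polynomial work space that can be reclaimed before the next iteration. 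Hence the whole procedure is a non-deterministic computation using polynomial space, i.e. the problem is in NPSPACE, and by Savitch's theorem \cite{Savitch} NPSPACE $=$ PSPACE, giving the claim.

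The step I expect to be the main obstacle is establishing fact (ii), the \qsc analogue of \refprop{prop:NT}, because $R$ is \emph{not} transitive and one must check that this does not spoil the pumping argument. That argument needs two ingredients: that determinising $\calS$ for the ``$\estii{}{}{}{}$'' relation has at most $2^{|S|}$ reachable sets (this holds, since $\estii{S_{c-1}}{\calS}{\sigma_c}{S_c}$ is a universally quantified condition over paths and does not depend on which independence relation is used), and that excising a segment $\sigma_{i+1}\ldots\sigma_j$ between a repeated pair $(q_i,S_i)=(q_j,S_j)$ still yields a \trace in $\trs(\pdelta)\setminus\trs(\sdelta)$. The latter rests on the fact that $\delta$ commutes with nothing under $R$ (just as under $U$), so rewrites under $R$ never cross a quiescent point and $\trs$ of a concatenation of end-to-end quiescent \traces is determined block by block; once this block decomposition is in place the minimality argument is identical to the one in \refprop{prop:NT}, and everything else is the routine bookkeeping above.
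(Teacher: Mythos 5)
Your proof is correct and follows essentially the same route as the paper: run Algorithm~\ref{algorithm1} with the $R$-based analogues, use Propositions~\ref{QC_checkinclusion2a} and \ref{QC_checkmove2a} to decide the two {\bf if} conditions in NP $\subseteq$ PSPACE, and observe that only $(q_c,S_c,c)$ need be stored between iterations, giving NPSPACE $=$ PSPACE. The paper's own proof is just the two-sentence version of this; the extra work you do (the \qsc analogue of \refprop{prop:NT}, the block decomposition under $R$, and the explicit appeal to Savitch) is exactly the material the paper leaves implicit when it asserts that Algorithm~\ref{algorithm1} can be reapplied in the bounded \qsc setting.
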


\begin{proof}
From Propositions \ref{QC_checkinclusion2a} and \ref{QC_checkmove2a}
we know that the two conditions in Algorithm \ref{algorithm1} can
be decided in NP.
Thus,
a non-deterministic Turing Machine
can apply Algorithm \ref{algorithm1} using polynomial space.
\end{proof}

%
%
%
%

\newcommand\Tstrut{\rule{0pt}{2.6ex}}         
\newcommand\Bstrut{\rule[-0.9ex]{0pt}{0pt}}   
\section{Conclusions}
\label{sec:conclusions}

Concurrent objects (such as the queue example in
\refsec{sec:quiesc-cons-queue}) form an important class of objects,
managing thread synchronisation on behalf of a programmer.  The safety
properties that a concurrent object satisfies can be understood in
terms of the correctness conditions such as sequential consistency,
linearizability and quiescent consistency \cite{HeSh08}. 

Decidability and complexity for checking membership and correctness
for these conditions have been widely studied. These generally extend
Alur et al.'s methods \cite{AlurMP00}, which in turn is based on the
notions of independence from Mazurkiewicz Trace Theory. A summary of
results from the literature is given in \reftab{tab:summary}. The
bounded and unbounded versions of linearizability that have been
studied refer to the number of processes that a system is assumed to have
--- the unbounded version does not assume an upper limit on the number
of processes. Our results on \qc and \qsc adds to this existing body
of work.

\begin{table}[h]
  \begin{minipage}[t]{\linewidth}\small
    \begin{tabular}[t]{p{4cm}|l|l|p{2.6cm}|p{2.4cm}}
      & \multicolumn{2}{c|}{\bf Membership} &
      \multicolumn{2}{c}{\bf Correctness} \\
      \cline{2-5}
      \textbf{Correctness condition} & Unrestricted & Restricted      &
      Unrestricted & Restricted
      \\
      \hline
      Sequential consistency & NP-complete \cite{GibbonsK92} & --- & 
      Undecidable \cite{AlurMP00} & ---
      \Tstrut\\[1mm]
      Linearizability      & NP-complete${}^{(a)}$\cite{GibbonsK92} &
      PTIME${}^{(b)} $\cite{GibbonsK92} & EXPSPACE${}^{(c)}$
      \cite{AlurMP00}, & EXPSPACE-
      \\
      & & & Undecidable${}^{(b)}$~\cite{BouajjaniEEH13} &
      complete${}^{(d)}$ \cite{BouajjaniEEH13}
      \\[1mm]
      Serializability & ---         & ---      & PSPACE
      \cite{AlurMP00} & --- 
      \\[1mm]
      Conflict serializability & ---         & ---      & EXPSPACE-complete
      \cite{BouajjaniEEH13} & --- 
      \\[5mm]
      \Qc${}^\star$ & NP-complete & PTIME ${}^{(b)}$ & coNP-hard, 
      & PSPACE ${}^{(b)}$ \\
      &             &       & EXPSPACE
      \\[1mm]
      Quiescent sequential & NP-complete & PTIME ${}^{(b) \text{ or } (e)}$ &
      Undecidable & PSPACE ${}^{(b)}$\\
      consistency ${}^\star$ & & & 
    \end{tabular}
    \footnotetext{${}^{(a)}$ Finite number of processes}
    \footnotetext{${}^{(b)}$ Predetermined upper limit on number of processes}
    \footnotetext{${}^{(c)}$ Potentially infinite number of processes}

    \footnotetext{${}^{(d)}$ Implementations with fixed linearization
      points, but potentially infinite number of processes}

    \footnotetext{${}^{(e)}$ Upper limit on number of events between
      two quiescent states}

    \footnotetext{${}^{\star}$ Our results}

    \caption{Summary of decidability and complexity results}
    \label{tab:summary}
    
  \end{minipage}
\end{table}

The notion of \qc we have considered is based on the definition by
Derrick et al.\ \cite{DerrickDSTTW14}, which is a formalisation of the
definition by Shavit \cite{Shavit11}. This definition allows operation
calls by the same process to be reordered, i.e., sequential
consistency is not necessarily preserved. However, for concurrent
objects, sequential consistency is known to be necessary for
observational refinement \cite{Filipovic-LinvsRef2010}, which in turn
guarantees substitutability on behalf of a programmer. Therefore, we
also study a stronger version \qsc that disallows commutations of
events corresponding to the same process.



There are further variations of \qc in the literature. Jagadeesan and
Riely have developed a quantitative version of \qc \cite{JagadeesanR14}
that only allows reordering if there is adequate contention in the
system; here adequate contention is judged in terms of the number of
open method calls in the system. It is straightforward to extend the approach used in this paper
to show that the membership problem for this quantitative version is NP-complete,
but decidability of correctness is not yet known. Versions of \qc
su ited to relaxed-memory architectures have also been developed
\cite{SmithDD14,DongolDGS15}, where the notion of a quiescent state
incorporates pending write operations stored in local
buffers. Consideration of decidability and complexity for membership
and correctness of these different variations is a task for future
work. In particular, Jagadeesan and Riely's condition forms forms a
class of quantitative correctness conditions, which includes
quantitative relaxations of linearizability
\cite{AfekKY10,HenzingerKPSS13} and sequential consistency
\cite{Sezgin15}.

Bouajjani et al.\ have developed characterisations of algorithm
designs that enable reduction of the linearizability problem to a
(simpler) state reachability problem \cite{BouajjaniEEH15}. Other work
\cite{BouajjaniEEH15-POPL} has considered (under) approximations of
history inclusion with the aim of solving the \emph{observational
  refinement} problem for concurrent objects
\cite{HeHS86,Filipovic-LinvsRef2010} directly. Linking \qc and \qsc to
the state reachability problem and under approximations for 
observational refinement are both topics for future work.

\paragraph{Acknowledgements.} We thank Ahmed Bouajjani, Constatin
Emea and Gustavo Petri for helpful discussions.

\def\bibsection{\section*{References}}





\bibliographystyle{plain}
\bibliography{correctness,references}

\end{document}